\newtheorem{theorem}{Theorem}
\newtheorem{corr}{Corollary}
\newtheorem{defn}{Definition}
\newtheorem{lemma}{Lemma}
\newtheorem{fact}{Fact}
\DeclareMathOperator{\samp}{sc}
\DeclareMathOperator{\scz}{\samp_{\varepsilon, \delta}}
\DeclareMathOperator{\err}{err}
\DeclareMathOperator{\diam}{diam}
\DeclareMathOperator{\conv}{conv}
\DeclareMathOperator{\sign}{sign}
\DeclareMathOperator{\vol}{vol}
\newcommand{\junk}[1]{}
\newcommand{\epz}{\varepsilon}
\newcommand{\alg}{\mathcal{A}}
\newcommand{\uni}{\mathcal{U}}
\newcommand{\db}{\mathcal{D}}
\newcommand{\avgdb}{\overline{\db}}
\newcommand{\row}{e}
\newcommand{\quer}{\mathcal{Q}}
\renewcommand{\Pr}{\mathbb{P}}
\newcommand{\E}{\mathbb{E}}
\newcommand{\maps}{\colon}
\newcommand{\R}{\mathbb{R}}
\newcommand{\covn}{\mathcal{N}}
\newcounter{note}[section]
\renewcommand{\thenote}{\thesection.\arabic{note}}
\newcommand{\akis}[1]{\refstepcounter{note}$\ll${\bf Akis~\thenote:} {\sf #1}$\gg$\marginpar{\tiny\bf AK~\thenote}}
\title{Lower Bounds for Differential Privacy from Gaussian Width}
\author[1]{Assimakis Kattis \thanks{\tt{Email}:kattis@cs.toronto.edu}}
\author[2]{Aleksandar Nikolov \thanks{\tt{Email}:anikolov@cs.toronto.edu}}
\affil[1,2]{Department of Computer Science, University of Toronto.}
\date{}
\begin{document}
\maketitle

\begin{abstract}
 We study the optimal sample complexity of a given workload of linear queries under the constraints of differential privacy. The sample complexity of a query answering mechanism under error parameter $\alpha$ is the smallest $n$ such that the mechanism answers the workload with error at most $\alpha$ on any database of size $n$. Following a line of research started by Hardt and Talwar [STOC 2010],  we analyze sample complexity using the tools of asymptotic convex geometry. We study the sensitivity polytope, a natural convex body associated with a query workload that quantifies how query answers can change between neighboring databases. This is the information that, roughly speaking, is protected by a differentially private algorithm, and, for this reason, we expect that a ``bigger'' sensitivity polytope implies larger sample complexity. Our results identify the \emph{mean Gaussian width} as an appropriate measure of the size of the polytope, and show sample complexity lower bounds in terms of this quantity. Our lower bounds completely characterize the workloads for which the Gaussian noise mechanism is optimal up to constants as those having asymptotically maximal Gaussian width. 

Our techniques also yield an alternative proof of Pisier's Volume Number Theorem which also suggests an approach to improving the parameters of the theorem. 
\end{abstract}

\section{Introduction}

The main goal of private data analysis is to estimate aggregate
statistics while preserving individual privacy
guarantees. Intuitively, we expect that, for statistics that do not
depend too strongly on any particular individual, a sufficiently large
database allows computing an estimate that is both accurate and
private. A natural question then is to characterize the \emph{sample
  complexity} under privacy constraints: the smallest database size
for which we can privately estimate the answers to a given collection of queries
within some allowable error tolerance. Moreover, it is desirable to identify
algorithms that are simple, efficient, and have close to the best
possible sample complexity. In this work, we study these questions for
collections of \emph{linear queries} under the constraints of
\emph{approximate differential privacy}.

We model a \emph{database} $\db$ of \emph{size} $n$ as a multiset of
$n$ elements (counted with repetition) from an arbitrary finite
universe $\uni$. Each element of the database corresponds to the data
of a single individual. To define a privacy-preserving computation on
$\db$, we use the strong notion of \emph{differential privacy}.
Informally, an algorithm is {differentially private} if it has almost
identical behavior on any two databases $\db$ and $\db'$ that differ
in the data of a single individual. To capture this concept formally,
let us define two databases to be \emph{neighboring} if they have
symmetric difference of size at most $1$ (counted with
multiplicity). Then differential privacy is defined as follows:

\begin{defn}[\cite{DMNS}]
  A randomized algorithm $\alg$ that takes as input a database and
  outputs a random element from the set $Y$ satisfies \emph{$(\epz,
  \delta)$-differential privacy} if for all neighboring databases $\db,
  \db^{'}$ and all measurable $S \subseteq Y$ we have that: 
  \[
  \Pr[\alg(\db) \in S] \leq e^\epz
  \Pr[\alg(\db^{'}) \in S] + \delta,
  \]
  where probabilities are taken with respect to the randomness of $\alg$.
\end{defn}

One of the most basic primitives in private data analysis, and data
analysis in general, are \emph{counting queries} and, slightly more
generally, \emph{linear queries}. While interesting and natural
in themselves, they are also quite powerful: any statistical query (SQ)
learning algorithm can be implemented using noisy counting queries as a
black box~\cite{Kearns-SQ}. In our setting, we specify a linear query by a function
$q\maps\uni \to [0,1]$ (given by its truth table).  Slightly abusing
notation, we define the value of the query as $q(\db) =
\frac{1}{n}\sum_{\row \in \db}{q(\row)}$, where the elements of $\db$
are counted with multiplicity. For example, when $q\maps\uni\to
\{0,1\}$, we can think of $q$ as a property defined on $\uni$ and
$q(\db)$ as the fraction of elements of $\db$ that satisfy the
property: this is a \emph{counting query}. We call a set $\quer$ of
linear queries a \emph{workload} and an algorithm that answers a query
workload a \emph{mechanism}. We denote by $\quer(\db) = (q(\db))_{q
  \in \quer}$ the vector of answers to the queries in
$\quer$. Throughout the paper, we will use the letter $m$ for the size
of a workload $\quer$.

Starting from the work of Dinur and Nissim~\cite{DN}, it is known that
we cannot hope to answer too many linear queries too accurately while
preserving even a very weak notion of privacy. For this
reason, we must allow our private mechanisms to make some
error.  We
focus on \emph{average error} (in an $L_2$ sense). We define the
average error of an algorithm $\alg$ on a query workload $\quer$ and
databases of size at most $n$ as:
\begin{align*}
\err(\quer, \alg, n) &= \max_\db \left(\E \sum_{q \in
    \quer}{\frac{(\alg(\db)_q - q(D))^2}{|\quer|} }\right)^{1/2}
= \max_\db \left( \E\frac{1}{m}\|\alg(\db) - \quer(\db)\|_2^2\right)^{1/2},
\end{align*}
where the maximum is over all databases $\db$ of size at most $n$,
$\alg(\db)_q$ is the answer to query $q$ given by the algorithm $\alg$
on input $\db$, and expectations are taken with respect to the random
choices of $\alg$. This is a natural notion of error that also works particularly well with the geometric tools that we use. \junk{An
interesting direction for future research is to adapt our results for
worst-case error.}

In this work we study \emph{sample complexity}: the smallest database
size which allows us to answer a given query workload with error at
most $\alpha$.  The sample complexity of an algorithm $\alg$ with
error $\alpha$ is defined as: 
\[
\samp(\quer, \alg, \alpha) = \min \{ n: \err(\quer, \alg, n) \leq \alpha \}.
\] 
The sample complexity of answering the linear queries $\quer$ with
error $\alpha$ under $(\epz, \delta)$-differential privacy is defined
by: 
\[
\scz(\quer, \alpha) = \inf \{\samp(\quer, \alg, \alpha) : \alg
\mbox{ is } (\epsilon, \delta)\mbox{-differentially private} \}.
\]
The two main questions we are interested in are:
\begin{enumerate}
\item Can we characterize $\scz(\quer, \alpha)$ in terms of a natural
  property of the workload $\quer$?
\item Can we identify conditions under which simple and efficient $(\epz,
  \delta)$-differentially private mechanisms have nearly optimal
  sample complexity?
\end{enumerate}
We make progress on both questions. We identify a geometrically
defined property of the workload that gives lower bounds on the sample
complexity. The lower bounds also \emph{characterize} when
one of the simplest differentially private mechanisms, the Gaussian
noise mechanism, has nearly optimal sample complexity in the regime of
constant $\alpha$.

Before we can state our results, we need to
define a natural geometric object associated with a workload
of linear queries. This object has been important in applying
geometric techniques to differential
privacy~\cite{10vollb,12vollb,NTZ,smalldb}. 
\begin{defn}
  The \emph{sensitivity polytope} $K$ of a workload $\quer$ of $m$
  linear queries is equal to $K = \conv\{\pm \quer(\db): \db \text{ is
    a database of size } 1\}$.
\end{defn}
From the above definition, we see that $K$ is a symmetric (i.e. $K = -K$) convex polytope in $\R^m$. The
importance of $K$ lies in the fact that it captures how query answers
can change between neighboring databases: for any two neighboring
databases $\db$ and $\db'$ of size $n$ and $n'$ respectively,
$n\quer(\db) - n'\quer(\db') \in K$. This is exactly the information
that a differentially private algorithm is supposed to
hide. Intuitively, we expect that the larger $K$ is, the larger
$\scz(K,\alpha)$ should be. 

We give evidence for the above intuition, and propose the width of $K$
in a random direction as a measure of its ``size''. Let $h_K$ be the
support function of $K$: $h_K(y) = \max_{x \in K}{\langle x,
  y\rangle}$. For a unit vector $y$, $h_K(y) + h_K(-y)$ is the width
of $K$ in the direction of $y$; for arbitrary $y$, $h_K(ty)$ scales
linearly with $t$ (and is, in fact, a norm). We define the
$\ell^*$-norm of $K$, also known as its \emph{Gaussian mean width}, as
$\ell^*(K) = \E{[h_K(g)]}$, where $g$ is a standard Gaussian random
vector in $\R^m$. The following theorem captures our main result.
\begin{theorem}\label{thm:main}
  Let $\quer$ be a workload of $m$ linear queries, and let $K$ be its
  sensitivity polytope. The following holds for all  $\epz
  = O(1)$, $2^{-\Omega(n)} \leq \delta \leq 1/n^{1 +\Omega(1)}$, and
  any $\alpha \leq \frac{\ell^*(K)}{Cm(\log 2m)^2}$, where $C$
  is an absolute constant, and $\sigma(\epz,
\delta) = (0.5\sqrt{\epz} + \sqrt{2\log{(1/\delta)}})/\epz$:
  \begin{align*}
    \scz(\quer,\alpha) &= O\left(\min\Bigl\{\frac{\sigma(\epz,
        \delta)\ell^*(K)}{\sqrt{m}\alpha^2}, \frac{\sigma(\epz,\delta)\sqrt{m}}{\alpha}\Bigr\}\right);\\
    \scz(\quer,\alpha) &= \Omega\left(\frac{\sigma(\epz,\delta)\ell^*(K)^2}{m^{3/2}(\log  2m)^4\alpha} \right).
  \end{align*}
  The upper bound on sample complexity is achieved by a
  mechanism running in time polynomial in $m$, $n$, and
  $|\uni|$. Moreover, if $\ell^*(K) = \Omega(m)$, then $\scz(\quer,
  \alpha) =
  \Theta\Bigl(\frac{\sigma(\epz,\delta)\sqrt{m}}{\alpha}\Bigr)$ for
  any $\alpha \le 1/C$, where $C$ is an absolute constant. 
\end{theorem}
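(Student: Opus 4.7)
The theorem decomposes into an upper bound via explicit mechanisms, a geometric lower bound, and a matching statement when $\ell^*(K)=\Omega(m)$; I address each in turn.

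\emph{Upper bound.} I take the minimum of two mechanisms. First, the standard Gaussian mechanism: since $nq(\db)$ has $L_2$-sensitivity at most $\sqrt m$ (each neighboring change perturbs the answer vector by a point in $K\subseteq[-1,1]^m$), adding $N(0,\sigma(\epz,\delta)^2 m\,I)$ noise gives $(\epz,\delta)$-DP and normalized $L_2$-error $O(\sigma(\epz,\delta)\sqrt m/n)$, yielding the term $\sigma(\epz,\delta)\sqrt m/\alpha$. Second, the \emph{projection mechanism}: project the Gaussian-noisy answer onto $nK$, exploiting $nq(\db)\in nK$. By first-order optimality of the projection and the symmetry $K=-K$, the squared error is bounded by $2n\,h_K(g)$ for the noise $g$, whose expectation equals $2n\sigma(\epz,\delta)\sqrt m\,\ell^*(K)$ since $\ell^*$ is $1$-homogeneous. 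Normalizing by $n^2 m$ and setting the resulting bound to $\alpha^2$ gives $n=O(\sigma(\epz,\delta)\ell^*(K)/(\sqrt m\,\alpha^2))$; projection onto $nK$ reduces to a convex program of size polynomial in $|\uni|$, $m$, and $n$.

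\emph{Lower bound.} My plan is to combine a volume/discrepancy-based DP lower bound, in the spirit of Hardt--Talwar and its refinements, with an asymptotic-geometric inequality relating volume to Gaussian width. The volume-based step shows that any $(\epz,\delta)$-DP mechanism with small average $L_2$ error must, on some database, err by a $\vol(K')^{1/d}$-type quantity, where $K'$ is a suitably chosen section or subbody of $K$; this extracts a packing of databases whose query-answer vectors are well-separated. Pisier's Volume Number Theorem (or an $MM^*$-style estimate) then lower bounds $\vol(K')^{1/d}$ in terms of $\ell^*(K)$ up to polylogarithmic losses. Chaining the two inequalities produces the claimed $\ell^*(K)^2/m^{3/2}$ dependence on Gaussian width, and the $(\log 2m)^4$ overhead tracks the polylog slack in the Pisier-type conversion and in the discretization needed to pick $K'$ adapted to the worst-case database.

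\emph{Tight regime and main obstacle.} When $\ell^*(K)=\Omega(m)$ and $\alpha=O(1)$, the upper-bound minimum becomes the Gaussian term $O(\sigma(\epz,\delta)\sqrt m/\alpha)$, and $\ell^*(K)$ is as large as geometrically possible since $K\subseteq[-1,1]^m$. To close the polylog gap in the lower bound and obtain a clean $\Theta$, I would invoke the combinatorial fingerprinting-codes lower bound of Bun--Ullman--Vadhan directly: the condition $\ell^*(K)\asymp m$ forces $K$ to contain, up to constants, the image of $\Omega(m)$ essentially independent $\pm 1$ queries, on which that bound gives $\Omega(\sigma(\epz,\delta)\sqrt m/\alpha)$. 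The main obstacle throughout is the tight conversion from the Gaussian width to a DP hardness witness: volume-based DP lower bounds are by now standard, but threading them through a Pisier-type estimate with only polylog slack (and, as the paper hints, reusing this argument to re-derive Pisier's theorem with potentially improved parameters) is the technical heart of the proof and the place where the main effort must be concentrated.
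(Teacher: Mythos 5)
The upper-bound sketch is roughly consistent with the paper, which simply cites the Gaussian mechanism and the projection mechanism of Nikolov--Talwar--Zhang. However, the general lower-bound strategy you propose is the wrong one, and this gap is fatal.

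You propose a volume/packing argument in the spirit of Hardt--Talwar, claiming that a $(\epz,\delta)$-DP mechanism must err by a quantity tied to $\vol(K')^{1/d}$ on some packing of well-separated databases, then converting volume to Gaussian width via Pisier. This does not work in the approximate-DP regime: packing arguments exploit the fact that a pure-DP ($\delta=0$) mechanism cannot place $e^{O(\epz n)}$ nearly-disjoint output sets, but with $\delta>0$ a mechanism can ``opt out'' of every pair of far-apart databases at cost $\delta$ each, and the resulting bound degrades catastrophically once the packing size exceeds roughly $1/\delta$. That is exactly why the prior Hardt--Talwar line of work was confined to $\delta=0$ and the ``large database'' regime, as the paper notes in its prior-work section. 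The paper's actual lower bound for approximate DP rests entirely on a single non-geometric hardness result --- the fingerprinting-codes lower bound of Bun--Ullman--Vadhan/Steinke--Ullman for the normalized cube $Q^m$ (Theorem~\ref{thm:tightcube}) --- and then reduces arbitrary $K$ to scaled cubes inside spherical projections of $K$ using elementary closure properties of sample complexity (monotonicity under containment, projections, and scaling: Lemma~\ref{lemma:subsetsc}, Corollary~\ref{orthproj}, Corollary~\ref{scaling}) combined with a purely geometric inequality, Lemma~\ref{lm:volnum}, expressing $\ell^*(K)$ in terms of Gelfand widths $c_k(K^\circ)$. That geometric inequality is proved from the low $M^*$ estimate and the $MM^*$ bound; the connection to Pisier's volume number theorem is a byproduct, not an ingredient. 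So fingerprinting is not, as you frame it, a special-case patch for the tight regime: it is the sole information-theoretic engine, and the geometry merely transports it.

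Two further structural points you are missing. First, the paper factors the whole argument through an intermediate \emph{mean point problem}, and Theorem~\ref{thm:main} is then obtained by a separate reduction (Lemma~\ref{lm:mean-pt-equiv}, a symmetrization/Caratheodory sampling argument) plus a padding lemma (Lemma~\ref{lm:padding}) to handle the change of error parameter; this architecture is essential and explains why the exponent of $\ell^*(K)$ is $2$ and the polylog power is $4$ in the final bound. Second, in the tight regime $\ell^*(K)=\Omega(m)$ the geometric step is not ``$K$ combinatorially contains $\Omega(m)$ independent $\pm1$ queries''; it is Dvoretzky's criterion applied to $K^\circ$, which produces a nearly-Euclidean section of constant diameter in dimension $\Omega(\ell(K^\circ)^2)=\Omega(m)$, after which the generic cube-reduction machinery takes over.
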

The sample complexity upper bounds in the theorem above are known from
prior work: one is given by the projection mechanism from~\cite{NTZ},
with the sample complexity upper bound in terms of $\ell^*(K)$ shown
in \cite{conjunctions}; the other upper bound is given by the Gaussian
noise mechanism~\cite{DN,DworkN04,DMNS}. The main new contribution in
this work are the lower bounds on sample complexity. The gap between
upper and lower bounds is small when $\ell^*(K)$ is close to its
maximal value of $m$. Indeed, when $\ell^*(K) = \Theta(m)$, our
results imply that the Gaussian noise mechanism has optimal sample
complexity up to constants. This is, to the best of our knowledge, the
first example of a general geometric condition under which a simple
and efficient mechanism has optimal sample complexity up to
\emph{constant} factors. Moreover, in the constant error regime this
condition is also \emph{necessary} for the Gaussian mechanism to be optimal
up to constants: when $\ell^*(K) = o(m)$ and $\alpha = \Omega(1)$, the
projection mechanism has asymptotically smaller sample complexity than
the Gaussian mechanism.

We can prove somewhat stronger results for another natural problem in
private data analysis, which we call the \emph{mean point problem}. In
this problem, we are given a closed convex set $K\subset \R^m$, and we
are asked to approximate the mean $\avgdb$ of the database $\db$,
where $\db = \{x_1, \ldots, x_n\}$ is a multiset of points in $K$ and
$\overline{\db} = \frac{1}{n}\sum_{i=1}^n{x_i}$.  This problem, which
will be the focus for most of this paper, has a more geometric flavor,
and is closely related to the query release problem for linear
queries. In fact, Theorem~\ref{thm:main} will essentially follow from a
reduction from the results below for the mean point problem.

With respect to the mean point problem, we define the error of an
algorithm $\alg$ as:
\[
\err(K,\alg,n) = \sup_{\db}(\E\|\alg(\db) - \avgdb\|_2^2)^{1/2},
\]
where the supremum is over databases $\db$ consisting of at most $n$
points from $K$, and the expectation is over the randomness of the
algorithm. The sample complexity of an algorithm $\alg$ with error
$\alpha$ is defined as: \[\samp(K, \alg, \alpha) = \min \{ n : \err(K,
\alg, n) \leq \alpha \}.\] The sample complexity of solving the mean
point problem with error $\alpha$ over  $K$ is defined
by: \[\scz(K, \alpha) = \min \{\samp(K, \alg, \alpha) : \alg \mbox{ is
} (\epsilon, \delta)\mbox{-differentially private} \}.\]

Our main result for the mean point problem is given in the following
theorem:
\begin{theorem}\label{thm:meanpt-main}
  Let $K$ be a symmetric convex body contained in the unit Euclidean ball
  $B_2^m$ in $\R^m$. The following holds for all  $\epz
  = O(1)$, $2^{-\Omega(n)} \leq \delta \leq 1/n^{1 +\Omega(1)}$, and
  any $\alpha \leq \frac{\ell^*(K)}{C\sqrt{m}(\log 2m)^2}$, where $C$
  is an absolute constant, and $\sigma(\epz,
\delta) = (0.5\sqrt{\epz} + \sqrt{2\log{(1/\delta)}})/\epz$:
  \begin{align*}
    \scz(K,\alpha) &= O\left(\min\Bigl\{\frac{\sigma(\epz,
        \delta)\ell^*(K)}{\alpha^2},
      \frac{\sigma(\epz,\delta)\sqrt{m}}{\alpha}\Bigr\}\right);\\
    \scz(K,\alpha) &=
    \Omega\left(\frac{\sigma(\epz,\delta)\ell^*(K)}{(\log
        2m)^2\alpha}\right).
  \end{align*}
  The upper bound on sample complexity is achieved by a mechanism
  running in time polynomial in $m$, $n$, and $|\uni|$. Moreover, when
  $\ell^*(K) = \Omega(\sqrt{m})$, then $\scz(\quer, \alpha) =
  \Theta\Bigl(\frac{\sigma(\epz,\delta)\sqrt{m}}{\alpha}\Bigr)$ for
  any $\alpha \le 1/C$, where $C$ is an absolute constant.
\end{theorem}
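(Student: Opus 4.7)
The upper bounds restate known mechanisms and are easy to sketch. The term $\sigma\sqrt m/\alpha$ comes from the Gaussian mechanism: since $K \subseteq B_2^m$, the $\ell_2$-sensitivity of the empirical mean on databases of size $n$ is $2/n$, so adding per-coordinate $N(0,(2\sigma(\epz,\delta)/n)^2)$ noise yields $(\epz,\delta)$-DP with $\ell_2$-error $O(\sigma\sqrt m/n)$. The term $\sigma\ell^*(K)/\alpha^2$ comes from the projection mechanism of~\cite{NTZ,conjunctions}: take the same noised mean and project onto $K$; a Gaussian-process chaining analysis of the projection distance controls it by $O(\sqrt{\sigma\ell^*(K)/n})$.

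The substance is the lower bound, for which the plan has three stages. \textbf{Stage 1 (geometric extraction).} Find a ``hard Euclidean sub-body'' inside $K$ whose size is governed by $\ell^*(K)$. Concretely, the target is an orthonormal system $v_1,\dots,v_k \in \R^m$ and a scale $r > 0$ with $r\sqrt k \gtrsim \ell^*(K)/(\log 2m)^2$ and $r\cdot\conv\{\pm v_1,\dots,\pm v_k\} \subseteq K$. Starting from $\ell^*(K) = \E\sup_{x\in K}\langle g,x\rangle$, a generic-chaining / dual-Sudakov decomposition of this Gaussian supremum identifies a single chain link that concentrates a $(\log 2m)^{-2}$ fraction of the width on a well-separated family in $K$, from which the required orthogonal vectors and scale are read off. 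This is the same structural extraction that underlies Pisier's volume number theorem, consistent with the abstract.

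\textbf{Stage 2 (fingerprinting).} Apply a standard mean-estimation fingerprinting lower bound: any $(\epz,\delta)$-DP mechanism estimating the mean of databases with rows in $\{\pm r v_1,\dots,\pm r v_k\}$ to $\ell_2$-error $\alpha$ requires $n = \Omega(\sigma(\epz,\delta)\cdot r\sqrt k/\alpha)$ points, in the stated ranges of $\epz$ and $\delta$. \textbf{Stage 3 (reduction).} Since the sub-body from Stage 1 sits inside $K$, every $(\epz,\delta)$-DP mean-point mechanism on $K$ solves the sub-body problem, so $\scz(K,\alpha) \geq \Omega(\sigma \cdot r\sqrt k/\alpha) = \Omega(\sigma\ell^*(K)/((\log 2m)^2\alpha))$. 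The $\Theta(\sigma\sqrt m/\alpha)$ regime when $\ell^*(K) = \Omega(\sqrt m)$ is then immediate: $\ell^*(K) \leq \sqrt m$ for $K\subseteq B_2^m$, so the upper and lower bounds coincide up to constants for constant $\alpha$.

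The main technical obstacle is Stage 1. A naive Dvoretzky-type extraction gives a Euclidean section of $K$ of dimension $\asymp \ell^*(K)^2$ at scale $\asymp \ell^*(K)/\sqrt m$, whose contribution $\ell^*(K)^2/\sqrt m$ to the lower bound is too small when $\ell^*(K) \ll \sqrt m$. Achieving the near-linear $\ell^*(K)/(\log 2m)^2$ dependence demands a Talagrand-style balancing of scale $r$ and dimension $k$ along successive chain links, which is precisely where the polylogarithmic loss arises and why the argument is closely tied to Pisier's theorem.
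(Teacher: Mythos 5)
Your plan correctly isolates where the hard work lies (the lower bound), and Stage 3 (subset monotonicity) is a legitimate reduction, but Stages 1 and 2 contain a genuine error, and the ``moreover'' claim is not recovered by your argument.

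\textbf{The target object in Stage 1 is wrong, and the claimed fingerprinting bound in Stage 2 is false.} You aim to place a scaled cross-polytope $r\cdot\conv\{\pm v_1,\dots,\pm v_k\}$ inside $K$ and then invoke a fingerprinting lower bound of $\Omega(\sigma r\sqrt k/\alpha)$ for databases with rows drawn from $\{\pm r v_1,\dots,\pm r v_k\}$. But the mean point problem over $r B_1^k$ (the cross-polytope) has sample complexity $O(\sigma r\sqrt{\log k}/\alpha^2)$ by the projection mechanism, since $\ell^*(B_1^k)=\Theta(\sqrt{\log k})$. For constant $\alpha$ this is far below your claimed lower bound, so no such fingerprinting bound can hold. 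The tracing/fingerprinting argument of~\cite{BunUV13,SteinkeU15} needs each row to encode a sign for \emph{every} coordinate (rows in $\{\pm1\}^k$, i.e.\ the cube), not a single $\pm v_i$; with the latter universe the problem degenerates to a histogram estimation problem, which is much easier. The object the proof actually needs is a large Euclidean ball inside an \emph{orthogonal projection} of $K$ onto a high-dimensional subspace --- equivalently, by Section/Projection Duality (Corollary~\ref{dualequivalence}), a low-diameter central section of $K^\circ$. The paper's Lemma~\ref{lemma:linop2} and Corollary~\ref{orthproj} establish that sample complexity does not increase under orthogonal projection, and this is essential: for many $K$ with large $\ell^*(K)$, there is no large cube or ball inside $K$ itself, but there is one in a projection. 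The extraction is then done via the low $M^*$ estimate and the $MM^*$ ($\ell$-position) bound, combined with a recursive decomposition $\ell^*(K)\le\ell^*(\Pi_E K)+\ell^*(\Pi_{E^\perp}K)$ yielding Lemma~\ref{lm:volnum}, not by a generic-chaining ``single chain link'' extraction. Your description of Stage 1 remains a sketch of intent rather than an argument, and as currently phrased it points at the wrong structure.

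\textbf{The ``moreover'' claim does not follow from your bounds.} When $\ell^*(K)=\Theta(\sqrt m)$ your lower bound gives $\Omega(\sigma\sqrt m/((\log 2m)^2\alpha))$, which is \emph{not} $\Theta(\sigma\sqrt m/\alpha)$ --- there is a $(\log 2m)^2$ gap, so the upper and lower bounds do not coincide up to constants. The paper handles this regime by a separate, sharper argument (Theorem~\ref{thm:gausslb}) based on Dvoretzky's criterion applied to $K^\circ$: since $\ell(K^\circ)=\ell^*(K)=\Omega(\sqrt m)$, Dvoretzky gives a section $K^\circ\cap E$ of dimension $\Omega(m)$ and diameter $O(1)$, which feeds into the geometric lower bound (Theorem~\ref{modularlb}) with no logarithmic loss. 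You would need to include this Dvoretzky-based argument to justify the ``moreover'' statement.
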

The upper bounds again follow from prior work, and in fact are also
given by the projection mechanism and the Gaussian noise mechanism,
which can be defined for the mean point problem as well.  Notice that
the gap between the upper and the lower bound is on the order
$\frac{(\log 2m)^2}{\alpha}$. If the lower bound was valid for all
values of the error parameter $\alpha$ less than a fixed constant,
rather than for $\alpha \le  \frac{\ell^*(K)}{C\sqrt{m}(\log
    2m)^2}$, Theorem~\ref{thm:meanpt-main} would nearly
characterize the optimal sample complexity for the mean point problem
for all constant $\alpha$. Unfortunately, the restriction on $\alpha$
is, in general, necessary (up to the logarithmic terms) for lower
bounds on sample complexity in terms of $\ell^*(K)$. For example, we
can take $K = \gamma B_2^m$, i.e.~a Euclidean ball in $\R^m$ with
radius $\gamma$. Then, $\ell^*(K) = \Theta(\gamma\sqrt{m})$, but the
sample complexity is $0$ when $\alpha > \gamma$, since the trivial
algorithm which ignores the database and outputs $0$ achieves error
$\gamma$. Thus, a more sensitive measure of the size of $K$ is
necessary to prove optimal lower bounds. We do, nevertheless, trust
that the techniques introduced in this paper bring us closer to this
goal.  

\junk{Moreover, the projection mechanism does not have optimal
sample complexity for every $K$. For example, if we take the cone $K =
\{x: \sqrt{1 - \alpha^2}\|x\|_2 \le x_1 \le \sqrt{1 - \alpha^2}\}$,
then the projection mechanism has sample complexity
$\Omega(\frac{\sqrt{m}}{\alpha^2})$ for error $2\alpha$, but
projecting on the line segment joining $0$ and $\sqrt{1 -
  \alpha^2}e_1$ has sample complexity $O(\frac{1}{\alpha^2})$. }

We conclude this section with a high-level overview of our
techniques. Our starting point is a recent tight lower bound on the
sample complexity of a special class of linear queries: the $1$-way
marginal queries. These queries achieve the worst case sample
complexity for a family of $m$ linear queries:
$\Omega(\sqrt{m}/\alpha)$~\cite{BunUV13,SteinkeU15}. The sensitivity
polytope of the $1$-way marginals is the cube $[-1, 1]^m$, and it can
be shown that the lower bound on the sample complexity of $1$-way
marginals implies an analogous lower bound on the sample complexity of
the mean point problem with  $K = Q^m = [-1/\sqrt{m},
1/\sqrt{m}]^m$. For the mean point problem, it is easy to see that when
$K' \subseteq K$, the sample complexity for $K'$ is no larger than the
sample complexity for $K$. Moreover, we can show that the sample
complexity of any projection of $K$ is no bigger than the sample
complexity of $K$ itself. So, our strategy then is to find a large scaled
copy of $Q^{m'}$, $m' \le m$, inside a projection of $K$ onto a large dimensional
subspace whenever $\ell^*(K)$ is large. We solve this geometric
problem using deep results from asymptotic convex geometry, namely the
Dvoretzky criterion, the low $M^*$ estimate, and the $MM^*$
estimate. 

Our techniques also yield an alternative proof of the volume
number theorem of Milman and Pisier~\cite{MilmanPisier87}. Besides
avoiding the quotient of subspace theorem, our proof yields an
improvement in the volume number theorem, conditional on the well-known
conjecture that any symmetric convex body $K$ has a position (affine
image) $TK$ for which $\ell^*(TK)\ell(TK) = O(m\sqrt{\log 2m})$, where
$\ell(K)$ is the expected $K$-norm of a standard Gaussian. More
details about this connection are given in Section~\ref{sect:arbirary}.

\subsection{Prior Work}

Most closely related to our work are the results of Nikolov, Talwar,
and Zhang~\cite{NTZ}, who gave a private mechanism (also based on the
projection mechanism, but more involved) which has nearly optimal
sample complexity (with respect to average error), up to factors
polynomial in $\log m$ and $\log |\uni|$. This result was subsequently
improved by Nikolov~\cite{smalldb}, who showed that the $\log m$
factors can be replaced by $\log n$. While these results are nearly
optimal for subconstant values of the error parameter $\alpha$, i.e.~the
optimality guarantees do not depend on $1/\alpha$, factors polynomial
in $\log |\uni|$ can be prohibitively large. Indeed, in many natural
settings, such as that of marginal queries, $|\uni|$ is exponential in
the number of queries $m$, so the competitiveness ratio can be
polynomial in $m$. \junk{Here we remove any dependence on $|\uni|$ at the
cost of introducing a dependence on $1/\alpha$. However, the setting
of constant error is very natural and we consider this a 
reasonable trade-off.}

The line of work that applies techniques from convex geometry to
differential privacy started with the beautiful paper of Hardt and
Talwar~\cite{10vollb}, whose results were subsequently strengthened
in~\cite{12vollb}. These papers focused on the ``large database'' regime
(or, in our language, the setting of subconstant error), and pure
differential privacy ($\delta = 0$). \junk{In this paper, we focus instead
on constant error and approximate differential privacy ($\delta >
0$). }

\section{Preliminaries}

We begin with the introduction of some notation. Throughout the paper
we use $C$,  $C_1$, etc., for absolute constants, whose value may change
from line to line. We use $\|\cdot \|_2$ for the Euclidean norm, and
$\|\cdot\|_1$ for the $\ell_1$ norm. We define $B_1^m$ and $B^m_2$ to
be the $\ell_1$ and $\ell_2$ unit balls in $\mathbb{R}^m$
respectively, while $Q^m = [-\frac{1}{\sqrt{m}}, \frac{1}{\sqrt{m}}]^m
\subseteq \mathbb{R}^m$ will refer to the $m$-dimensional hypercube,
normalized to be contained in the unit Euclidean ball. We use $I_m$
for the identity operator on $\R^m$, as well as for the $m\times m$
identity matrix. For a given subspace $E$, we define $\Pi_E \maps
\mathbb{R}^m \to \mathbb{R}^m$ as the orthogonal projection operator
onto $E$. Moreover, when $T\maps E \to F$ is a linear operator between
the subspaces $E, F\subseteq \R^m$, we define $\| T \| =
\max\{\|Tx\|_2: \|x\|_2 = 1\}$ as its operator norm, which is also
equal to its largest singular value $\sigma_1(T)$.  For the diameter
of a set $K$ we use the nonstandard, but convenient, definition
$\diam{K} = \max{\{ \|x\|_2 : x \in K\}}$. For sets symmetric around
$0$, this is equivalent to the standard definition, but scaled up by a
factor of $2$. We use $N(\mu, \Sigma)$ to refer to the Gaussian
distribution with mean $\mu$ and covariance $\Sigma$, and we use the
notation $x \sim N(\mu, \Sigma)$ to denote that $x$ is distributed as
a Gaussian random variable with mean $\mu$ and covariance
$\Sigma$. For a $m\times m$ matrix (or equivalently an operator from
$\ell_2^m$ to $\ell_2^m$) $A$ we use $A \succeq 0$ to denote that $A$
is positive semidefinite. For positive semidefinite matrices/operators
$A$, $B$, we use the notation $A \preceq B$ to denote $B-A \succeq
0$.

\section{Probability Theory}
\label{app:prob}

We make use of some basic comparison theorems from the theory of
stochastic processes. First we state the well-known symmetrization
lemma. We also give the short proof for completeness. Recall that $\xi_1,
\ldots, \xi_n$ are a sequence of \emph{Rademacher random variables} if each
$\xi_i$ is uniformly and independently distributed in $\{-1, 1\}$.

\begin{lemma}[Symmetrization]\label{lm:sym}
  Let $p>1$, and let $\|\cdot \|$ be a norm on $\R^m$. Then, for any sequence $x_1, \ldots, x_n$ of
  independent random variables in $\R^m$ such that $\E\|x_i\|^p$ is
  finite for every $i$, we have
  \[
  \E \Big\|\sum_i{x_i} - \E\sum_i{x_i}\Big\|^p \le 2^p \E \Big\|\sum_i{\xi_i x_i}\Big\|^p,
  \]
  where $\xi_1, \ldots, \xi_n $ are Rademacher random variables,
  independent of $x_1, \ldots, x_n$. Each expectation above is with
  respect to all random variables involved.
\end{lemma}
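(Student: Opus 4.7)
The plan is to use the standard two-step symmetrization argument: first introduce an independent copy to replace the deterministic center by a random variable (losing nothing by Jensen), then insert Rademacher signs by exploiting a distributional symmetry.

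First, I would let $x_1', \ldots, x_n'$ be an independent copy of $x_1, \ldots, x_n$, defined on the same probability space but independent of the original sequence and of the $\xi_i$. Writing $\E'$ for the expectation over the primed copy, we have $\E'\sum_i x_i' = \E\sum_i x_i$, so by Jensen's inequality applied to the convex function $t \mapsto \|t\|^p$,
\[
\E\Big\|\sum_i x_i - \E\sum_i x_i\Big\|^p = \E\Big\|\E'\Big[\sum_i (x_i - x_i')\Big]\Big\|^p \le \E\E'\Big\|\sum_i (x_i - x_i')\Big\|^p.
\]

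Next I would use the symmetrization step. Since $(x_i, x_i')$ and $(x_i', x_i)$ are equal in distribution (by independence and identical marginals), the joint law of $(x_i - x_i')_{i=1}^n$ is invariant under sign-flips of individual coordinates. Hence, for any fixed $\varepsilon \in \{-1,1\}^n$, $\sum_i (x_i - x_i') \stackrel{d}{=} \sum_i \varepsilon_i (x_i - x_i')$, and averaging over independent Rademacher signs $\xi_i$ (which are independent of both sequences) preserves the distribution:
\[
\E\E'\Big\|\sum_i (x_i - x_i')\Big\|^p = \E\E'\E_\xi \Big\|\sum_i \xi_i(x_i - x_i')\Big\|^p.
\]

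Finally, I would apply the triangle inequality together with the convexity bound $(a+b)^p \le 2^{p-1}(a^p + b^p)$ (valid for $p \ge 1$) to split:
\[
\Big\|\sum_i \xi_i(x_i - x_i')\Big\|^p \le 2^{p-1}\Big(\Big\|\sum_i \xi_i x_i\Big\|^p + \Big\|\sum_i \xi_i x_i'\Big\|^p\Big).
\]
Taking expectations, the two terms on the right are equal by identical distribution, yielding the factor $2^{p-1} \cdot 2 = 2^p$ and the claimed inequality. There is no real obstacle here; the only subtlety is keeping the independence structure straight between $(x_i)$, $(x_i')$, and $(\xi_i)$ so that the Fubini-type rearrangements in the middle step are justified, and ensuring the finiteness of $\E\|x_i\|^p$ makes every expectation involved well-defined.
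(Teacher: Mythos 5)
Your proposal is correct and follows the same structure as the paper's proof: the same Jensen/independent-copy step, the same distributional-symmetry argument to insert Rademacher signs. The only difference is in the final step, where you use the pointwise convexity bound $(a+b)^p \le 2^{p-1}(a^p+b^p)$ while the paper uses Minkowski's inequality in $L_p$; both yield the same constant $2^p$ and this is a cosmetic variation, not a different route.
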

\begin{proof}
  Let $x'_1, \ldots, x'_n$ be independent copies of $x_1, \ldots,
  x_n$. Then, $\E\sum_i{x_i} = \E\sum_i{x'_i}$, and, by convexity of
  the function $\|\cdot \|^p$ and Jensen's inequality,
  \[
    \E \Big\|\sum_i{x_i} - \E\sum_i{x_i}\Big\|^p =   \E \Big\|\sum_i{x_i} -
    \E\sum_i{x'_i}\Big\|^p \le   \E \Big\|\sum_i{x_i - x'_i}\Big\|^p.
  \]
  Because $x_i$ and $x'_i$ are independent and identically
  distributed, the random variables $x_i - x'_i$ and $x'_i - x_i$ are also
  identically distributed. Therefore, $\E \|\sum_i{x_i -
    x'_i}\|^p = \E \|\sum_i{\xi_i (x_i - x'_i)}\|^p$, where $\xi_1,
  \ldots, \xi_n$ are independent Rademacher random variables, as in
  the statement of the lemma. Finally, by Minkowski's inequality
  (i.e.~triangle inequality for $L_p$), we have
  \[
  \E \Big\|\sum_i{\xi_i (x_i - x'_i)}\Big\|^p \le \Bigg(\Big(\E \Big\|\sum_i{\xi_i
    x_i}\Big\|^p\Big)^{1/p} +  \Big(\E \|\sum_i{\xi_i x'_i}\|^p\Big)^{1/p} \Bigg)^p 
  = 2^p \E \Big\|\sum_i{\xi_i x_i}\Big\|^p,
  \]
  as desired.
\end{proof}

Next we state a simple comparison theorem for Gaussian random variables.
\begin{lemma}\label{lm:comp-gauss}
  Let $x \sim N(0,\Sigma)$ and $x' \sim N(0, \Sigma')$ be Gaussian
  random variables in $\R^m$, and assume $\Sigma
  \preceq \Sigma'$. Then, for any norm $\|\cdot\|$ on $\R^m$, we have:
  \[
  \E \|x\| \le \E \|x'\|.
  \]
\end{lemma}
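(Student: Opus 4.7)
The plan is to reduce to the fact that a Gaussian with larger covariance can be written as a sum of independent Gaussians, and then apply Jensen's inequality to the convex function $\|\cdot\|$.

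First, since $\Sigma' - \Sigma \succeq 0$, the matrix $\Sigma' - \Sigma$ is itself a valid covariance matrix. Let $y \sim N(0, \Sigma' - \Sigma)$ be drawn independently of $x$. Because independent Gaussians add covariances, $x + y \sim N(0, \Sigma')$, so $x + y$ has the same distribution as $x'$. Consequently
\[
\E \|x'\| = \E \|x + y\|.
\]

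Now I would condition on $x$ and exploit convexity of the norm. Given $x$, the random vector $x + y$ is Gaussian with mean $x$, so by Jensen's inequality applied to the convex function $\|\cdot\|$,
\[
\E\bigl[\|x + y\| \;\big|\; x\bigr] \ge \bigl\|\E[x+y \mid x]\bigr\| = \|x\|.
\]
Taking expectation over $x$ and combining with the previous display yields $\E\|x'\| \ge \E\|x\|$, which is the desired inequality. (Equivalently, one could use the symmetry of $y$: from $2x = (x+y) + (x-y)$ and the triangle inequality, $2\E\|x\| \le \E\|x+y\| + \E\|x-y\| = 2\E\|x+y\|$, since $-y$ has the same distribution as $y$.)

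There is no real obstacle here; the only thing to verify carefully is that $\Sigma' - \Sigma$ is a genuine covariance matrix, which is exactly the assumption $\Sigma \preceq \Sigma'$, and that expectations exist, which follows since any norm on $\R^m$ is equivalent to the Euclidean norm and Gaussians have finite moments of all orders.
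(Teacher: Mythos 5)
Your proof is correct and uses essentially the same decomposition as the paper ($x' \stackrel{d}{=} x + y$ with $y \sim N(0,\Sigma'-\Sigma)$ independent of $x$); your parenthetical remark is in fact verbatim the paper's argument, while your main route via conditional Jensen is a trivial variant of it.
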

\begin{proof}
  Couple $x$ and $x'$ so that
  they are independent, and define a new random variable $y \sim N(0,
  \Sigma' - \Sigma)$, independent of $x$ and $x'$. Then the random
  variables $x + y$ and $x - y$ are distributed identically to $x'$,
  and, by linearity of expectation and the triangle inequality we have
  \[
  \E\|x'\| = \E\left[\frac{\|x + y\| + \|x - y\|}{2}\right] \ge \E\|x\|.
  \]
  This completes the proof.
\end{proof}
Note that the same conclusion follows under weaker assumptions from
Slepian's lemma.

\subsection{Convex Geometry}

In this section, we outline the main geometric tools we use in later
sections. For a more detailed treatment, we refer to the lecture notes
by Vershynin~\cite{Ver-notes} and the books by
Pisier~\cite{Pisier-book} and Artstein-Avidan, Giannopoulos, and
Milman~\cite{AGM-book}.

Throughout, we define a \emph{convex body} $K$ as a compact subset of $\R^m$
with non-empty interior. A convex body $K$ is \emph{(centrally) symmetric} if and
only if $K = -K$.
We define  the \emph{polar body} ${K}^\circ$ of $K$ as: $K^\circ = \{ y
: \langle x, y \rangle \leq 1 \mbox{  } \forall x \in K \}.$ The
following basic facts are easy to verify and very useful.

\begin{fact}{\label{polarsubset}}
For convex bodies $K, L \subseteq \mathbb{R}^m$, $K \subseteq L \Leftrightarrow L^\circ \subseteq K^\circ.$
\end{fact}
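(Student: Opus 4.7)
The plan is straightforward: the forward direction follows immediately from the definition of the polar body, and the reverse direction follows by applying the forward direction twice together with the bipolar theorem $K^{\circ\circ} = K$.

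For the forward direction, I would assume $K \subseteq L$ and pick an arbitrary $y \in L^{\circ}$. By definition of the polar, $\langle x, y \rangle \leq 1$ for all $x \in L$, so in particular for all $x \in K$, which gives $y \in K^{\circ}$. Hence $L^{\circ} \subseteq K^{\circ}$.

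For the reverse direction, suppose $L^{\circ} \subseteq K^{\circ}$. Applying the forward direction (which we have already proved) to this inclusion yields $(K^{\circ})^{\circ} \subseteq (L^{\circ})^{\circ}$, i.e., $K^{\circ\circ} \subseteq L^{\circ\circ}$. Invoking the bipolar theorem, which states that $K^{\circ\circ} = K$ for any closed convex set containing the origin (this is the standing assumption throughout the paper, since the convex bodies considered are symmetric and in particular contain $0$), we conclude $K \subseteq L$.

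There is essentially no obstacle here: the only subtlety is the appeal to the bipolar theorem, which requires $0 \in K$ and $0 \in L$. This is automatic in our setting because the convex bodies we care about (sensitivity polytopes and their projections) are centrally symmetric, hence contain the origin. Since the fact is stated in this preliminaries section purely for later use on symmetric convex bodies, no further care is needed.
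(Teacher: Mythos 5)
Your proof is correct and is the standard one; the paper leaves this fact unproven ("easy to verify"), so there is no paper argument to compare against. Your remark about the bipolar theorem requiring $0 \in K$ is well taken: as the paper words the fact, the hypothesis "convex body" does not literally force $0 \in K$, and indeed the equivalence can fail otherwise (e.g., $K = [1,2]$, $L = [3/2, 2]$ in $\R$ gives $L^\circ = K^\circ$ but $K \not\subseteq L$), so flagging that the intended application is always to symmetric bodies containing the origin is the right thing to do.
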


\begin{fact}[Section/Projection Duality]{\label{polarduality}}
For a convex body $K \subseteq \mathbb{R}^m$ and a subspace $E \subseteq \mathbb{R}^m$:
\begin{enumerate}
\item $(K \cap E)^\circ = \Pi_E(K^\circ);$
\item $(\Pi_E(K))^\circ = K^\circ \cap E.$
\end{enumerate}
\noindent In both cases, the polar is taken in the subspace $E$.
\end{fact}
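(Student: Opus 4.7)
The plan is to establish part (2) directly from the definitions of polar and orthogonal projection, and then to deduce part (1) by applying (2) to $K^\circ$ and taking a further polar within the subspace $E$, using the bipolar theorem.

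For part (2), I would fix $y \in E$ and unwind what $y \in (\Pi_E(K))^\circ$ means when the polar is computed in $E$: namely, $\langle y, \Pi_E(x)\rangle \le 1$ for every $x \in K$. Since $\Pi_E$ is self-adjoint and $y \in E$, we have $\Pi_E(y)=y$, so $\langle y, \Pi_E(x)\rangle = \langle \Pi_E(y), x\rangle = \langle y, x\rangle$. The condition then collapses to $\langle y, x\rangle \le 1$ for all $x \in K$, i.e.\ $y \in K^\circ$; combined with $y \in E$, this is exactly $y \in K^\circ \cap E$. Both directions are equivalences, yielding $(\Pi_E(K))^\circ = K^\circ \cap E$.

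For part (1), I would apply part (2) with $K^\circ$ in place of $K$ to obtain $(\Pi_E(K^\circ))^\circ = K^{\circ\circ} \cap E = K \cap E$, using the bipolar theorem $K^{\circ\circ}=K$ in $\R^m$, which applies because the convex bodies in this paper are closed, convex, and contain the origin in their interior (in particular, the symmetric ones do). Now I would take polars once more, this time within $E$, and invoke the bipolar theorem inside $E$ for $\Pi_E(K^\circ)$, which is closed, convex, and contains the origin of $E$; this gives $\Pi_E(K^\circ) = (K \cap E)^\circ$, as required.

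There is no substantive obstacle beyond careful bookkeeping: the only subtle point is keeping straight which ambient space each polar is computed in, and checking that the hypotheses of the bipolar theorem hold both in $\R^m$ and in $E$ so that the two applications of it are justified.
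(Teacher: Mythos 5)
The paper does not actually provide a proof of this fact (it is presented as ``easy to verify''), so there is no official argument to compare against; your proof is the standard one and is correct. Part~(2) is a direct computation using the self-adjointness of $\Pi_E$ together with $\Pi_E(y)=y$ for $y\in E$, and part~(1) is then obtained cleanly by dualizing part~(2), invoking the bipolar theorem once in $\R^m$ and once inside the subspace $E$. You are also right to be careful about the hypotheses of the bipolar theorem: as stated for an arbitrary convex body the identity $K^{\circ\circ}=K$ needs $0\in K$ (otherwise the bipolar is the closed convex hull of $K\cup\{0\}$), and your observation that the symmetric convex bodies used in the paper contain the origin in their interior is exactly the right justification, both for $K^{\circ\circ}=K$ in $\R^m$ and for the bipolar of $\Pi_E(K^\circ)$ inside $E$.
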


\begin{fact}\label{polarmap}
  For any invertible linear map $T$ and any convex body $K$,
  $T(K)^\circ = T^{-*}(K^\circ)$, where $T^{-*}$ is the inverse of the
  adjoint operator $T^*$.
\end{fact}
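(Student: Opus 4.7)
The plan is to unwind the definition of the polar body and move $T$ across the inner product using the adjoint. Specifically, I will show the equality $T(K)^\circ = T^{-*}(K^\circ)$ by a direct set-membership argument: picking an arbitrary $y$ and demonstrating that $y \in T(K)^\circ$ is equivalent to $y \in T^{-*}(K^\circ)$.

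First I would apply the definition of the polar: $y \in T(K)^\circ$ means $\langle x, y\rangle \leq 1$ for every $x \in T(K)$. Since every such $x$ can be written as $Tz$ for some $z \in K$ (and conversely), this becomes the condition $\langle Tz, y\rangle \leq 1$ for all $z \in K$. Now the key step is to push $T$ over to the other side of the inner product via the defining property of the adjoint, $\langle Tz, y\rangle = \langle z, T^* y\rangle$, which converts the condition to $\langle z, T^* y \rangle \leq 1$ for all $z \in K$. By the definition of $K^\circ$, this is exactly the statement that $T^* y \in K^\circ$, i.e.\ $y \in (T^*)^{-1}(K^\circ) = T^{-*}(K^\circ)$. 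The invertibility of $T$ (which implies invertibility of $T^*$) is what makes $T^{-*}$ well-defined and makes the passage $x = Tz$ a bijection between $T(K)$ and $K$.

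There is no real obstacle: the whole argument is a two-line chase through definitions, and the only ``content'' is the adjoint identity $\langle Tz, y\rangle = \langle z, T^* y\rangle$. The statement could equivalently be derived by composing Fact~\ref{polarduality} with a change of basis, but the direct calculation above is shorter and avoids any auxiliary machinery.
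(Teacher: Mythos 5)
Your proof is correct. The paper states Fact~\ref{polarmap} without proof, describing it (along with the other polar-body facts) as ``easy to verify,'' and your direct membership chase through the definition of the polar and the adjoint identity $\langle Tz, y\rangle = \langle z, T^*y\rangle$ is exactly the standard verification the authors have in mind.
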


A simple special case of Fact~\ref{polarmap} is that, for any convex body $K$,
$(rK)^\circ = \frac1r K^\circ$. Using this property alongside
Fact~\ref{polarsubset}, we have the following useful corollary.

\begin{corr}{\label{dualequivalence}}
For a convex body $K \subseteq \mathbb{R}^m$ and $E \subseteq \mathbb{R}^m$ a subspace with $k = \dim{E}$, the following two statements are equivalent:

\begin{enumerate}
\item $\Pi_E(r B_2^m) \subseteq \Pi_E(K);$
\item $K^\circ \cap E \subseteq \frac{1}{r}(B_2^m \cap E),$
\end{enumerate}

\noindent where, as before, taking the polar set is considered in the
subspace $E$. Notice that the second statement is also equivalent to
$\diam(K^\circ \cap E) \le \frac1r$.
\end{corr}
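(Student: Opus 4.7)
The plan is to deduce the corollary directly by combining Fact~\ref{polarsubset}, Fact~\ref{polarduality}, and the scaling identity $(rK)^\circ = \frac1r K^\circ$ from Fact~\ref{polarmap}, all applied in the ambient space $E$ (as the statement instructs that polars are taken within $E$). The key preliminary observation is that since $B_2^m$ is rotationally invariant, $\Pi_E(B_2^m) = B_2^m \cap E$ is precisely the Euclidean unit ball of $E$, so both sides of the equivalence live naturally in $E$ and we can treat $E$ as our ambient space throughout.

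First I would prove the forward direction. Assume $\Pi_E(rB_2^m) \subseteq \Pi_E(K)$. Taking polars in $E$ reverses the inclusion by Fact~\ref{polarsubset}, giving
\[
\bigl(\Pi_E(K)\bigr)^\circ \subseteq \bigl(\Pi_E(rB_2^m)\bigr)^\circ.
\]
Part~2 of Fact~\ref{polarduality} identifies the left-hand side with $K^\circ \cap E$ and the right-hand side with $(rB_2^m)^\circ \cap E = \frac{1}{r}B_2^m \cap E = \frac{1}{r}(B_2^m \cap E)$, using the scaling identity for polars. This yields statement~2. The reverse direction proceeds by the same sequence of equalities and inclusions run backwards: starting from $K^\circ \cap E \subseteq \frac{1}{r}(B_2^m \cap E)$, take polars in $E$ and apply Fact~\ref{polarduality} (part~2) once more to each side, which gives $\Pi_E(rB_2^m) \subseteq \Pi_E(K)$. (We use here that for a symmetric convex body in $E$, the double polar taken in $E$ recovers the body, which is a standard bipolar fact.)

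Finally, the equivalence with the diameter condition is essentially unpacking the nonstandard definition $\diam(L) = \max\{\|x\|_2 : x \in L\}$ from the preliminaries: $\diam(K^\circ \cap E) \leq \frac{1}{r}$ says exactly that every $x \in K^\circ \cap E$ has $\|x\|_2 \leq 1/r$, i.e.\ $K^\circ \cap E \subseteq \frac{1}{r}B_2^m$, which, together with the trivial inclusion $K^\circ \cap E \subseteq E$, is the same as $K^\circ \cap E \subseteq \frac{1}{r}(B_2^m \cap E)$.

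There is no real obstacle here; the only subtlety to be careful about is keeping track of the fact that all polars are taken inside the subspace $E$ so that Fact~\ref{polarduality} applies as stated and the Euclidean ball $B_2^m \cap E$ plays the role of the unit ball of $E$. Once that bookkeeping is fixed, the corollary is a one-line consequence of the section/projection duality.
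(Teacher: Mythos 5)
Your proof is correct and takes essentially the same route the paper intends: the corollary is left implicit in the text, but the lead-in remark explicitly identifies the three ingredients you combine, namely the antitone polar inclusion (Fact~\ref{polarsubset}), section/projection duality (Fact~\ref{polarduality}), and the scaling identity $(rK)^\circ = \frac1r K^\circ$. Your bookkeeping of where polars are taken, the observation that $\Pi_E(B_2^m) = B_2^m \cap E$, and the unpacking of the nonstandard $\diam$ definition are all accurate; one could alternatively obtain the reverse direction by invoking the ``$\Leftrightarrow$'' of Fact~\ref{polarsubset} directly rather than appealing to the bipolar theorem, but the two are equivalent in substance.
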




Our work relies on appropriately quantifying the ``size'' of
(projections and sections of) a convex body. It turns out that, for our
purposes, the right measure of size is related to the notion of
\emph{width}, captured by the \emph{support function}. Recall from the
introduction that the support function of a convex body $K\subset
\R^m$ is given by $h_K(y) = \max_{x \in K}{\langle x, y\rangle}$ for
every $y \in \R^m$. The support function is intimately related to the
\emph{Minkowski norm} $\|\cdot \|_K$, defined for a symmetric convex
body $K\subseteq \R^m$ by  $ \| x \|_K = \min{\{ r \in \mathbb{R} : x
  \in rK\}},$ for every $x \in \R^m$. It is easy to verify that
$\|\cdot\|_K$ is indeed a norm. The support function $h_K$ is
identical to the Minkowski norm of the polar body $K^\circ$ (which is
also the dual norm to $\|\cdot \|_K$): $h_K(y) = \|y\|_{K^\circ}$ for every $y \in
\R^m$.

Now we come to the measure of the ``size'' of a convex body which will
be central to our results: the Gaussian mean width of the body,
defined next.
\begin{defn}
The Gaussian mean width and Gaussian mean norm of a symmetric convex
body  $K \subseteq \mathbb{R}^m$ are defined respectively as: 
\begin{align*}
\ell^*(K) &= \E\|g\|_{K^\circ} = \E[h_K(g)],
&\ell(K) = \E\|g\|_{K},
\end{align*}
where ${g \sim N(0,I_m)}$ is a standard Gaussian random variable.
\end{defn}

The next lemma gives an estimate of how the mean width changes when
applying a linear transformation to $K$.

\begin{lemma}\label{lm:comp-width}
  For any symmetric convex body
  $K\subset \R^m$, and any linear operator $T\maps \ell_2^m \to \ell_2^m$:
  \begin{align*}
    \ell^*(T(K)) \le \|T\|\ell^*(K).
  \end{align*}
\end{lemma}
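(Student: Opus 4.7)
The plan is to pass to the support function representation and then use the Gaussian comparison lemma (Lemma \ref{lm:comp-gauss}) with the covariance ordering induced by $T^*T \preceq \|T\|^2 I_m$.

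First I would rewrite the support function of the image body. For any $y \in \R^m$,
\[
h_{T(K)}(y) = \max_{x \in K}\langle Tx, y\rangle = \max_{x \in K}\langle x, T^*y\rangle = h_K(T^*y),
\]
so that $\ell^*(T(K)) = \E[h_K(T^*g)]$ with $g \sim N(0, I_m)$. Next I observe that $T^*g$ is Gaussian with covariance $T^*T$, and that $T^*T \preceq \|T\|^2 I_m$, since the largest eigenvalue of $T^*T$ equals $\sigma_1(T)^2 = \|T\|^2$. On the other hand, $\|T\| \cdot g \sim N(0, \|T\|^2 I_m)$.

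Because $K$ is symmetric, $h_K(\cdot) = \|\cdot\|_{K^\circ}$ is an honest norm on $\R^m$, so Lemma \ref{lm:comp-gauss} applies and yields
\[
\E\, h_K(T^*g) = \E\,\|T^*g\|_{K^\circ} \le \E\,\|\,\|T\|\, g\,\|_{K^\circ} = \|T\|\cdot \E\,\|g\|_{K^\circ} = \|T\|\,\ell^*(K).
\]
Combining with the first display gives $\ell^*(T(K)) \le \|T\|\,\ell^*(K)$.

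There is no real obstacle here; the only thing to be careful about is that Lemma \ref{lm:comp-gauss} requires a genuine norm, which is why we use symmetry of $K$ to pass from the support function to $\|\cdot\|_{K^\circ}$. One could alternatively argue directly: write $T = \|T\| \cdot S$ with $\|S\| \le 1$, and use that a contraction $S$ composed with a standard Gaussian has covariance $SS^* \preceq I_m$, so $\E\,h_K(S^*g) \le \E\,h_K(g)$ by the same comparison step, giving the same bound. Either route completes the proof.
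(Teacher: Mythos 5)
Your argument is essentially identical to the paper's: both rewrite $\ell^*(T(K)) = \E\|T^*(g)\|_{K^\circ}$, observe $T^*g \sim N(0, T^*T)$ with $T^*T \preceq \|T\|^2 I_m$, and invoke Lemma~\ref{lm:comp-gauss} with $\Sigma = T^*T$, $\Sigma' = \|T\|^2 I_m$. Your added remarks about symmetry making $h_K$ a genuine norm, and the alternative scaling $T = \|T\|S$, are correct but do not change the substance.
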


\begin{proof}
  Notice that, for a standard Gaussian $g \sim N(0, I_m)$,
  \[
  \ell^*(T(K)) = \E \sup_{x \in K}{\langle T(x), g \rangle}
  = \E \sup_{x \in K}{\langle x, T^*(g) \rangle} = \E \|T^*(g)\|_{K^\circ}.
  \]
  Treating $T^*T$ as an $m\times m$ matrix in the natural way, we see
  that $T^*(g) \sim N(0,T^*T)$. By applying Lemma~\ref{lm:comp-gauss} to
  $\Sigma = T^*T$ and $\Sigma' = \|T\|^2 I_m$, we have that
  \[
  \ell^*(T(K)) = \E \|T^*(g)\|_{K^\circ} \le \|T\|\cdot\E \|g\|_{K^\circ} = \|T\|\ell^*(K).
  \]
  This finishes the proof of the lemma.
\end{proof}

Similar to approaches in previous works (\cite{10vollb}, \cite{NTZ}), we exploit properties inherent to a specific position of $K$ to prove lower bounds on its sample complexity. \junk{More specifically, we use the $\ell$-position of $K$, defined below, to show an upper bound on its mean width.}

\begin{defn}[$\ell$-position]
A convex body $K\subseteq \R^m$ is in $\ell$-position if for all
linear operators $T\maps \ell_2^m \to \ell_2^m$: 
\[\ell^*(K)\cdot \ell(K) \leq \ell^*(T(K)) \cdot \ell(T(K)).\]
\end{defn}

Clearly, $K$ is in $\ell$-position if and only if $K^\circ$ is in
$\ell$-position, since $\ell^*(K) = \ell(K^\circ)$ for any convex body
$K$. 
Note further that the product $\ell^*(K)\cdot \ell(K)$ is scale-invariant, in
the sense that $\ell^*(rK)\cdot \ell(rK) = \ell^*(K)\cdot \ell(K)$ for
any real $r$. This is because, for any $x, y \in \R^m$, $\|x\|_{rK} =
\frac1r \|x\|_K$, and 
$h_{rK}(y) = r h_K(y)$, so
$\ell^*(rK) = r\ell^*(K)$ and $\ell(rK) = \frac1r \ell(K)$. 

We will relate the Gaussian mean width of $K$ to another measure of its
size, and the size of its projections and sections, known as Gelfand
width. A definition follows.

\junk{
\begin{defn}[Covering Number]
For two sets $X,Y \subset \mathbb{R}^m$, the covering number $\covn(X,Y)$ is the minimum number of translations of $Y$ needed to completely cover $X$. 
Furthermore, we refer to $\log{\covn(X,Y)}$ as the \textit{metric entropy} of $X$ with respect to $Y$.
\end{defn}}

\begin{defn}[Gelfand width]
For two symmetric convex bodies $K,L \subset \mathbb{R}^m$, the \emph{Gelfand
width of order $k$} of $K$ with respect to $L$ is defined as: 
\[
c_k(K,L) = \inf_E \inf\{r: K \cap E \subseteq r(L\cap E)\} = \inf_E
\sup\{\|x\|_L: x \in K \cap E\},
\]
where the first infimum is over subspaces $E \subseteq \R^m$ of
co-dimension at most $k-1$ (i.e.~of dimension at least $m-k+1$). When
$k>m$, we define $c_k(K,L) = 0$.
We denote $c_k(K) = c_k(K,B_2^m)$, and we call $c_k(K)$ simply the
Gelfand width of $K$ of order $k$.
\end{defn}

Note that $c_k(K) = \inf_E \diam(K \cap E)$, where the infimum is over
subspaces $E \subseteq \R^m$ of codimension at most $k-1$. Observe
also that for any $K$ and $L$, $c_k(K,L)$ is non-increasing in $k$. It
is well-known that the infimum in the definition is actually achieved~\cite{Pinkus-widths}.

\subsection{Composition of Differential Privacy}

\junk{In this section we introduce basic and facts from the
theory of differential privacy. For more details, proofs, and
references we refer to the book of Dwork and Roth~\cite{DP-book}.

A \emph{database} $\db$ of \emph{size} $n$ drawn from a
\emph{universe} $\uni$ is a multiset of $n$ elements (counted with
repetition) from $\uni$. Two databases are \emph{neighboring} if they
have symmetric difference of size at most $1$.
\akis{Should we define symmetric difference?}
We have the following basic definition.
\begin{defn}
  A randomized algorithm $\alg$ that takes as input a database and
  outputs a random element from the set $Y$ satisfies \emph{$(\epz,
  \delta)$-differential privacy} if for all neighboring databases $\db,
  \db^{'}$ and all measurable $S \subseteq Y$ we have that: 
  \[
  \Pr[\alg(\db) \in S] \leq e^\epz
  \Pr[\alg(\db^{'}) \in S] + \delta,
  \]
  where probabilities are taken with respect to the randomness of $\alg$.
\end{defn}} 

One of the most important properties of differential privacy is that it behaves
nicely under (adaptively) composing mechanisms.
\begin{lemma}[Composition]\label{lm:composition}
For randomized algorithms $\alg_1$ and  $\alg_2$ satisfying $(\epz_1, \delta_1)$- and $(\epz_2, \delta_2)$-differential privacy respectively, the algorithm $\alg(\db) = (\alg_1(\db), \alg_2(\alg_1(\db),\db))$ satisfies $(\epz_1 + \epz_2, \delta_1 + \delta_2)$-differential privacy.
\end{lemma}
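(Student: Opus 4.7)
The plan is to verify the definition of $(\epz_1+\epz_2, \delta_1+\delta_2)$-differential privacy directly: fix neighboring databases $\db, \db'$ and an arbitrary measurable $S \subseteq Y_1 \times Y_2$, and then disintegrate $\Pr[\alg(\db) \in S]$ over the output of $\alg_1$ so that the two privacy guarantees can be applied in sequence.

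First I would set up the decomposition. Write $S_{y_1} := \{y_2 : (y_1, y_2) \in S\}$ and let $\mu_\db, \mu_{\db'}$ denote the laws of $\alg_1(\db)$ and $\alg_1(\db')$. Defining
$$f(y_1) := \Pr[\alg_2(y_1, \db) \in S_{y_1}], \qquad g(y_1) := \Pr[\alg_2(y_1, \db') \in S_{y_1}],$$
Fubini (applied to $\alg_2$ as a Markov kernel) yields $\Pr[\alg(\db) \in S] = \int f\, d\mu_\db$ and $\Pr[\alg(\db') \in S] = \int g\, d\mu_{\db'}$. Because $\db$ and $\db'$ are neighboring, $(\epz_2, \delta_2)$-privacy of $\alg_2$ with the first coordinate $y_1$ held fixed gives the pointwise bound $f(y_1) \le e^{\epz_2} g(y_1) + \delta_2$, and integrating against $\mu_\db$ yields $\int f\, d\mu_\db \le e^{\epz_2}\int g\, d\mu_\db + \delta_2$.

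The next step is to replace $\mu_\db$ by $\mu_{\db'}$ using $(\epz_1, \delta_1)$-privacy of $\alg_1$. This is the only point of the argument that takes a little thought: the DP definition only bounds probabilities of measurable sets, whereas we need to compare integrals of the $[0,1]$-valued function $g$. I would bridge the gap with the layer-cake representation $g(y_1) = \int_0^1 \mathbf{1}_{\{g > t\}}(y_1)\, dt$, apply the DP inequality to each super-level set $\{g > t\} \subseteq Y_1$, and integrate over $t \in [0,1]$ to obtain $\int g\, d\mu_\db \le e^{\epz_1}\int g\, d\mu_{\db'} + \delta_1$. Chaining everything together gives the desired $\Pr[\alg(\db) \in S] \le e^{\epz_1+\epz_2}\Pr[\alg(\db') \in S] + \delta_1 + \delta_2$.

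There is no genuine obstacle: the proof is a routine double-integration argument. The items requiring care are (i) measurability of $f$ and $g$, which follows from treating $\alg_2$ as a Markov kernel from $Y_1 \times \{\text{databases}\}$ into $Y_2$ and from measurability of the sections $S_{y_1}$, and (ii) the ordering of the two applications of DP so that the pointwise $\delta_2$-slack is integrated against a probability measure (contributing $\delta_2$) while the $\delta_1$-slack enters through the layer-cake on the bounded function $g$ (contributing $\delta_1$), so that the two error budgets add rather than compound.
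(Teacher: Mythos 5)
The paper itself states this composition lemma without proof, citing it as folklore, so I am evaluating your argument on its own terms. The overall architecture (disintegrating over $\alg_1$'s output, pointwise DP for $\alg_2$, layer-cake for $\alg_1$) is the right idea, and your remarks on measurability are fine. But the chaining at the end does not give the claimed constant. You prove
\[
\int f\,d\mu_\db \le e^{\epz_2}\int g\,d\mu_\db + \delta_2
\quad\text{and}\quad
\int g\,d\mu_\db \le e^{\epz_1}\int g\,d\mu_{\db'} + \delta_1,
\]
and substituting the second into the first yields
\[
\int f\,d\mu_\db \le e^{\epz_1+\epz_2}\int g\,d\mu_{\db'} + e^{\epz_2}\delta_1 + \delta_2,
\]
so the $\delta_1$ budget is inflated by a factor of $e^{\epz_2}$. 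Reversing the order of the two applications just moves the problem: you get $\delta_1 + e^{\epz_1}\delta_2$ instead. Your parenthetical (ii) asserts that the ordering makes the errors ``add rather than compound,'' but as written neither ordering achieves that. This is exactly why basic composition for approximate DP is a little less trivial than it looks.

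The missing ingredient is a truncation step that decouples the two slacks before the second DP application is made. Since $f\le 1$ pointwise, strengthen the pointwise bound to $f(y_1) \le \min\bigl(1,\, e^{\epz_2}g(y_1)\bigr) + \delta_2$, and set $\tilde g := \min(1, e^{\epz_2}g)$. Then $\int f\,d\mu_\db \le \int \tilde g\,d\mu_\db + \delta_2$, and because $\tilde g$ is $[0,1]$-valued you may run the layer-cake argument on $\tilde g$ rather than on $g$ to obtain $\int \tilde g\,d\mu_\db \le e^{\epz_1}\int\tilde g\,d\mu_{\db'} + \delta_1$. Finally $\int \tilde g\,d\mu_{\db'} \le e^{\epz_2}\int g\,d\mu_{\db'}$, and now the $\delta_2$ was pulled out \emph{before} the $e^{\epz_1}$ was applied, so the three inequalities chain to $\int f\,d\mu_\db \le e^{\epz_1+\epz_2}\int g\,d\mu_{\db'} + \delta_1 + \delta_2$ as desired. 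With that one modification your proof is correct; without it, the proof only establishes $(\epz_1+\epz_2,\ e^{\epz_2}\delta_1 + \delta_2)$-differential privacy.
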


\subsection{Known Bounds}

In this section, we recall some known differentially private
mechanisms, with bounds on their sample complexity, as well as a lower
bound on the optimal sample complexity. We start with the lower bound:

\begin{theorem}[\cite{BunUV13,SteinkeU15}]\label{thm:tightcube}
For all $\epz = O(1)$, $2^{-\Omega(n)} \leq \delta \leq 1/n^{1 +
  \Omega{(1})}$ and $\alpha \leq 1/10$:

\begin{equation}
\scz{(Q^m,\alpha)} = \Omega{\left( \frac{\sqrt{m \log{1/\delta}}}{\alpha \epz} \right)}.
\end{equation}

\end{theorem}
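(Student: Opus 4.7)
The plan is to derive Theorem~\ref{thm:tightcube} directly from the fingerprinting-based lower bound of Bun--Ullman--Vadhan and Steinke--Ullman, by viewing the mean point problem on $Q^m$ as a rescaled version of the $1$-way marginals query release problem. The cited works establish that any $(\epz,\delta)$-differentially private mechanism answering the $m$ $1$-way marginals over the universe $\{0,1\}^m$ with average per-query error at most $\alpha'$ requires databases of size $\Omega(\sqrt{m\log(1/\delta)}/(\alpha' \epz))$, in the parameter regime $\epz = O(1)$, $2^{-\Omega(n)} \le \delta \le 1/n^{1+\Omega(1)}$, and $\alpha' \le 1/5$.

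The reduction is via the affine bijection $\varphi \maps \{0,1\}^m \to \{-1/\sqrt{m}, 1/\sqrt{m}\}^m$ given by $\varphi(x) = (2x - \mathbf{1})/\sqrt{m}$, whose image is exactly the vertex set of $Q^m$. Given a database $\db = \{x_1,\ldots,x_n\} \subseteq \{0,1\}^m$, let $\varphi(\db) = \{\varphi(x_1),\ldots,\varphi(x_n)\}$; this is a valid database of points in $Q^m$, with mean $\overline{\varphi(\db)} = (2\overline{\db} - \mathbf{1})/\sqrt{m}$. Suppose $\alg$ is any $(\epz,\delta)$-differentially private mechanism for the mean point problem on $Q^m$ with error at most $\alpha$. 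Then, by post-processing (which preserves differential privacy), the mechanism
\[
\alg'(\db) \;=\; \tfrac{\sqrt{m}}{2}\,\alg(\varphi(\db)) + \tfrac{1}{2}\mathbf{1}
\]
is an $(\epz,\delta)$-differentially private estimator for $\overline{\db}$, and its expected squared $L_2$ error is at most $(\sqrt{m}/2)^2 \alpha^2 = m\alpha^2/4$. Consequently, its average per-query error is at most $\alpha/2$.

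Applying the $1$-way marginals lower bound of \cite{BunUV13,SteinkeU15} to $\alg'$ with $\alpha' = \alpha/2 \le 1/20$ therefore yields
\[
n \;\geq\; \samp(Q^m,\alg,\alpha) \;=\; \Omega\!\left(\frac{\sqrt{m\log(1/\delta)}}{\alpha \epz}\right),
\]
as claimed. Taking the infimum over all $(\epz,\delta)$-private $\alg$ gives the desired bound on $\scz(Q^m,\alpha)$.

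There is essentially no obstacle beyond matching the precise form of the available lower bound for marginals: the cited papers state their bounds for worst-case per-query error, but the same fingerprinting argument (and in fact their statements) readily yields the average $L_2$ version used here, so the reduction is a one-line affine scaling. The parameter restrictions on $\epz, \delta$, and the lower bound range for $\alpha$, are inherited directly from the hypothesis of the $1$-way marginals lower bound.
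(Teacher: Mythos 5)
Your reduction is correct and matches the paper's intent: the paper states this theorem as a citation of the one-way marginals lower bound of Bun--Ullman--Vadhan and Steinke--Ullman, and the introduction alludes to exactly the implication you spell out, namely that the marginals lower bound transfers to the mean point problem on $Q^m$ via an affine rescaling. Your bijection $\varphi(x) = (2x-\mathbf{1})/\sqrt{m}$ preserves the neighboring relation so differential privacy is inherited, and the rescaling by $\sqrt{m}/2$ correctly converts mean-point $L_2$ error at most $\alpha$ into per-query marginal error at most $\alpha/2 \le 1/20$, which is within the range required by the cited fingerprinting lower bound.
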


Next we recall one of the most basic mechanisms in differential
privacy, the Gaussian mechanism. A proof of the privacy guarantee,
with the constants given below, can be found in~\cite{NTZ}.
\begin{theorem}[Gaussian Mechanism~\cite{DN,DworkN04,DMNS}]\label{thm:gauss-mech}
Let $\db = \{x_1, \ldots, x_n\}$ be such that $\forall i: \|x_i\|_2 \leq
\sigma$. If $w \sim N(0, \sigma(\epz, \delta)^2\sigma^2I_m)$, $\sigma(\epz,
\delta) = (0.5\sqrt{\epz} + \sqrt{2\log{(1/\delta)}})/\epz$ and $I_m \in
\mathbb{R}^{m\times m}$ is the identity matrix, then the algorithm $\alg_{GM}$
defined by $ \alg_{GM}(\db) = \avgdb + \frac{1}{n}w$ is $(\epz,
\delta)$-differentially private.
\end{theorem}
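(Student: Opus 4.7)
The plan is to follow the standard privacy-loss analysis of the Gaussian mechanism. First, I would bound the $\ell_2$-sensitivity of the query $f(\db) := \avgdb$: for any pair of neighboring databases $\db, \db'$, since the change in the (unnormalized) sum is a single vector of norm at most $\sigma$ (or two such vectors, depending on whether neighboring is interpreted as add/remove or replacement), the triangle inequality gives $\|\avgdb - \overline{\db'}\|_2 \leq S$ for some $S = O(\sigma/n)$. Call this quantity $S$ the sensitivity.

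Next, I would compute the privacy loss random variable. The output of $\alg_{GM}$ on input $\db$ is distributed as $N(\mu, \tau^2 I_m)$ with $\mu := \avgdb$ and $\tau = \sigma(\epz,\delta)\,\sigma/n$, and similarly on $\db'$ with mean $\mu' := \overline{\db'}$. Writing $\Delta := \mu - \mu'$ with $\|\Delta\|_2 \leq S$, a direct computation with Gaussian densities gives
\[
L(x) \;:=\; \log\frac{f_\db(x)}{f_{\db'}(x)} \;=\; \frac{\langle x - \mu,\, \Delta\rangle}{\tau^2} + \frac{\|\Delta\|_2^2}{2\tau^2}.
\]
Evaluated at a sample $x = \mu + z$ with $z \sim N(0, \tau^2 I_m)$, this is a one-dimensional Gaussian with mean $\|\Delta\|_2^2/(2\tau^2)$ and variance $\|\Delta\|_2^2/\tau^2$. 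Applying the standard one-sided Gaussian tail bound,
\[
\Pr[L(x) > \epz] \;\leq\; \exp\!\left(-\tfrac{1}{2}\left(\tfrac{\epz - \|\Delta\|_2^2/(2\tau^2)}{\|\Delta\|_2/\tau}\right)^2\right).
\]
Plugging in the worst case $\|\Delta\|_2 = S$ together with the explicit value of $\tau$, I would verify that the exponent is at most $-\log(1/\delta)$, so $\Pr[L(x) > \epz] \leq \delta$. The standard reduction from privacy-loss tail bounds to $(\epz,\delta)$-differential privacy (namely, for any measurable $T$, $\Pr[\alg_{GM}(\db) \in T] \leq e^\epz \Pr[\alg_{GM}(\db') \in T] + \Pr[L > \epz]$) then finishes the argument.

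The main obstacle is the calibration of constants. The privacy loss has a non-vanishing mean $\|\Delta\|_2^2/(2\tau^2)$ which shifts the tail upwards; simply setting $\tau = S\sqrt{2\log(1/\delta)}/\epz$ is therefore not quite enough. The unusual $0.5\sqrt{\epz}/\epz$ correction added to $\sqrt{2\log(1/\delta)}/\epz$ in the definition of $\sigma(\epz,\delta)$ is calibrated precisely to absorb this mean shift, so that the final Gaussian tail bound evaluates to $\delta$ in the worst case. Verifying this amounts to a careful but routine manipulation of the explicit formula for $\sigma(\epz,\delta)$; once it is done, the proof is complete.
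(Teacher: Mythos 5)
The paper itself does not prove this theorem; it explicitly defers to the proof in~\cite{NTZ}. Your outline is the standard privacy-loss calculation and is essentially what the cited reference does, so the approach is sound.

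Two points deserve to be pinned down. First, you take both $\alg_{GM}(\db)$ and $\alg_{GM}(\db')$ to have the same noise scale $\tau = \sigma(\epz,\delta)\sigma/n$, but the paper's neighboring relation is symmetric difference at most one (add/remove), so $\db$ and $\db'$ differ in size; if $n$ literally denotes $|\db|$, the two output Gaussians have \emph{different} covariances, and your one-dimensional privacy-loss identity does not apply as written. The clean fix is to analyze the unnormalized sum $M(\db) = \sum_{x \in \db} x + w$, whose $\ell_2$-sensitivity under add/remove is exactly $\sigma$ (the single added or removed vector), prove $(\epz,\delta)$-differential privacy for $M$, and observe that $\alg_{GM}$ is post-processing of $M$ by division by a fixed, publicly known $n$. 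This also resolves the add/remove-versus-replacement ambiguity you flag: the sensitivity is $\sigma$, not $2\sigma$, and the constant $\sigma(\epz,\delta)$ is calibrated for the former. Second, on the calibration itself: setting $s := \|\Delta\|/\tau \le 1/\sigma(\epz,\delta)$, the requirement $\epz/s - s/2 \ge \sqrt{2\log(1/\delta)}$ is tightest at $s = 1/\sigma(\epz,\delta)$, where it reduces to $0.5\sqrt{\epz} \ge 1/(2\sigma(\epz,\delta))$, i.e.~$\log(1/\delta) \ge \epz/8$. So the stated constant is valid precisely when $\delta \le e^{-\epz/8}$, which comfortably contains the regime $\epz = O(1)$, $\delta \le 1/n^{1+\Omega(1)}$ assumed everywhere in the paper, but is worth noting since the theorem statement itself does not record a condition on $\epz,\delta$.
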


\begin{corr}{\label{gaussupper}}
For any symmetric convex $K \subseteq B_2^m$: \[ \scz(K, \alpha) = O\left(\frac{\sqrt{m \log{1/\delta}}}{\alpha\epz} \right).\]
\end{corr}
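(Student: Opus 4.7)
The plan is to instantiate the Gaussian mechanism $\alg_{GM}$ from Theorem~\ref{thm:gauss-mech} and then estimate its error by a direct variance calculation. Since $K \subseteq B_2^m$, every point appearing in a database drawn from $K$ has Euclidean norm at most $1$, so Theorem~\ref{thm:gauss-mech} applies with $\sigma = 1$. This immediately furnishes a candidate $(\epz,\delta)$-differentially private mechanism for the mean point problem over $K$, namely $\alg_{GM}(\db) = \avgdb + \tfrac{1}{n} w$ with $w \sim N(0, \sigma(\epz,\delta)^2 I_m)$.

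Next I would evaluate the mean-squared error. Because $\alg_{GM}(\db) - \avgdb = w/n$ and the $m$ coordinates of $w$ are i.i.d.\ Gaussians of variance $\sigma(\epz,\delta)^2$, we have
\[
\E\|\alg_{GM}(\db) - \avgdb\|_2^2 = \frac{1}{n^2}\E\|w\|_2^2 = \frac{m\,\sigma(\epz,\delta)^2}{n^2},
\]
a quantity that does not depend on $\db$ at all. Taking square roots yields $\err(K,\alg_{GM},n) = \sigma(\epz,\delta)\sqrt{m}/n$, so the inequality $\err(K,\alg_{GM},n) \le \alpha$ is equivalent to $n \ge \sigma(\epz,\delta)\sqrt{m}/\alpha$. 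Hence $\samp(K,\alg_{GM},\alpha) \le \lceil \sigma(\epz,\delta)\sqrt{m}/\alpha\rceil$, and $\scz(K,\alpha)$ inherits the same upper bound up to constants.

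Finally, I would substitute the definition $\sigma(\epz,\delta) = (0.5\sqrt{\epz} + \sqrt{2\log(1/\delta)})/\epz$ and simplify. In the parameter regime of interest, $\delta$ is bounded away from $1$, so $\sqrt{\log(1/\delta)} = \Omega(1)$, which allows one to absorb the lower-order $0.5\sqrt{\epz}$ term into the $\sqrt{2\log(1/\delta)}$ term at the cost of an absolute constant. This gives $\sigma(\epz,\delta) = O(\sqrt{\log(1/\delta)}/\epz)$, and therefore
\[
\scz(K,\alpha) = O\!\left(\frac{\sqrt{m\log(1/\delta)}}{\alpha\,\epz}\right),
\]
as claimed.

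There is essentially no obstacle here: the corollary is a direct consequence of the rotational invariance of the Gaussian and the fact that the $\ell_2$-sensitivity of $\avgdb$ to changing a single record is at most $2/n$ on $K \subseteq B_2^m$, which is exactly the condition that makes the choice $\sigma = 1$ admissible in Theorem~\ref{thm:gauss-mech}.
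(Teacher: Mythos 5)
Your proof is correct and follows exactly the route the paper intends: the corollary is stated without an explicit proof as an immediate consequence of Theorem~\ref{thm:gauss-mech}, and your variance computation $\E\|w/n\|_2^2 = m\sigma(\epz,\delta)^2/n^2$ together with the simplification $\sigma(\epz,\delta) = O(\sqrt{\log(1/\delta)}/\epz)$ is precisely that argument. The closing remark about $2/n$ sensitivity is correct but superfluous, since the only hypothesis Theorem~\ref{thm:gauss-mech} asks you to check is $\|x_i\|_2 \le \sigma$, which $K \subseteq B_2^m$ gives directly with $\sigma = 1$.
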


In the rest of the paper we will use the notation $\sigma(\epz,
\delta) = \frac{0.5\sqrt{\epz} + \sqrt{2\log{(1/\delta)}}}{\epz}$ from the
theorem statement above. 

Finally, we also present the projection mechanism
from~\cite{NTZ}, which post-processes the output of the Gaussian
mechanism by projecting onto $K$.

\begin{theorem}[Projection Mechanism~\cite{NTZ,conjunctions}]\label{thm:general-proj}
  Let $K\subseteq B_2^m$ be a symmetric convex body, and
  define $\alg_{PM}$ to be the algorithm that, on input $\db
  = \{x_1, \ldots, x_n\} \subset K$, outputs:
  \[
  \hat{y} = \arg\min\{\|\hat{y} - \tilde{y}\|_2^2: \hat{y} \in K\}, 
  \]
  where $\tilde{y} = \avgdb + \frac{1}{n}w$, $w \sim N(0, \sigma(\epz,
  \delta)^2I_m)$. Then $\alg_{PM}$ satisfies $(\epz,
  \delta)$-differential privacy and has sample complexity:
  \[
  \samp(K, \alg_{PM}, \alpha) = O\left(\frac{\sigma(\epz, \delta)\ell^*(K)}{\alpha^2}\right).
  \]
  \junk{
  Moreover, there exists a constant $C > 0$ such that, for any $t >
  0$, we have
  \[
  \Pr\left[ \|\alg_{PM}(\db) - \avgdb\|_2^2 >
  (1+t)\frac{C\sigma(\epz,\delta)\ell^*(K)}{n}\right] \le e^{-t^2/C}.
  \]}
\end{theorem}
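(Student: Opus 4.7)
The plan is to prove privacy and accuracy separately. For privacy, I would note that the intermediate quantity $\tilde{y} = \avgdb + \tfrac{1}{n}w$ is exactly the output of the Gaussian mechanism of Theorem~\ref{thm:gauss-mech} applied to $\db$: since $K \subseteq B_2^m$, every $x_i \in \db$ has $\|x_i\|_2 \le 1$, so the required noise scale is $\sigma(\epz,\delta)$, which matches the distribution of $w$. Hence releasing $\tilde{y}$ is $(\epz,\delta)$-differentially private. Because $\hat{y}$ is obtained from $\tilde{y}$ by a deterministic function (Euclidean projection onto the fixed convex body $K$) that does not re-consult $\db$, post-processing guarantees that $\alg_{PM}$ is $(\epz,\delta)$-differentially private as well.

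For the error bound, the crucial step is to combine the first-order optimality of Euclidean projection with the symmetry of $K$ to replace a generic $\|w\|_2$-type estimate with one governed by the Gaussian mean width $\ell^*(K)$. Note first that $\avgdb \in K$ by convexity of $K$. The optimality condition for the projection $\hat{y}$ says that for every $z \in K$, $\langle \tilde{y} - \hat{y},\, z - \hat{y}\rangle \le 0$. Setting $z = \avgdb$ and expanding $\tilde{y} - \hat{y} = (\tilde{y} - \avgdb) + (\avgdb - \hat{y})$ together with $\tilde{y} - \avgdb = \tfrac{1}{n}w$ gives
\[
  \|\hat{y} - \avgdb\|_2^2 \;\le\; \Big\langle \tfrac{1}{n}w,\; \hat{y} - \avgdb\Big\rangle.
\]
Since $K = -K$ is symmetric and convex, $\hat{y} - \avgdb \in K + K = 2K$, so the right-hand side is at most $\tfrac{2}{n}h_K(w) = \tfrac{2}{n}\|w\|_{K^\circ}$.

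Taking expectations, and writing $w = \sigma(\epz,\delta)\,g$ with $g \sim N(0,I_m)$, I obtain
\[
  \E\|\hat{y} - \avgdb\|_2^2 \;\le\; \frac{2\sigma(\epz,\delta)}{n}\,\E\|g\|_{K^\circ} \;=\; \frac{2\sigma(\epz,\delta)\,\ell^*(K)}{n}.
\]
Thus $\err(K,\alg_{PM},n)^2 \le 2\sigma(\epz,\delta)\ell^*(K)/n$, and solving for the smallest $n$ making this at most $\alpha^2$ yields the claimed $\samp(K,\alg_{PM},\alpha) = O(\sigma(\epz,\delta)\ell^*(K)/\alpha^2)$. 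The only conceptually nontrivial ingredient is the projection-plus-symmetry argument that converts an ambient Euclidean error into a support-function estimate, thereby bringing $\ell^*(K)$ into play; the remaining pieces (post-processing for privacy, linearity of expectation to extract $\ell^*(K)$, and inverting $\err \le \alpha$) are routine.
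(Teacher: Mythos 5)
Your proof is correct, and it matches the standard analysis of the projection mechanism from the cited references \cite{NTZ,conjunctions}; the paper itself states this theorem as a known result and does not reprove it. The privacy step (Gaussian mechanism plus post-processing) is routine, and your accuracy argument — using the first-order optimality condition $\langle \tilde{y}-\hat{y}, z-\hat{y}\rangle \le 0$ with $z=\avgdb$, then bounding the resulting inner product by $\frac{2}{n}h_K(w)$ since $\hat{y}-\avgdb\in K+K=2K$ — is exactly the idea that brings the Gaussian mean width into the bound. (A common variant in the literature instead uses the near-optimality inequality $\|\hat{y}-\tilde{y}\|_2^2\le\|\avgdb-\tilde{y}\|_2^2$ and rearranges; that yields the same $O(\sigma(\epz,\delta)\ell^*(K)/\alpha^2)$ bound with a slightly worse constant, so your route is if anything a little tighter.) One small stylistic note: you should say explicitly that the bound $\E\|\hat y-\avgdb\|_2^2\le 2\sigma(\epz,\delta)\ell^*(K)/n$ is uniform over all databases $\db\subset K$ of size $n$, so it upper-bounds the supremum in the definition of $\err(K,\alg_{PM},n)$; but this is implicit in what you wrote.
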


\begin{corr}{\label{projupper}}
For any symmetric convex $K \subseteq B_2^m$: \[\scz(K, \alpha) = O\left(\frac{\sigma(\epz, \delta)\ell^*(K)}{\alpha^2}\right).\]
\end{corr}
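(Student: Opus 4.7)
The plan is direct: this corollary is an immediate consequence of Theorem~\ref{thm:general-proj} combined with the definition of $\scz(K,\alpha)$. Theorem~\ref{thm:general-proj} exhibits a specific mechanism, the projection mechanism $\alg_{PM}$, which (i) is $(\epz,\delta)$-differentially private on databases $\db \subset K \subseteq B_2^m$, and (ii) achieves sample complexity $\samp(K, \alg_{PM}, \alpha) = O(\sigma(\epz,\delta)\ell^*(K)/\alpha^2)$. Since $\scz(K,\alpha)$ is defined as the infimum of $\samp(K, \alg, \alpha)$ over all $(\epz,\delta)$-differentially private algorithms $\alg$, I would simply instantiate this infimum at $\alg = \alg_{PM}$, immediately obtaining the claimed bound.

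Two small compatibility points are worth verifying in passing. First, the databases considered in the definition of $\scz(K,\alpha)$ for the mean point problem are multisets of points drawn from $K$, which matches exactly the input hypothesis of Theorem~\ref{thm:general-proj}, so there is no mismatch between the input classes in the two statements. Second, $\alg_{PM}$ outputs an element of $K$ by construction, so it is a well-defined mechanism for the mean point problem, and its expected squared error bound from Theorem~\ref{thm:general-proj} translates, via the definitions of $\err$ and $\samp$, directly into the asserted sample complexity upper bound.

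There is no substantive obstacle at the level of this corollary; all the real work lives inside Theorem~\ref{thm:general-proj}, which I take as given. At a high level, the reason that theorem delivers an $\ell^*(K)$-dependent bound is that projecting the Gaussian-mechanism output $\tilde{y} = \avgdb + \frac{1}{n}w$ back onto $K$ can only decrease the $\ell_2$ distance to any point of $K$; the resulting error is controlled by $\sup_{x \in K - K}\langle x, w\rangle/n$, whose expectation is $O(\sigma(\epz,\delta)\ell^*(K)/n)$ by a symmetrization argument and the definition of the Gaussian mean width.
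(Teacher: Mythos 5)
Your proof is correct and matches the paper's (implicit) argument: the corollary is a direct instantiation of the infimum defining $\scz(K,\alpha)$ at the projection mechanism $\alg_{PM}$, whose sample complexity and privacy are supplied by Theorem~\ref{thm:general-proj}. The compatibility checks you note are appropriate but routine.
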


\section{Basic Properties of Sample Complexity \label{sec:bp}}

In this section, we prove some fundamental properties of sample
complexity that will be extensively used in later sections.

\begin{lemma}\label{lemma:subsetsc}
$L \subseteq K \Rightarrow \forall \alpha \in (0,1): \scz(L,\alpha) \leq \scz(K,\alpha)$.
\end{lemma}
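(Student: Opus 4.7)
The plan is to observe that this lemma follows from a simple monotonicity argument: any database consisting of points from $L$ is a fortiori a database consisting of points from $K$, so any differentially private mechanism that works for $K$ automatically works for $L$ with at least as good error.

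More precisely, I would fix an arbitrary $(\epz,\delta)$-differentially private algorithm $\alg$ and argue that $\samp(L, \alg, \alpha) \leq \samp(K, \alg, \alpha)$. The key step is to compare the two error quantities $\err(L, \alg, n)$ and $\err(K, \alg, n)$. Recall that
\[
\err(K, \alg, n) = \sup_{\db} \bigl(\E\|\alg(\db)-\avgdb\|_2^2\bigr)^{1/2},
\]
where the supremum is over databases of at most $n$ points from $K$. Since $L \subseteq K$, every database of at most $n$ points from $L$ is also a database of at most $n$ points from $K$, so the supremum defining $\err(L, \alg, n)$ is taken over a subset of those defining $\err(K, \alg, n)$. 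Therefore $\err(L, \alg, n) \leq \err(K, \alg, n)$, which immediately yields $\samp(L, \alg, \alpha) \leq \samp(K, \alg, \alpha)$ for every $\alpha$.

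The conclusion then follows by taking the infimum over all $(\epz,\delta)$-differentially private algorithms on both sides: since every such $\alg$ satisfies $\samp(L, \alg, \alpha) \leq \samp(K, \alg, \alpha)$, we get $\scz(L, \alpha) \leq \scz(K, \alpha)$. There is essentially no obstacle here; the only minor subtlety is to notice that privacy is a universal property of the algorithm itself and does not depend on the set from which databases are drawn, so $\alg$ remains $(\epz,\delta)$-differentially private when viewed as a mechanism for the mean point problem on $L$. The argument is really just set inclusion propagating through a supremum, and the same template will likely be reused for the projection-monotonicity lemma that the paper announces next.
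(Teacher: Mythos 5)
Your proposal is correct and follows essentially the same monotonicity argument as the paper: compare errors via the supremum over a larger set of databases, deduce $\samp(L,\alg,\alpha)\le\samp(K,\alg,\alpha)$, and pass to the infimum over private mechanisms. (The paper words this by fixing the optimal algorithm $\alg^*$ for $K$, but that is the same step, and your explicit remark that privacy is preserved when restricting the input domain to $L\subseteq K$ is a nice clarification the paper leaves implicit.)
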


\begin{proof}
  Observe first that for any algorithm $\alg$ and any $n$,
  $\err(L, \alg, n) \le \err(K, \alg, n)$, because $\err(K, \alg, n)$
  is a supremum over a larger set than $\err(L, \alg, n)$. This
  implies that $\samp(K,\alg, \alpha) \le \samp(L, \alg, \alpha)$ holds
  for any algorithm $\alg$, and, in particular, for the
  $(\epz,\delta)$-differentially private algorithm $\alg^*$ that achieves
  $\scz(K,\alpha)$. Then, we have:
  \[
  \scz(L, \alpha) \le \samp(L, \alg^*, \alpha) \le \samp(K,  \alg^*,
  \alpha) = \scz(K, \alpha),
  \]
  as desired.
\end{proof}

\begin{corr}\label{b2bound}
For all $\epz = O(1)$, $2^{-\Omega(n)} \leq \delta \leq 1/n^{1 +
  \Omega{(1})}$ and $\alpha \leq 1/10$: 
\[\scz(B_2^m, \alpha) = \Omega{\left( \frac{\sqrt{m \log{1/\delta}}}{\alpha \epz} \right)}.\]
\end{corr}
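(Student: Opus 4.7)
The plan is to combine Theorem~\ref{thm:tightcube}, which provides the tight lower bound for the hypercube $Q^m = [-1/\sqrt{m}, 1/\sqrt{m}]^m$, with the monotonicity of sample complexity under set inclusion established in Lemma~\ref{lemma:subsetsc}. Since $B_2^m$ is the largest body under consideration (the unit Euclidean ball), the only thing to verify is that $Q^m$ embeds into $B_2^m$.

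First I would check the containment $Q^m \subseteq B_2^m$. This is immediate from the normalization of $Q^m$: any $x \in Q^m$ satisfies $|x_i| \le 1/\sqrt{m}$ for every coordinate, so $\|x\|_2^2 = \sum_{i=1}^m x_i^2 \le m \cdot (1/m) = 1$, hence $x \in B_2^m$. This is precisely why $Q^m$ was defined with this normalization in the preliminaries.

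Next, I would apply Lemma~\ref{lemma:subsetsc} with $L = Q^m$ and $K = B_2^m$ to conclude $\scz(Q^m, \alpha) \le \scz(B_2^m, \alpha)$ for every $\alpha \in (0,1)$. Substituting the lower bound from Theorem~\ref{thm:tightcube} on the left-hand side then yields
\[
\scz(B_2^m, \alpha) \ge \scz(Q^m, \alpha) = \Omega\!\left(\frac{\sqrt{m\log(1/\delta)}}{\alpha\epz}\right),
\]
under the parameter regime $\epz = O(1)$, $2^{-\Omega(n)} \le \delta \le 1/n^{1 + \Omega(1)}$, and $\alpha \le 1/10$, which matches exactly the hypotheses of the corollary.

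There is no real obstacle here: the work has been done in establishing the cube lower bound and in proving the monotonicity lemma. This corollary just packages those two facts together and serves as a useful baseline for later sections, where lower bounds for more general bodies $K$ will similarly be obtained by locating a large scaled copy of a hypercube inside a projection or section of $K$.
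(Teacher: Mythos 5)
Your proof is correct and matches the paper's argument exactly: both use the containment $Q^m \subseteq B_2^m$ together with the monotonicity from Lemma~\ref{lemma:subsetsc} to transfer the lower bound of Theorem~\ref{thm:tightcube} to $B_2^m$. Your explicit verification of the containment is a nice touch that the paper leaves implicit.
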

\begin{proof}
Since $Q^m \subseteq B_2^m$, this follows directly from Lemma~\ref{lemma:subsetsc} and Lemma~\ref{thm:tightcube}.
\end{proof}

\begin{lemma}\label{lemma:linop2}
  For any $\alpha \in (0,1)$, any linear operator $T \maps
  \R^m \rightarrow \R^m$ and any symmetric convex body
  $K \subset \mathbb{R}^m$: $$ \scz(K, \alpha) \geq \scz( T(K), \alpha
  \cdot \|T\|).$$
\end{lemma}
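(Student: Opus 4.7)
The plan is to prove this by a direct reduction: given any $(\epz,\delta)$-differentially private mechanism $\alg$ for the mean point problem on $K$ with error at most $\alpha$ on databases of size $n$, I will build an $(\epz,\delta)$-differentially private mechanism $\alg'$ for the mean point problem on $T(K)$ with error at most $\alpha\|T\|$ on databases of size $n$. Choosing $\alg$ to achieve $n = \scz(K,\alpha)$ then yields $\scz(T(K), \alpha\|T\|) \le \scz(K,\alpha)$, which is the desired inequality.

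The key ingredient is a lifting of points from $T(K)$ back to $K$. Since $T \maps K \to T(K)$ is surjective, I fix once and for all a (measurable) section $\phi \maps T(K) \to K$ with $T\phi(y) = y$ for every $y \in T(K)$; for concreteness, take $\phi(y)$ to be the unique minimum-Euclidean-norm element of the compact convex set $K \cap T^{-1}(\{y\})$. Given an input database $\db' = \{y_1, \ldots, y_n\} \subseteq T(K)$, form the lifted database $\phi(\db') = \{\phi(y_1), \ldots, \phi(y_n)\} \subseteq K$ and define $\alg'(\db') = T(\alg(\phi(\db')))$. Because $\phi$ is applied element-wise, neighboring databases in $T(K)$ lift to neighboring databases in $K$; thus $\alg \circ \phi$ inherits $(\epz, \delta)$-differential privacy from $\alg$, and composing with the deterministic linear map $T$ is a post-processing step, which preserves privacy.

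For the error, linearity of $T$ gives $T(\overline{\phi(\db')}) = \overline{\db'}$, so
\[
\|\alg'(\db') - \overline{\db'}\|_2 \;=\; \|T(\alg(\phi(\db'))) - T(\overline{\phi(\db')})\|_2 \;\le\; \|T\| \cdot \|\alg(\phi(\db')) - \overline{\phi(\db')}\|_2.
\]
Squaring, taking expectation, and taking the supremum over databases $\db'$ in $T(K)$ of size at most $n$ yields $\err(T(K), \alg', n) \le \|T\| \cdot \err(K, \alg, n) \le \|T\|\alpha$, completing the reduction. I do not foresee any substantial obstacle: the only real technicality is ensuring measurability of $\phi$ so that $\alg \circ \phi$ is well-defined as a randomized algorithm, which is handled by the explicit minimum-norm selection above (it is in fact continuous on $T(K)$).
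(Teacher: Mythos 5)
Your proposal is correct and matches the paper's proof essentially step for step: both fix a section $f\colon T(K)\to K$ with $f(x')\in T^{-1}(x')\cap K$, run $\alg$ on the lifted database, post-process the output by $T$, argue privacy via elementwise lifting of neighbors plus composition, and bound the error using the operator norm. The only difference is that you make the section explicit (minimum-norm selection) to address measurability, a technicality the paper leaves implicit.
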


\begin{proof}
  Let $\alg$ be an $(\epz, \delta)$-differentially private algorithm
  that achieves $\scz(K, \alpha)$. Fix a function $f\maps T(K) \to K$
  so that for every $x' \in T(K)$, $f(x') \in T^{-1}(x') \cap K$. We define a
  new algorithm $ \alg'$ that takes as input $\db' = \{x'_1, ...,
  x'_n\} \subset T(K)$ and outputs $\alg'(\db') = T(\alg(\db))$, where
  $\db = f(\db') =
  \{f(x'_1), \ldots, f(x'_n)\}$.
  We claim that $\alg'$ is
  $(\epz, \delta)$-differentially private and that $\err(T(K),
  \alg', n) \le \|T\|\cdot \err(K, \alg, n)$ holds for
  every $n$. This claim is sufficient to prove the lemma, because it
  implies:
  \[\scz(T(K), \alpha\cdot\|T\|)
  \le \samp(T(K), \alg', \alpha\cdot \|T\|) \le
  \samp(K,\alg, \alpha) = \scz(K, \alpha).\]
  To show the claim, first observe that, by linearity,
  $\overline{\db'} = T(\avgdb)$. We get: 
  \begin{align*}
  \E \|\alg'(\db') - \overline{\db'}\|^2_2 &=
  \E \|T(\alg(\db)) - T(\avgdb) \|^2_2 \\
  &= \E\|T( \alg(\db) - \avgdb)\|^2_2 \\
  &\leq \| T \|^2 \cdot \E\|\alg(\db) - \avgdb\|^2_2 \le \| T \|^2 \err(K,\alg,n)^2,
  \end{align*}
  where the first inequality follows by the definition of the operator
  norm. Since this holds for arbitrary $n$ and $\db' \subset
  T(K)$ of size $n$, it implies the claim on the error bound of
  $\alg'$.  It remains to show that
  $\alg'$ is $(\epz, \delta)$-differentially
  private. Note that for every two neighboring databases
  $\db_1'$ and $\db_2'$ of points in $T(K)$,the
  corresponding databases $\db_1 = f(\db_1')$ and $\db_2 = f(\db_2')$ of points in $K$ are also
  neighboring. Then $\alg(\db) = \alg(f(\db'))$ is $(\epz, \delta)$-differentially
  private as a function of $\db'$, and the privacy of
  $\alg'$ follows from Lemma \ref{lm:composition}.
\end{proof}

\begin{corr}\label{scaling}
  For any $t > 0$: \[ \scz(tK, t\alpha) = \scz(K, \alpha).\]
\end{corr}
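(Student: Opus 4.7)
The plan is to obtain the corollary as an immediate two-way application of Lemma~\ref{lemma:linop2} with scalar multiples of the identity operator. First I would apply Lemma~\ref{lemma:linop2} to $K$ with the linear operator $T = t I_m$. Since $\|tI_m\| = t$ and $T(K) = tK$, the lemma yields
\[
\scz(K,\alpha) \;\geq\; \scz(tK,\, t\alpha).
\]

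For the reverse inequality, I would apply Lemma~\ref{lemma:linop2} a second time, now to the symmetric convex body $tK$ with operator $T' = (1/t) I_m$. This gives $T'(tK) = K$ and $\|T'\| = 1/t$, so the lemma produces
\[
\scz(tK,\, t\alpha) \;\geq\; \scz\bigl(K,\, (t\alpha)\cdot (1/t)\bigr) \;=\; \scz(K,\alpha).
\]
Chaining the two inequalities yields $\scz(tK, t\alpha) = \scz(K,\alpha)$.

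The only subtlety is that Lemma~\ref{lemma:linop2} is stated for $\alpha \in (0,1)$, whereas depending on $t$ either $\alpha$ or $t\alpha$ may lie outside this range. This is the one minor obstacle I foresee, but it is not real: the proof of Lemma~\ref{lemma:linop2} nowhere uses that the error parameter is bounded by $1$, only that the linear map $T$ contracts the mean-squared error by at most $\|T\|^2$; the composition argument for privacy is also independent of the size of $\alpha$. Alternatively, one could observe that the corollary already follows from a direct rescaling argument: given a private algorithm $\alg$ for $K$ with error $\alpha$, the algorithm $\alg'(\db') = t\alg(\db'/t)$ is private for databases in $tK$ with error $t\alpha$, and the same rescaling with factor $1/t$ reverses the direction. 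Thus the corollary follows with no additional hypothesis on $t$ or $\alpha$ beyond $t > 0$.
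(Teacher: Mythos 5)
Your proof is correct and follows essentially the same route as the paper: apply Lemma~\ref{lemma:linop2} once with $T = tI_m$ to get $\scz(tK, t\alpha) \le \scz(K,\alpha)$, then apply it again to $tK$ with scaling $1/t$ to get the reverse inequality. Your observation that the hypothesis $\alpha \in (0,1)$ in Lemma~\ref{lemma:linop2} is not actually used in its proof is a valid point that the paper glosses over, but it does not change the argument.
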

\begin{proof}
  Taking $T = tI_m$ in Lemma \ref{lemma:linop2}, where $I_m$ is the
  identity on $\R^m$, the lemma implies $\scz(tK, t\alpha) \le \scz(K,
  \alpha)$. Since this inequality holds for any $t$ and $K$, we may
  apply it to $K' = tK$ and $t' = 1/t$, and we get $\scz(K, \alpha) =
  \scz((1/t)tK, (1/t)t\alpha) \le \scz(tK, t\alpha)$. 
\end{proof}

Since for any subspace $E$ of $\R^m$, the corresponding orthogonal
projection $\Pi_E$ has operator norm $1$, we also immediately get the
following corollary of Lemma \ref{lemma:linop2}:
\begin{corr}{\label{orthproj}}
For any subspace $E$: \[\scz(K, \alpha) \geq \scz( \Pi_E(K), \alpha).\]
\end{corr}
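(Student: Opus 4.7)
The plan is to derive this corollary as a direct specialization of Lemma~\ref{lemma:linop2} with the linear operator $T$ chosen to be $\Pi_E$, the orthogonal projection onto $E$. The key fact I need to invoke is that orthogonal projections are contractions, i.e., $\|\Pi_E\| \le 1$: for any $x \in \R^m$, by the Pythagorean identity $\|x\|_2^2 = \|\Pi_E(x)\|_2^2 + \|x - \Pi_E(x)\|_2^2$, so $\|\Pi_E(x)\|_2 \le \|x\|_2$. (In fact $\|\Pi_E\| = 1$ whenever $E \neq \{0\}$, since any unit vector in $E$ is fixed by $\Pi_E$; the degenerate case $E = \{0\}$ gives $\Pi_E(K) = \{0\}$, for which $\scz(\Pi_E(K),\alpha) = 0$ trivially, since the algorithm that ignores the input and outputs $0$ has zero error.)

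With $T = \Pi_E$ and $\|T\| \le 1$, Lemma~\ref{lemma:linop2} yields
\[
\scz(K, \alpha) \;\ge\; \scz\bigl(\Pi_E(K),\, \alpha \cdot \|\Pi_E\|\bigr).
\]
If $\|\Pi_E\| = 1$, the right-hand side is exactly $\scz(\Pi_E(K), \alpha)$ and we are done. If $\|\Pi_E\| < 1$ (i.e.~$E = \{0\}$), the right-hand side is $\scz(\{0\}, 0) = 0$ and the inequality is trivial. More generally, one can also use the fact that $\scz(L, \cdot)$ is non-increasing in its second argument (a mechanism that achieves error $\alpha'$ certainly achieves any larger error $\alpha \ge \alpha'$), so $\scz(\Pi_E(K), \alpha \|\Pi_E\|) \ge \scz(\Pi_E(K), \alpha)$ whenever $\|\Pi_E\| \le 1$.

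There is no substantive obstacle here: the corollary is a one-line consequence of the already-proved Lemma~\ref{lemma:linop2} together with the elementary bound on the operator norm of an orthogonal projection. The only minor care required is the degenerate case $E = \{0\}$, which is handled separately by the triviality argument above.
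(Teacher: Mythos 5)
Your proof is correct and follows the paper's own approach exactly: specialize Lemma~\ref{lemma:linop2} to $T = \Pi_E$ and use that the operator norm of an orthogonal projection is $1$. The extra care you take with the degenerate case $E = \{0\}$ is fine but unnecessary for the paper's purposes.
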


In the next theorem, we combine the lower bound in \autoref{b2bound} and the
properties we proved above in order to give a lower bound on the
sample complexity of an arbitrary symmetric convex body $K$ in terms
of its geometric properties. In the following sections we will relate
this geometric lower bound to the mean Gaussian width of $K$.

\begin{theorem}[Geometric Lower Bound]{\label{modularlb}}
  For all $\epz = O(1)$, $2^{-\Omega(n)} \leq \delta \leq 1/n^{1 +
    \Omega{(1})}$, any convex symmetric body $K \subseteq
  \mathbb{R}^m$, any $1 \le k \le m$ and any $\alpha \le
  1/(10c_{k}(K^\circ))$: 
  
  \[
  \scz(K, \alpha) 
  = \Omega \left(
    \frac{\sqrt{\log{1/\delta}}}{\alpha \epz} 
    \cdot \frac{\sqrt{m - k + 1}}{c_k(K^\circ)}
  \right).
  \]
  \end{theorem}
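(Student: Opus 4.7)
The plan is to exploit the duality between sections of $K^\circ$ and projections of $K$ in order to find a large Euclidean ball sitting inside a projection of $K$, and then to reduce to the known lower bound for the Euclidean ball (\autoref{b2bound}). The quantity $c_k(K^\circ)$ is precisely the measure of how large a Euclidean section of $K^\circ$ must be in the ``thinnest'' subspace of codimension $k-1$, which by polar duality controls how large a Euclidean ball fits inside the corresponding projection of $K$.

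Concretely, I would start by choosing a subspace $E \subseteq \R^m$ of codimension at most $k-1$ (hence of dimension $m' \ge m-k+1$) that achieves the infimum defining $c_k(K^\circ)$, so that $K^\circ \cap E \subseteq c_k(K^\circ)(B_2^m \cap E)$, i.e.~$\diam(K^\circ \cap E) \le c_k(K^\circ)$. By \autoref{dualequivalence} with $r = 1/c_k(K^\circ)$, this is equivalent to
\[
\Pi_E\Big(\tfrac{1}{c_k(K^\circ)} B_2^m\Big) \subseteq \Pi_E(K),
\]
and since $\Pi_E(B_2^m) = B_2^m \cap E$ for orthogonal projection of the unit Euclidean ball, the left-hand side is just $\tfrac{1}{c_k(K^\circ)}(B_2^m \cap E)$.

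Next I would chain together the basic properties of sample complexity already established. Applying \autoref{orthproj}, \autoref{lemma:subsetsc} to the inclusion above, and then \autoref{scaling} with $t = c_k(K^\circ)$, I get
\[
\scz(K, \alpha) \;\ge\; \scz(\Pi_E(K), \alpha) \;\ge\; \scz\!\left(\tfrac{1}{c_k(K^\circ)}(B_2^m \cap E),\, \alpha\right) \;=\; \scz\!\left(B_2^m \cap E,\, c_k(K^\circ)\cdot\alpha\right).
\]
The body $B_2^m \cap E$ is isometric to the unit Euclidean ball $B_2^{m'}$ in a space of dimension $m' \ge m-k+1$, and sample complexity is preserved under isometries (one can transfer any private mechanism through the isometric embedding $\R^{m'} \to E \hookrightarrow \R^m$ without affecting either error or privacy), so $\scz(B_2^m \cap E, \alpha') = \scz(B_2^{m'}, \alpha')$ for any $\alpha'$.

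Finally, the hypothesis $\alpha \le 1/(10\,c_k(K^\circ))$ ensures that $\alpha' := c_k(K^\circ)\cdot\alpha \le 1/10$, so I can apply \autoref{b2bound} in dimension $m'$ to conclude
\[
\scz(B_2^{m'}, c_k(K^\circ)\alpha) \;=\; \Omega\!\left(\frac{\sqrt{m' \log(1/\delta)}}{c_k(K^\circ)\,\alpha\,\epz}\right) \;=\; \Omega\!\left(\frac{\sqrt{\log(1/\delta)}}{\alpha\,\epz}\cdot\frac{\sqrt{m-k+1}}{c_k(K^\circ)}\right),
\]
which is the claimed bound. There is no serious obstacle here since each ingredient has been set up in advance; the only mild technical point is confirming that the infimum in the Gelfand width is actually attained (which the paper has already noted) so that we really can select the subspace $E$, and that the $B_2^m \cap E$ lower bound indeed reduces cleanly to the known lower bound for a Euclidean ball in its own dimension.
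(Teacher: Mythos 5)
Your proposal is correct and follows essentially the same route as the paper's proof: pick the subspace $E$ attaining the Gelfand width, pass to $\Pi_E(K)$ via \autoref{dualequivalence}, chain \autoref{orthproj}, \autoref{lemma:subsetsc}, and \autoref{scaling}, identify $\Pi_E(B_2^m)$ with a Euclidean ball of dimension $\dim E \ge m-k+1$, and invoke \autoref{b2bound}. No meaningful differences.
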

\begin{proof}
  Let us fix $k$ and let $E$ be the subspace that achieves
  $c_{k}(K^\circ)$, i.e.~$\diam(\Pi_E(K^\circ)) = c_{k}(K^\circ)$
  and $d_E = \dim E \geq m - k + 1$.  By Corollary~\ref{dualequivalence},
  we have $c_{k}(K^\circ)^{-1} \Pi_E(B_2^m) \subseteq \Pi_E(K).$
  Applying Corollary~\ref{orthproj}, Lemma \ref{lemma:subsetsc}, and
  Corollary~\ref{scaling} in sequence, we get:
  \begin{align*}
  \scz(K,\alpha) \geq \scz(\Pi_E(K), \alpha)  
  &\geq \scz(c_{k}(K^\circ)^{-1} \Pi_E(B^m_2), \alpha)\\
  &= \scz(\Pi(B^{m}_2), \alpha c_{k}(K^\circ) ).
  \end{align*}
  Notice that $\Pi_E(B^m_2)$ is the Euclidean unit ball in the
  subspace $E$, and, therefore: \[\scz(\Pi(B^{m}_2), \alpha
  c_{k}(K^\circ) ) = \scz(B^{d_E}_2, \alpha c_{k}(K^\circ)
  ).\] Finally, by Corollary~\ref{b2bound}, we get the following lower bound,
  as long as $\alpha c_{k}(K^\circ) \le 1/10$:
  \[
  \scz(B^{d_E}_2, \alpha c_{k}(K^\circ)) 
  =
  \Omega{\left(\frac{\sqrt{d_E\log{1/\delta}}}{\alpha c_{k}(K^\circ)
        \epz} \right)} 
  =
  \Omega{\left( \frac{\sqrt{\log{1/\delta}}}{\alpha \epz} \cdot
      \frac{\sqrt{m - k + 1}}{c_{k}(K^\circ)} \right)}.
  \]
  Combining the inequalities completes the proof.
\end{proof}

\section{Optimality of the Gaussian Mechanism}
\label{sect:gauss}

In this section, we present the result that the Gaussian mechanism is
optimal, up to constant factors, when $K\subseteq B_2^m$ is
sufficiently large. More specifically, if the Gaussian mean width of $K$ is
asymptotically maximal, then we can get a tight lower bound on the
sample complexity of the Gaussian mechanism. This is summarized in the
theorem below.

\begin{theorem}\label{thm:gausslb}

  For all $\epz < O(1)$, $2^{-\Omega(n)} \leq \delta \leq 1/n^{1 +
    \Omega{(1})}$,  sufficiently small constant $\alpha$, and
   any symmetric convex body $K \subseteq B_2^m$, if \[ \ell^*(K) =
  \Omega(\sqrt{m}),\] then:
  \[
  \scz(K, \alpha) =  \Theta{\left( \frac{\sqrt{m \log{1/\delta}}}{\alpha \epz} \right)},
  \]
  and $\scz(K,\alpha)$ is achieved, up to constants, by the Gaussian
  mechanism.
\end{theorem}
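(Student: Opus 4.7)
}
The upper bound is immediate from the Gaussian mechanism (Corollary~\ref{gaussupper}), so the work is in the matching lower bound. The plan is to feed the geometric lower bound of Theorem~\ref{modularlb} with a good bound on the Gelfand width $c_k(K^\circ)$ for some $k$ with $m-k+1 = \Theta(m)$. Concretely, I want to produce a subspace $E$ of dimension $\Theta(m)$ such that
\[
\Pi_E(K) \;\supseteq\; c\,(B_2^m \cap E)
\]
for an absolute constant $c>0$. By Corollary~\ref{dualequivalence}, this is the same as exhibiting a subspace of codimension $k-1 = m - \dim E$ on which $K^\circ$ has diameter $\le 1/c$, i.e.\ $c_k(K^\circ) \le 1/c$ with $m-k+1 = \dim E = \Theta(m)$.

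The tool for this is Dvoretzky's theorem in the form of the Dvoretzky--Milman criterion: since $K\subseteq B_2^m$, we have $b(K) = \max_{x\in K}\|x\|_2 \le 1$, and the critical Dvoretzky dimension of $K$ is of the order
\[
k^*(K) \;\asymp\; m\!\left(\frac{\ell^*(K)}{\sqrt{m}\,b(K)}\right)^{\!2} \;\ge\; c_1\,\frac{\ell^*(K)^2}{m}.
\]
The hypothesis $\ell^*(K)=\Omega(\sqrt{m})$ gives $k^*(K) = \Omega(m)$, so there is a subspace $E$ of dimension $k_0 = \Theta(m)$ on which $\Pi_E(K)$ is, up to constants, a Euclidean ball of radius $\asymp \ell^*(K)/\sqrt{m} = \Omega(1)$. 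In particular, $\Pi_E(K) \supseteq c\,(B_2^m\cap E)$ for an absolute constant $c>0$.

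With this subspace in hand, the lower bound chains together three facts already proved in Section~\ref{sec:bp}. Using Corollary~\ref{orthproj}, then Lemma~\ref{lemma:subsetsc} with the containment above, then Corollary~\ref{scaling}:
\[
\scz(K,\alpha)\;\ge\;\scz(\Pi_E(K),\alpha)\;\ge\;\scz\!\left(c(B_2^m\cap E),\alpha\right)\;=\;\scz\!\left(B_2^{k_0},\alpha/c\right).
\]
Since $\alpha$ is a sufficiently small constant, $\alpha/c\le 1/10$, and Corollary~\ref{b2bound} gives
\[
\scz\!\left(B_2^{k_0},\alpha/c\right)\;=\;\Omega\!\left(\frac{\sqrt{k_0\log(1/\delta)}}{(\alpha/c)\,\epz}\right)\;=\;\Omega\!\left(\frac{\sqrt{m\log(1/\delta)}}{\alpha\,\epz}\right),
\]
using $k_0 = \Theta(m)$. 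Combined with the Gaussian-mechanism upper bound, this yields the claimed $\Theta(\sqrt{m\log(1/\delta)}/(\alpha\epz))$. Equivalently, one can phrase the whole argument as directly invoking Theorem~\ref{modularlb} with $k = m-k_0+1$ and the estimate $c_k(K^\circ)\le 1/c$ coming from Dvoretzky.

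The only genuine step is the Dvoretzky application: the rest is bookkeeping with the operator-norm, scaling, and subset lemmas of Section~\ref{sec:bp}. The main subtlety is that Dvoretzky must be applied in the \emph{given} position of $K$ (not after an arbitrary linear image), which is legitimate precisely because $K\subseteq B_2^m$ controls the diameter factor $b(K)$, so that $\ell^*(K) = \Omega(\sqrt{m})$ translates to Dvoretzky dimension $\Omega(m)$; without the assumption $K\subseteq B_2^m$ the same conclusion need not hold.
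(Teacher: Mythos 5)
Your proof is correct and takes essentially the same approach as the paper: the paper applies Dvoretzky's criterion to $K^\circ$ (which contains $B_2^m$ since $K\subseteq B_2^m$) to find an $\Omega(m)$-dimensional subspace on which $K^\circ$ has $O(1)$ diameter, then dualizes via Corollary~\ref{dualequivalence} to get the constant-radius ball inside $\Pi_E(K)$ and feeds this into Theorem~\ref{modularlb}. You invoke the equivalent projection form of the Dvoretzky--Milman theorem directly on $K$, which just bypasses the explicit dualization step; the geometric content and the resulting lower bound are identical.
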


\junk{This result is not used in the rest of paper, and is included because
it provides a rare example of a general geometric condition under
which the Gaussian mechanism has optimal sample complexity up to
\emph{constant} factors.}

By \autoref{gaussupper} we have an upper bound for the Gaussian
mechanism defined previously. To prove its optimality, we use a
classical result from convex geometry, known as Dvoretzky's criterion,
to show a matching lower bound for the sample complexity. This result
relates the existence of a nearly-spherical section of a given
convex body to the Gaussian mean  norm. It was a key ingredient in
Milman's probabilistic proof of Dvoretzky's theorem: see Matou\v{s}ek's
book~\cite{Matousek-DGeom} for an exposition. 

\junk{In order to relate the
Gaussian mean width to metric entropy, we use Sudakov's minoration
inequality (see~\cite{Ledoux-Talagrand} for a proof).}

\begin{theorem}[\cite{Milman71}; Dvoretzky's Criterion]
  For every symmetric convex body $K \subseteq \R^m$ such that $ B^m_2
  \subseteq K$, and every $\beta < 1$, there exists a constant
  $c(\beta)$ and a subspace $E$ with dimension $\dim{E} \geq c(\beta)
  \ell(K)^2$ for which:
  \[
  (1 - \beta) \frac{\ell(K)}{\sqrt{m}} B_2^m \cap E
  \subseteq K \cap E
  \subseteq (1 + \beta) \frac{\ell^*(K)}{\sqrt{m}} B_2^m \cap E.
  \]
\end{theorem}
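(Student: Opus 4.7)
My plan is to produce the desired subspace $E$ as a random $k$-dimensional subspace of $\R^m$, with $k = c(\beta)\ell(K)^2$ for a sufficiently small constant $c(\beta)$, parametrized as the image of a Gaussian matrix $G\maps\R^k\to\R^m$ with i.i.d.\ $N(0,1)$ entries. I establish the inner and outer inclusions by distinct mechanisms, each holding with probability at least $2/3$ for a random $E$; a union bound then yields a specific $E$ on which both hold.

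For the \emph{inner inclusion}, the crucial observation is that $B_2^m\subseteq K$ forces $\|\cdot\|_K\le\|\cdot\|_2$, so $\|\cdot\|_K$ is $1$-Lipschitz in Euclidean norm and $\diam(K^\circ)\le 1$. Writing $\|Gu\|_K=\sup_{v\in K^\circ}\langle Gu,v\rangle$ as the supremum of a Gaussian bilinear form, Chevet's inequality gives
\[
  \E\max_{u\in S^{k-1}}\|Gu\|_K \;\le\; \ell(K) + \sqrt{k}\cdot\diam(K^\circ) \;\le\; \ell(K)+\sqrt{k},
\]
since the Gaussian mean width of $K^\circ$ equals $\ell(K)$ and that of $S^{k-1}$ is at most $\sqrt k$. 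Combined with Borell's Gaussian concentration applied to this maximum (a $\diam(K^\circ)$-Lipschitz function of $G$) and to $\min_u\|Gu\|_2$ around $\sqrt{m}$, taking $c(\beta)$ small enough ensures $\|Gu\|_K/\|Gu\|_2\le(1+\beta)\ell(K)/\sqrt{m}$ uniformly over $u\in S^{k-1}$ with probability $\ge 2/3$. For $y=Gu/\|Gu\|_2\in E\cap S^{m-1}$ this gives $\|y\|_K\le(1+\beta)\ell(K)/\sqrt{m}\le \sqrt{m}/((1-\beta)\ell(K))$, where the second inequality uses the elementary bound $\ell(K)^2\le m$ (from $\|g\|_K\le\|g\|_2$); this uniform upper bound on the $K$-norm over $E\cap S^{m-1}$ is equivalent to the inner inclusion.

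For the \emph{outer inclusion}, I invoke Milman's low-$M^*$ estimate: for a random $k$-dimensional subspace $E$,
\[
  \diam(K\cap E) \;\le\; (1+o(1))\,\ell^*(K)/\sqrt{m-k}
\]
with probability $\ge 2/3$, where $o(1)\to 0$ as $m-k\to\infty$ (the sharp form being due to Pajor--Tomczak-Jaegermann). Using $\ell(K)^2\le m$, we have $k = c(\beta)\ell(K)^2\le c(\beta)m$, so $m-k\ge(1-c(\beta))m$; for $c(\beta)$ small enough, $(1+o(1))/\sqrt{1-c(\beta)}\le 1+\beta$, giving $\diam(K\cap E)\le(1+\beta)\ell^*(K)/\sqrt{m}$, which is the outer inclusion. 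The main obstacle is securing the sharp constant $(1+\beta)$: a low-$M^*$ bound with universal constant $C>1+\beta$ would force $c(\beta)$ negative, so the argument genuinely rests on the refined Gaussian-process-comparison form of the estimate rather than on its soft form with an unspecified universal constant.
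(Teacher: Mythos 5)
The paper does not actually prove this statement: it is quoted from Milman's work with a pointer to Matou\v{s}ek's book for an exposition, and is later used only through its outer inclusion with $\beta = 1/2$, where the constant is irrelevant. So the comparison is with the standard literature proof rather than with anything in the paper. Your route is a legitimate Gaussian-analytic one, but two points deserve comment. First, under the hypothesis $B_2^m \subseteq K$ the inner inclusion is \emph{trivial}: $\|y\|_K \le \|y\|_2$ and $\ell(K) = \E\|g\|_K \le \E\|g\|_2 \le \sqrt{m}$ give $(1-\beta)\tfrac{\ell(K)}{\sqrt{m}} B_2^m \cap E \subseteq B_2^m \cap E \subseteq K \cap E$ for \emph{every} subspace $E$. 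Your Chevet-plus-concentration argument proves the genuinely stronger two-sided estimate $\|y\|_K \le (1+\beta)\ell(K)/\sqrt{m}$ --- the actual content of Milman's theorem --- and then discards it in favour of this trivial consequence. If you keep that route, you must also make explicit that the concentration step requires $\ell(K) \ge C(\beta)$: the additive fluctuation of $\max_u \|Gu\|_K$ is of order $1$ and has to be dominated by $\beta\,\ell(K)$, and in the complementary regime $c(\beta)\ell(K)^2 < 1$ the required dimension drops to $1$ and needs separate treatment.

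Second, the entire content of the stated theorem is the outer inclusion, and there your proof consists of invoking the low $M^*$ estimate in its sharp, constant-$(1+o(1))$ form (Gordon's escape through a mesh) rather than the constant-$C$ form the paper records as Theorem~\ref{thm:lowmstar}. You correctly identify that the soft form cannot work, but the $o(1)$ you leave dangling is a genuine loose end, not a formality: it tends to $0$ only as the codimension $m-k \to \infty$, so your argument delivers the constant $1+\beta$ only for $m \gtrsim 1/\beta^2$. For small $m$ and small $\beta$ the statement as transcribed is in fact false --- take $m=1$ and $K = [-a,a]$ with $a \ge 1$: the outer inclusion demands $(1+\beta)\sqrt{2/\pi} \ge 1$, which fails for all $\beta < 1/4$ --- because the correct normalization involves $\E\|g\|_2$ rather than $\sqrt{m}$. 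So your proof is correct essentially in the regime where the theorem itself is correct, but you should quantify the $o(1)$ and state the resulting restriction on $m$ (or replace $\sqrt{m}$ by $\E\|g\|_2$) instead of asserting the inclusion unconditionally.
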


\begin{proof}[Proof of \autoref{thm:gausslb}]
  Given the matching upper bound on sample complexity in
  \autoref{gaussupper}, it suffices to show the equivalent lower
  bound, namely that: 
  \[
  \scz(K, \alpha) = 
  \Omega{\left( \frac{\sqrt{m \log{1/\delta}}}{\alpha \epz} \right)}.
  \]
  To this end, we will show that there exists a $k \le (1-c)m + 1$, for
  an absolute constant $c$, so that $c_k(K^\circ) = O(1)$. Then the
  lower bound will follow directly from \autoref{modularlb}.

  We will prove the claim above by applying Dvoretzky's criterion to
  $K^\circ$. By \autoref{polarsubset}, $K \subseteq B_2^m \Rightarrow
  B_2^m \subseteq K^\circ$. We can then apply Dvoretzky's criterion
  with $\beta = 1/2$, ensuring that there exists a subspace $E$ of
  dimension $\dim{E} \ge c(1/2)\ell(K^\circ)^2$ for
  which:
  \[
  K^\circ \cap E \subseteq
  \frac{\ell(K^\circ)}{2\sqrt{m}} B_2^m \cap E.
  \]
  Let us define $k = m - \dim E + 1$; then
  $k \le m - c(1/2)\ell(K^\circ)^2 + 1 = m - c(1/2)\ell^*(K)^2 +
  1$. Since, by assumption $\ell(K^*) = \Omega(m)$, there exists a
  constant $c$ so that $k \le (1-c)m + 1$. Finally, by the definition
  of Gelfand width, $c_k(K^\circ) \le \frac{\ell(K^\circ)}{2\sqrt{m}}
  = O(1)$, as desired. This completes the proof.
\end{proof}



\section{Gaussian Width Lower Bounds in $\ell$-position}
\label{sect:ell}

In Section~\ref{sect:gauss} we showed that the Gaussian Mechanism is
optimal when the Gaussian mean width of $K$ is
asymptotically as large possible. Our goal in this and the following
section is to show general lower bounds on sample complexity in terms
of $\ell^*(K)$. This is motivated by the sample complexity upper
bound in terms of $\ell^*(K)$ provided by the projection mechanism.

It is natural to follow the strategy from Section~\ref{sect:gauss}:
use Dvoretzky's criterion to find a nearly-spherical projection of
$K$ of appropriate radius and dimension. An inspection of the proof of
\ref{thm:gausslb} shows that the sample complexity lower bound we
get this way is $\Omega\left(\frac{\ell^*(K)^2}{\sqrt{m}}\right)$
(ignoring the dependence on $\epz$, $\delta$, and $\alpha$ here, and
in the rest of this informal discussion). Recall that we are aiming
for a lower bound of of $\Omega(\ell^*(K))$, so we are off by a factor
of $\frac{\ell^*(K)}{\sqrt{m}}$. Roughly speaking, the problem is
that Dvoretzky's criterion does too much: it guarantees a spherical
section of $K^\circ$, while we only need a bound on the diameter of
the section. In order to circumvent this difficulty, we use a
different result from asymptotic convex geometry, the low
$M^*$-estimate, which bounds the diameter of a random section of
$K^\circ$, without also producing a large ball contained inside the
section. A technical difficulty is that the resulting upper bound on
the diameter is in terms of the Gaussian mean $K$-norm, rather than the
(reciprocal of the) mean width. When $K$ is in $\ell$-position, this is
not an issue, because results of Pisier, Figiel, and
Tomczak-Jaegermann show that in that case $\ell(K)\ell^*(K) = O(\log
m)$. In this section we assume that $K$ is in $\ell$-position,
and we remove this requirement in the subsequent section.

The main result of this section is summarized below.

\begin{theorem}\label{thm:ellpos}
  For all $\epz = O(1)$, $2^{-\Omega(n)} \leq \delta \leq 1/n^{1 +   \Omega{(1})}$,
  all symmetric convex bodies $K \subseteq \mathbb{R}^m$ in
  $\ell$-position,
  and for $\alpha \le \frac{\ell^*(K)}{C\sqrt{m}\log{2m}}$, where $C$ is
  an absolute constant:
  \[
  \scz{(K,\alpha)} = \Omega{\left( 
      \frac{\sqrt{\log{1/\delta}}}{\alpha \epz}\cdot
      \frac{\ell^*(K)}{\log{2m}}
    \right)}.
  \]
\end{theorem}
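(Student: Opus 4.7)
The plan is to feed Theorem~\ref{modularlb} (the Geometric Lower Bound) with a choice of $k = \Theta(m)$ for which the Gelfand width $c_k(K^\circ)$ can be controlled by $O(\sqrt{m}\log 2m /\ell^*(K))$. If I can produce such a $k$, then $\sqrt{m-k+1}/c_k(K^\circ) = \Omega(\ell^*(K)/\log 2m)$, the hypothesis $\alpha \le 1/(10c_k(K^\circ))$ becomes exactly the assumed bound $\alpha \le \ell^*(K)/(C\sqrt{m}\log 2m)$, and Theorem~\ref{modularlb} hands over the conclusion.

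The first ingredient is Milman's low $M^*$-estimate applied to $L = K^\circ$: for every integer $k < m$, there exists a subspace $E$ of codimension $k$ with
\[
\diam(K^\circ \cap E) \le \frac{C_1\,\ell^*(K^\circ)}{\sqrt{k}} = \frac{C_1\,\ell(K)}{\sqrt{k}},
\]
using $\ell^*(K^\circ) = \ell(K)$. Hence $c_{k+1}(K^\circ) \le C_1 \ell(K)/\sqrt{k}$. The second ingredient converts $\ell(K)$ into a bound driven by $\ell^*(K)$: because $K$ is in $\ell$-position, the $MM^*$-estimate (due to Pisier, Figiel, and Tomczak-Jaegermann) yields $\ell(K)\,\ell^*(K) \le C_2\, m \log 2m$, so
\[
c_{k+1}(K^\circ) \le \frac{C_1 C_2\, m \log 2m}{\ell^*(K)\sqrt{k}}.
\]

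Now I choose $k = \lfloor m/2\rfloor$, so $\sqrt{k} = \Theta(\sqrt{m})$ and $m - k \ge m/2$. This gives
\[
c_{k+1}(K^\circ) \le \frac{C_3 \sqrt{m}\,\log 2m}{\ell^*(K)}, \qquad \sqrt{m - k} = \Omega(\sqrt{m}),
\]
and plugging into Theorem~\ref{modularlb} (with index $k+1$) produces
\[
\scz(K,\alpha) = \Omega\!\left(\frac{\sqrt{\log 1/\delta}}{\alpha\epz}\cdot\frac{\sqrt{m/2}\cdot \ell^*(K)}{C_3\sqrt{m}\log 2m}\right) = \Omega\!\left(\frac{\sqrt{\log 1/\delta}}{\alpha\epz}\cdot\frac{\ell^*(K)}{\log 2m}\right),
\]
exactly as claimed. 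The hypothesis on $\alpha$ in the geometric lower bound reads $\alpha \le 1/(10 c_{k+1}(K^\circ))$, and using the bound above this is implied by $\alpha \le \ell^*(K)/(C\sqrt{m}\log 2m)$ for an appropriate absolute constant $C$.

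The main conceptual obstacle is the factor $\log 2m$, which traces back entirely to the $MM^*$-estimate and is the reason the theorem is suboptimal by a logarithmic factor relative to the projection-mechanism upper bound. If one had a position with $\ell(K)\ell^*(K) = O(m\sqrt{\log 2m})$ (as conjectured and alluded to in the introduction), the same argument would immediately improve the lower bound. The only delicate bookkeeping is verifying the statement and normalization of the low $M^*$-estimate used here (different sources express it in terms of $M^*(L) = \ell^*(L)/\sqrt{m}$) and ensuring the low $M^*$ bound and the $MM^*$ bound are combined with the correct exponents so that the $\sqrt{m}$ factors cancel. Everything else is a direct application of results already cited in the preliminaries.
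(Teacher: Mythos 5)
Your proposal is correct and follows essentially the same route as the paper: both reduce to the geometric lower bound (Theorem~\ref{modularlb}), apply the low $M^*$ estimate to $K^\circ$ to bound a Gelfand width of codimension $\approx m/2$, and then use the $MM^*$ estimate (valid since $K$, and hence $K^\circ$, is in $\ell$-position) to replace $\ell(K)$ by $m\log 2m/\ell^*(K)$. The only cosmetic difference is that the paper packages the low-$M^*$-plus-$MM^*$ combination into an intermediate lemma (Lemma~\ref{lm:existence-subsp}, applied to $K^\circ$) while you inline it.
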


The following two theorems are the main technical ingredients we need
in the proof of Theorem~\ref{thm:ellpos}.

\begin{theorem}[\cite{FigielTJ79}, \cite{Pisier-Kconv}; $MM^*$ Bound] \label{thm:mmstar}
  There exists a constant $C$ such that for every symmetric convex body $K \subset \mathbb{R}^m$ in $\ell$-position: \[\ell(K) \cdot \ell^*(K) \leq C \cdot m\log{2m}.\]
\end{theorem}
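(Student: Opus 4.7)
The plan is to prove the bound via Pisier's $K$-convexity theorem combined with a variational analysis at the $\ell$-position. Let $X = (\R^m, \|\cdot\|_K)$ with dual $X^* = (\R^m, \|\cdot\|_{K^\circ})$. For $g \sim N(0,I_m)$, let $\xi(g) \in K^\circ$ be a measurable selection of the subgradient of $\|\cdot\|_K$ at $g$, so $\langle \xi(g), g\rangle = \|g\|_K$; similarly let $\eta(g) \in K$ be a selection of the subgradient of $\|\cdot\|_{K^\circ}$ at $g$, so $\langle \eta(g), g\rangle = \|g\|_{K^\circ}$.

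First I would extract a matrix identity from the $\ell$-position hypothesis. Differentiating the functional $T \mapsto \E\|T^{-1}g\|_K \cdot \E\|T^* g\|_{K^\circ}$ at $T = I$ in an arbitrary direction $A \in \R^{m\times m}$, using $(I + tA)^{-1} = I - tA + O(t^2)$, a product-rule calculation yields the derivative $-\ell^*(K)\,\E\langle \xi(g), Ag\rangle + \ell(K)\,\E\langle A\eta(g), g\rangle$. Because $T = I$ is a critical point of this functional, setting the derivative to zero for every matrix $A$ gives
\[
\ell^*(K)\,\E[g\xi(g)^\top] = \ell(K)\,\E[\eta(g) g^\top].
\]
Writing $M := \E[\eta(g) g^\top]$, I observe $\mathrm{tr}(M) = \E\langle \eta(g), g\rangle = \ell^*(K)$, and the identity rearranges to $\E[g\xi(g)^\top] = (\ell(K)/\ell^*(K))\,M$.

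The key quantity to control is then $\mathrm{tr}\bigl(M \cdot \E[g\xi(g)^\top]\bigr) = (\ell(K)/\ell^*(K))\,\|M\|_F^2$. Cauchy--Schwarz on the diagonal of $M$ gives $\|M\|_F^2 \ge \mathrm{tr}(M)^2/m = \ell^*(K)^2/m$, so
\[
\mathrm{tr}\bigl(M \cdot \E[g\xi(g)^\top]\bigr) \;\ge\; \frac{\ell(K)\ell^*(K)}{m}.
\]
A matching upper bound comes from Pisier's $K$-convexity theorem: the first Gaussian-chaos projection $P_1 f(x) := \sum_i x_i \E[g_i f(g)]$ has operator norm at most $K(X) \le C\log(2m)$ on $L_2(\gamma_m; X)$. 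Applied to the $X$-valued function $f = \eta$ (whose values lie in $K$, so $\|\eta(g)\|_X \le 1$), a direct computation identifies $(P_1 f)(x) = Mx$, hence $(\E\|Mg\|_X^2)^{1/2} \le K(X) \le C\log(2m)$. Pairing against $\xi(g) \in K^\circ$ via the $X/X^*$ duality then gives
\[
\mathrm{tr}\bigl(M \cdot \E[g\xi(g)^\top]\bigr) \;=\; \E\langle Mg, \xi(g)\rangle \;\le\; \E\|Mg\|_X \;\le\; C\log(2m),
\]
and combining the two bounds yields $\ell(K) \cdot \ell^*(K) \le C\,m\log(2m)$.

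The main obstacle I foresee is the rigorous handling of the subgradient selections $\xi, \eta$: $\|\cdot\|_K$ is differentiable almost everywhere with respect to Gaussian measure, so directional derivatives exist, but verifying that the chain rule genuinely delivers the matrix identity above requires some care, e.g.\ via a standard mollification or smoothing argument on $\|\cdot\|_K$ followed by a limit. A secondary subtlety is that Pisier's $K$-convexity theorem is itself a deep result (proved via hypercontractivity of the Ornstein--Uhlenbeck semigroup); I would import it as a black box, consistent with the way the $MM^*$ bound is referenced here as a cited ingredient rather than a result proved from scratch.
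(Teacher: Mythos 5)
The paper does not prove this statement: it is imported as a black box from \cite{FigielTJ79,Pisier-Kconv}, so there is no in-paper argument to compare against. Your sketch is essentially the classical proof from those sources: the variational (trace-duality) characterization of the $\ell$-position due to Figiel and Tomczak-Jaegermann, combined with Pisier's $K$-convexity theorem in the form $\|P_1\|_{L_2(\gamma_m;X)\to L_2(\gamma_m;X)} = K(X) \le C\log(2m)$. The overall architecture --- differentiate the functional $T\mapsto \E\|T^{-1}g\|_K\,\E\|T^*g\|_{K^\circ}$ at its minimizer to get $\ell^*(K)\,\E[g\xi(g)^\top]=\ell(K)\,M$ with $M=\E[\eta(g)g^\top]$, lower-bound a trace pairing by Cauchy--Schwarz, and upper-bound it by $K$-convexity via $P_1\eta = M(\cdot)$ --- is correct and is the standard route.

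There is, however, one step that fails as written. You claim $\mathrm{tr}\bigl(M\cdot\E[g\xi(g)^\top]\bigr)=(\ell(K)/\ell^*(K))\,\|M\|_F^2$, but substituting $\E[g\xi(g)^\top]=(\ell(K)/\ell^*(K))M$ gives $(\ell(K)/\ell^*(K))\,\mathrm{tr}(M^2)$, not $(\ell(K)/\ell^*(K))\,\mathrm{tr}(M^\top M)$. For a general real matrix, writing $M=S+A$ with $S$ symmetric and $A$ antisymmetric, one has $\mathrm{tr}(M^2)=\|S\|_F^2-\|A\|_F^2\le\|M\|_F^2$, which is the \emph{wrong} direction for your lower bound (and $\mathrm{tr}(M^2)$ can even be negative), while the diagonal Cauchy--Schwarz inequality $\|M\|_F^2\ge\mathrm{tr}(M)^2/m$ does not transfer to $\mathrm{tr}(M^2)$ without symmetry. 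So you genuinely need $M=M^\top$, and nothing in the definition $M=\E[\eta(g)g^\top]$ gives this for free. It is true, but it requires an argument: by Stein's identity applied to the Lipschitz function $g\mapsto h_K(g)g_i$ one gets
\[
M_{ij}=\E[\partial_i h_K(g)\,g_j]=\E[h_K(g)\,g_ig_j]-\delta_{ij}\,\E[h_K(g)],
\]
which is manifestly symmetric in $(i,j)$; equivalently, $M$ is the Gaussian average of the (distributional) Hessian of the convex function $h_K$. With symmetry in hand, $\mathrm{tr}(M^2)=\|M\|_F^2\ge\mathrm{tr}(M)^2/m=\ell^*(K)^2/m$ and the rest of your chain closes correctly. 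This fix lives at the same level of technicality as the mollification issues you already flag (a.e.\ differentiability of the norms, differentiation under the expectation, existence of the minimizing position), but unlike those it is not merely a regularity nicety --- without it the central inequality does not follow --- so it should be stated explicitly.
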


It is an open problem whether this bound can be improved to
$m\sqrt{\log 2m}$. This would be tight for the cube $Q^m$. This
improvement would lead to a corresponding improvement in our bounds.

\begin{theorem}[\cite{PajorTJ86}; Low $M^*$ estimate]\label{thm:lowmstar}
  There exists a constant $C$ such that for every symmetric convex
  body $K\subset\R^m$ there exists a subspace $E\subseteq \R^m$ with
  $\dim{E} = m - k$ for which: \[\diam{(K \cap E)} \leq C \cdot
  \frac{\ell^*(K)}{\sqrt{k}}.\]
\end{theorem}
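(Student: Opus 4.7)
The plan is to produce the subspace $E$ as the kernel of a random Gaussian matrix and control the diameter of $K \cap E$ by Gordon's Gaussian minimax comparison inequality. Let $G$ be a $k \times m$ matrix with i.i.d. $N(0,1)$ entries; almost surely $\dim \ker G = m - k$, so writing $E := \ker G$, it suffices to exhibit a realization of $G$ for which $\diam(K \cap E) \le C\ell^*(K)/\sqrt{k}$.

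First I would reduce the diameter bound to a quantitative non-singularity statement. Fix $t = C\ell^*(K)/\sqrt{k}$ with $C$ to be chosen. Since $K \cap E$ is a symmetric convex set containing $0$, if $\diam(K \cap E) > t$ then the segment from $0$ to a point of norm greater than $t$ meets the sphere $tS^{m-1}$, producing an $x \in K \cap E$ with $\|x\|_2 = t$, i.e.\ an $x \in T \cap \ker G$ where $T := K \cap tS^{m-1}$. Contrapositively, it is enough to show that $\inf_{x \in T} \|Gx\|_2 > 0$ with positive probability.

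The key step is to apply Gordon's minimax inequality to the two centered Gaussian processes $X_{x,u} = \langle Gx, u\rangle$ and $Y_{x,u} = \langle x, g\rangle + t\langle u, h\rangle$, indexed by $(x,u) \in T \times S^{k-1}$, where $g \sim N(0,I_m)$ and $h \sim N(0,I_k)$ are independent. A direct computation yields
\[
\E(X_{x,u} - X_{x,u'})^2 = \E(Y_{x,u} - Y_{x,u'})^2 = 2t^2(1 - \langle u, u'\rangle),
\]
and for $x \neq x'$,
\[
\E(Y_{x,u} - Y_{x',u'})^2 - \E(X_{x,u} - X_{x',u'})^2 = 2(t^2 - \langle x, x'\rangle)(1 - \langle u, u'\rangle) \ge 0,
\]
so Gordon's inequality applies and yields
\[
\E \inf_{x \in T} \|Gx\|_2 \;=\; \E \inf_{x} \sup_{u} X_{x,u} \;\ge\; \E \inf_{x} \sup_{u} Y_{x,u} \;=\; t\cdot\E\|h\|_2 - \E \sup_{x \in T}\langle x, g\rangle,
\]
using the symmetry $T = -T$ (inherited from $K = -K$) to absorb the sign of $g$ in $\inf_x$. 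Since $\E\|h\|_2 \ge c\sqrt{k}$ for an absolute constant $c > 0$ and $\E \sup_{x \in T}\langle x, g\rangle \le \ell^*(K)$, choosing $C$ large enough makes this right-hand side strictly positive. Combined with the non-negativity of $\|Gx\|_2$, this forces $\inf_{x \in T}\|Gx\|_2 > 0$ with positive probability, yielding the desired $E$.

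The main technical obstacle is Gordon's inequality itself, a genuine strengthening of Slepian's lemma for min-max of Gaussian processes, whose hypotheses require comparing within-row and cross-row increment variances in opposite directions. The structurally important choice is the \emph{decoupled} form $Y_{x,u} = \langle x,g\rangle + t\langle u,h\rangle$: this is exactly what makes $\inf_x \sup_u Y$ split into two independent pieces and produces the gap $t\sqrt{k} - \ell^*(K)$ that dictates the correct scale $t \asymp \ell^*(K)/\sqrt{k}$. Once $Y$ is chosen this way, the variance comparisons reduce to the identity $(t^2 - \langle x,x'\rangle)(1 - \langle u,u'\rangle) \ge 0$ noted above, and the rest of the argument is routine.
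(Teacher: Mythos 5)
Your proof is correct. Note, however, that the paper does not prove this statement at all: it imports the low $M^*$ estimate as a black box, citing Pajor and Tomczak-Jaegermann, and uses it only through Lemma~\ref{lm:existence-subsp}. So there is no internal argument to compare against; what you have supplied is a self-contained derivation, and it is the standard modern one, namely Gordon's ``escape through the mesh'' proof. Your variance computations check out: for $x,x'\in T=K\cap tS^{m-1}$ and $u,u'\in S^{k-1}$ one has $\E(X_{x,u}-X_{x',u'})^2 = 2t^2 - 2\langle x,x'\rangle\langle u,u'\rangle$ and $\E(Y_{x,u}-Y_{x',u'})^2 = 4t^2 - 2\langle x,x'\rangle - 2t^2\langle u,u'\rangle$, whose difference is $2(t^2-\langle x,x'\rangle)(1-\langle u,u'\rangle)\ge 0$ by Cauchy--Schwarz, with equality when $x=x'$; this is exactly the hypothesis pattern (equal within-row increments, larger cross-row increments for the decoupled process) under which Gordon's minimax comparison gives $\E\inf_x\sup_u X \ge \E\inf_x\sup_u Y = t\,\E\|h\|_2 - \E\sup_{x\in T}\langle x,g\rangle \ge c\,t\sqrt{k} - \ell^*(K)$, and the choice $t = C\ell^*(K)/\sqrt{k}$ makes this positive. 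A few points you leave implicit are harmless but worth stating: the case $T=\emptyset$ (then any $E$ works), the intersection of the positive-probability event with the almost-sure event $\dim\ker G = m-k$, and the routine compactness/separability needed to apply Gordon's inequality over the continua $T$ and $S^{k-1}$. As a comparison of approaches: the original Pajor--Tomczak-Jaegermann argument proceeds differently (via Gaussian concentration/isoperimetry and a duality step), while Gordon's route, which you follow, is both shorter given the comparison inequality and yields the sharp constant $c_k(K)\le \ell^*(K)/(a_k - a_{\,\cdot}\,)$-type bounds with $a_k=\E\|h\|_2\sim\sqrt{k}$; for the purposes of this paper, where only the order of magnitude $\ell^*(K)/\sqrt{k}$ enters Theorem~\ref{thm:ellpos} and Lemma~\ref{lm:volnum}, either proof suffices.
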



Combining Theorems~\ref{thm:mmstar} and~\ref{thm:lowmstar}, we get the
following key lemma.

\begin{lemma}{\label{lm:existence-subsp}}
  There exists a constant $C$ such that for every symmetric convex
  body $K \subset \mathbb{R}^m$ in $\ell$-position, and every $\beta
  \in (0,1-1/m)$, there exists a subspace $E$ of dimension at least
  $\beta m$ satisfying:
  \[
  \diam(K \cap E) \le C\frac{\sqrt{m}\log{2m}}{\sqrt{1-\beta}\cdot\ell(K)}.
  \]
\end{lemma}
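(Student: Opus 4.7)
The plan is to combine the two stated ingredients, the $MM^*$ bound (Theorem~\ref{thm:mmstar}) and the low $M^*$ estimate (Theorem~\ref{thm:lowmstar}), in a very direct way. The $\ell$-position hypothesis enters only through $MM^*$; the low $M^*$ estimate itself makes no positional assumption on $K$.

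First I would set $k = \lceil (1-\beta)m\rceil$. The condition $\beta < 1 - 1/m$ guarantees $k \ge 1$, and by construction $k \ge (1-\beta)m$ and $m - k \ge \beta m$. Applying Theorem~\ref{thm:lowmstar} to $K$ with this choice of $k$ produces a subspace $E \subseteq \R^m$ of dimension $m - k \ge \beta m$ such that
\[
\diam(K \cap E) \;\le\; C_1 \cdot \frac{\ell^*(K)}{\sqrt{k}} \;\le\; C_1 \cdot \frac{\ell^*(K)}{\sqrt{(1-\beta)m}},
\]
for an absolute constant $C_1$.

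Next I would use the hypothesis that $K$ is in $\ell$-position and invoke Theorem~\ref{thm:mmstar} to bound $\ell^*(K)$ in terms of $\ell(K)$, namely
\[
\ell^*(K) \;\le\; \frac{C_2\, m\log(2m)}{\ell(K)}.
\]
Plugging this into the previous inequality gives
\[
\diam(K \cap E) \;\le\; \frac{C_1 C_2\, m\log(2m)}{\sqrt{(1-\beta)m}\cdot \ell(K)} \;=\; C \cdot \frac{\sqrt{m}\log(2m)}{\sqrt{1-\beta}\cdot \ell(K)},
\]
with $C = C_1 C_2$, which is the desired bound.

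There is no real obstacle here: the proof is a one-line combination of two black-box results once the parameter $k$ is chosen correctly. The only things to be careful about are the rounding of $k$ to an integer (handled by the assumption $\beta < 1 - 1/m$, which keeps $m - k$ a valid positive dimension) and the fact that $\ell$-position is scale-invariant, so neither of the invoked theorems requires any normalization of $K$.
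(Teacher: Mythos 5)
Your approach is exactly the paper's: apply the low $M^*$ estimate, then bound $\ell^*(K)$ via the $MM^*$ estimate using the $\ell$-position hypothesis, and substitute. However, your choice of rounding has a small but real error. You set $k = \lceil (1-\beta)m\rceil$ and then assert both $k \ge (1-\beta)m$ and $m-k \ge \beta m$; these cannot both hold unless $(1-\beta)m$ is an integer. With your choice one gets $\dim E = m - \lceil (1-\beta)m\rceil = \lfloor \beta m\rfloor$, which is strictly less than $\beta m$ whenever $\beta m \notin \mathbb{Z}$, so the dimension requirement in the lemma is not met. The paper instead takes $k = \lfloor (1-\beta)m\rfloor$, giving $\dim E = \lceil \beta m\rceil \ge \beta m$ as needed; the price is that $k$ may now be smaller than $(1-\beta)m$, so one must additionally observe that $(1-\beta)m > 1$ (from $\beta < 1-1/m$) implies $\lfloor (1-\beta)m\rfloor \ge (1-\beta)m/2$, which only costs a constant factor of $\sqrt{2}$ in the final bound. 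Once you flip the rounding and add that one-line observation, your proof is correct and matches the paper's.
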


\begin{proof}

Let $k = \lfloor (1-\beta) m \rfloor \ge 1$. Using the low $M^*$ estimate
on $K$, there exists a subspace $E$ with $\dim{E} = m - k = \lceil
\beta m \rceil$ for which: 
\[
\diam{(K \cap E)} \leq C_1 \cdot \frac{\ell^*(K)}{\sqrt{k}}.
\]
By the $MM^*$ upper bound, since $K$ is in $\ell$-position, we have
that:
\[
\ell^*(K) \leq C_2 \cdot \frac{m \log{2m}}{\ell(K)},
\]
and, combining the two inequalities, we get that:
\[
\diam(K\cap E) \le C_1C_2 \frac{m\log{2m}}{\sqrt{k}\cdot\ell(K)} \le
C\frac{\sqrt{m}\log m}{\sqrt{1-\beta} \cdot \ell(K)},
\]
for an appropriate constant $C$. This completes the proof.

\end{proof}

The proof of the desired lower bound now follows easily from this lemma.

\begin{proof}[Proof of Theorem~\ref{thm:ellpos}]

By Theorem~\ref{modularlb}, it suffices to show that
\begin{equation}\label{eq:ellpos-lb}
\max_{k = 1}^m \frac{\sqrt{m - k + 1}}{c_k(K^\circ)} 
= \Omega{\left(\frac{\ell^*(K)}{\log{2m}} \right)}.
\end{equation}
Indeed, if $k^*$ is the value of $k$ for which the maximum on the
left hand side is achieved, then $\frac{\sqrt{m - k^* +
    1}}{c_{k^*}(K^\circ)}$ is a lower bound on the sample complexity
for all $\alpha \le 1/(10c_{k^*}(K^\circ))$, and by
\eqref{eq:ellpos-lb}:
\[
\frac{1}{10c_{k^*}(K^\circ)} 
= \Omega\left(\frac{\ell^*(K)}{\sqrt{m-k+1}\cdot \log{2m}}\right)
= \Omega\left(\frac{\ell^*(K)}{\sqrt{m}\cdot \log{2m}}\right).
\]
In the rest of the proof, we establish \eqref{eq:ellpos-lb}.

Since $K$ (and thus also $K^\circ$) are in $\ell$-position by assumption, from
Lemma \ref{lm:existence-subsp} applied to $K^\circ$ we have that
there exists a subspace $E$ such that $\dim{E} \ge m/2$ and:
\[
\diam(K^\circ \cap E)
= O\left(\frac{\sqrt{m}\log 2m}{\ell(K^\circ)}\right)
= O\left(\frac{\sqrt{m}\log 2m}{\ell^*(K)}\right).
\]
Setting $k_E = m - \dim{E} + 1 \le m/2 + 1$, and because $c_{k_E}(K^\circ)
\le \diam(K^\circ \cap E)$ by definition, we get that:
\[
\max_{k = 1}^m \frac{\sqrt{m - k + 1}}{c_k(K^\circ)} 
\ge \frac{\sqrt{\dim{E}}}{\diam(K^\circ \cap E)} = \Omega{\left(\frac{\ell^*(K)}{\log{2m}} \right)},
\]
as desired.

\end{proof}

\junk{
This lower bound, along with \autoref{thm:general-proj}, give us the optimality result below for the projection mechanism.

\begin{corr}
If $\epz = O(1)$, $2^{-\Omega(n)} \leq \delta \leq 1/n^{1 + \Omega{(1})}$ and $\alpha \leq 1/10$, the Projection mechanism is optimal for constant $\alpha$ up to a $\log{2m}$ factor.
\end{corr}}
\section{Gaussian Width Lower Bounds for Arbitrary Bodies}
\label{sect:arbirary}

In this section, we remove the assumption that $K$ is in
$\ell$-position from the previous section. Instead, we use a recursive
charging argument in order to reduce to the $\ell$-position case. The
resulting guarantee is worse than the one we proved for bodies in
$\ell$-position by a logarithmic factor.

The main lower bound result of this section is the following theorem.
\begin{theorem}\label{thm:gen-lb}
  For all $\epz = O(1)$, $2^{-\Omega(n)} \leq \delta \leq 1/n^{1
    +\Omega(1)}$, any symmetric convex body $K\subset \R^m$, and
  any $\alpha \leq \frac{\ell^*(K)}{C\sqrt{m}(\log 2m)^2}$, where $C$
  is an absolute constant: 
  \[
  \scz(K, \alpha) =
  \Omega\left(\frac{\sigma(\epz,\delta)\ell^*(K)}{(\log 2m)^2\alpha}\right).
  \]
\end{theorem}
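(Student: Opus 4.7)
The plan is to reduce the general case to the $\ell$-position case of Theorem~\ref{thm:ellpos} via a dyadic decomposition of an $\ell$-position transformation of $K$, losing one extra logarithmic factor along the way.

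First, pick any invertible linear map $S$ such that $SK$ is in $\ell$-position (existence follows from compactness). Writing $S = UDV^*$ in SVD form and using that sample complexity and the $\ell$-position property are both invariant under orthogonal transformations (the former via Lemma~\ref{lemma:linop2} applied in both directions with orthogonal $T$, since any orthogonal $T$ has $\|T\| = 1$), we may replace $K$ by $V^*K$; after this change of basis, we may assume that the $\ell$-position transformation of $K$ is $D = \mathrm{diag}(\sigma_1, \ldots, \sigma_m)$ with $\sigma_1 \ge \ldots \ge \sigma_m > 0$ in the standard basis $e_1, \ldots, e_m$.

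Next, dyadically partition the coordinates by singular value: $J_k = \{i : \sigma_i \in [2^{-k}\sigma_1, 2^{-k+1}\sigma_1)\}$ for $k = 0, 1, \ldots, L-1$ with $L = \Theta(\log 2m)$; the indices with $\sigma_i < 2^{-L}\sigma_1$ are negligible thanks to the $MM^*$ bound on $DK$, which prevents $K$ from having too much Gaussian width on those coordinates. Let $V_k = \mathrm{span}\{e_i : i \in J_k\}$ and decompose $g \sim N(0, I_m)$ along the orthogonal splitting $\R^m = \bigoplus_k V_k$. By subadditivity of $h_K$,
\[
\ell^*(K) = \E[h_K(g)] \le \sum_{k=0}^{L-1} \E[h_K(\Pi_{V_k} g)] + (\text{tail}) = \sum_{k=0}^{L-1} \ell^*(\Pi_{V_k}(K)) + (\text{tail}),
\]
where each term is the intrinsic $\ell^*$ of $\Pi_{V_k}(K)$ inside $V_k$ (using $h_K(\Pi_{V_k} g) = h_{\Pi_{V_k}(K)}(\Pi_{V_k} g)$). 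Pigeonhole then yields an index $k^*$ with $\ell^*(\Pi_{V_{k^*}}(K)) = \Omega(\ell^*(K)/\log 2m)$.

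Third, on $V_{k^*}$ the restriction $D|_{V_{k^*}}$ has all singular values within a factor of two of $c := 2^{-k^*}\sigma_1$; write $D|_{V_{k^*}} = cR$ with $R$ diagonal, $\|R\|, \|R^{-1}\| \leq 2$. The key technical claim is that $R\,\Pi_{V_{k^*}}(K)$ (equivalently, $\Pi_{V_{k^*}}(DK)$ up to rescaling) lies within constants of $\ell$-position in $V_{k^*}$: any transformation $S'$ on $V_{k^*}$ that substantially improved the $\ell\ell^*$-product of $R\,\Pi_{V_{k^*}}(K)$ could be lifted to $\tilde S = (S' \oplus I_{V_{k^*}^\perp}) \circ D$ on $\R^m$, contradicting the minimality of $D$ as an $\ell$-position transformation of $K$. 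Making this lifting precise---so that the loss is only a constant factor, coming from the bounded within-block spread of singular values and from a careful handling of the tail---is the main technical obstacle.

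Finally, apply Theorem~\ref{thm:ellpos} to $R\,\Pi_{V_{k^*}}(K)$ inside $V_{k^*}$. Combining with Corollary~\ref{orthproj} ($\scz$ is non-increasing under projection onto $V_{k^*}$) and Lemma~\ref{lemma:linop2} (which transfers between $R\,\Pi_{V_{k^*}}(K)$ and $\Pi_{V_{k^*}}(K)$ at constant cost in $\alpha$, since $\|R\| = O(1)$), together with $\ell^*(R\,\Pi_{V_{k^*}}(K)) = \Omega(\ell^*(\Pi_{V_{k^*}}(K)))$ from Lemma~\ref{lm:comp-width}, yields
\[
\scz(K,\alpha) \ge \scz(\Pi_{V_{k^*}}(K), \alpha) = \Omega\!\left(\frac{\sigma(\epz,\delta)\,\ell^*(\Pi_{V_{k^*}}(K))}{\log(2m)\,\alpha}\right) = \Omega\!\left(\frac{\sigma(\epz,\delta)\,\ell^*(K)}{(\log 2m)^2\,\alpha}\right),
\]
as required, with the restriction on $\alpha$ inherited from the corresponding restriction in Theorem~\ref{thm:ellpos} applied inside $V_{k^*}$.
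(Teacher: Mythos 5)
Your overall plan (spectrally decompose an $\ell$-position map $D$ of $K$, find a large-$\ell^*$ block via pigeonhole, then apply the $\ell$-position result on that block) shares its starting point with the paper, whose Lemma~\ref{lm:volnum-proj} also works with the eigendecomposition of the $\ell$-position map. But the paper never needs --- and deliberately avoids --- the claim you flag as your main obstacle, that a projection of $K$ onto a dyadic eigenblock of $D$ is within constants of $\ell$-position, and your sketched contradiction argument for it does not close. The trouble is that the $\ell\cdot\ell^*$ product does not decompose along an orthogonal block splitting: $\ell^*$ is subadditive under projection (as you use), but $\ell$ is a section-type quantity, $\ell(\Pi_{V_k}L) = \ell^*(L^\circ\cap V_k)$, and there is no comparison between $\ell(L)$ and $\sum_k\ell(\Pi_{V_k}L)$ in the direction you would need. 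Concretely, when you lift $S'$ to $\tilde S = (S'\oplus I)D$, subadditivity lets you bound $\ell^*(\tilde S K)$ block-wise, but $\ell(\tilde S K)=\E\|g\|_{\tilde S K}$ depends on the global shape of $\tilde S K$ and can increase without bound even when $S'$ improves the $\ell\ell^*$ product of the single block $R\,\Pi_{V_{k^*}}(K)$; so no contradiction with the optimality of $D$ follows.

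A secondary issue: your tail control runs in the wrong direction. Small $\sigma_i$ corresponds to directions where $K$ is \emph{wide} (the $\ell$-position map contracts wide directions to balance the body), so $\ell^*(\Pi_{\mathrm{tail}}K)$ is typically \emph{large}, and neither the $MM^*$ bound nor Lemma~\ref{lm:comp-width} gives what you need --- the latter yields $\ell^*(\Pi_{\mathrm{tail}}K)\le\|\Pi_{\mathrm{tail}}D^{-1}\|\,\ell^*(DK)$, and $\|\Pi_{\mathrm{tail}}D^{-1}\|$ is large on the tail. Since the condition number of the $\ell$-position map is not polynomially bounded in general, you cannot truncate at $L=\Theta(\log 2m)$ dyadic levels.

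The paper sidesteps both problems. Lemma~\ref{lm:volnum-proj} takes $E$ to be the top-$\lfloor m/4\rfloor$ eigenspace of $T$ (not a dyadic block), bounds $\ell^*(\Pi_E K)\le\ell^*(TK)/\lambda_k$ using only $\|\Pi_E T^{-1}\|=\lambda_k^{-1}$ and Lemma~\ref{lm:comp-width}, and applies the low $M^*$ estimate together with the $MM^*$ bound directly to $TK$ --- which genuinely \emph{is} in $\ell$-position --- to get a subspace $F$ with $\diam(T(K)^\circ\cap F)$ small. The key extra trick is pulling back from $T(K)^\circ$ to $K^\circ$: intersecting further with $G=E^\perp$, on which $\|T|_G\|\le\lambda_k$, gives $\diam(K^\circ\cap F\cap G)\le\lambda_k\diam(T(K)^\circ\cap F)$, and the $\lambda_k$ cancels. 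No projection of $K$ is ever required to be in approximate $\ell$-position. Lemma~\ref{lm:volnum} then recurses on $\Pi_{E^\perp}K$, recomputing a fresh $\ell$-position map at each level, accumulates inequality~\eqref{eq:volnum}, and H\"older extracts the single best index. To salvage your route you would need a stability theorem for $\ell$-position under projection onto eigenblocks of the $\ell$-position map; I am not aware of one, and the paper's recursion-with-re-optimization is precisely what substitutes for it.
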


The lower bound follows from the geometric lemma below, which is
interesting in its own right.

\begin{lemma}\label{lm:volnum}
  There exists a constant $C$ such that, for any symmetric convex body
  $K \subset \R^m$, 
  \begin{equation}\label{eq:volnum}
    \ell^*(K) \le C(\log 2m) 
    \left(
      \sum_{i = 1}^m{\frac{1}{\sqrt{i} \cdot c_{m-i+1}(K^\circ)}}
    \right).
\end{equation}
\end{lemma}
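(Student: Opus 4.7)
The plan is to prove the lemma in two stages, following the strategy alluded to in the section's introduction: first establish it for $K$ in $\ell$-position, then remove this assumption by a recursive charging argument.

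\textbf{Stage 1 ($K$ in $\ell$-position).} Since $K$ is in $\ell$-position iff $K^\circ$ is (as noted earlier in Section~3.2), I apply Lemma~\ref{lm:existence-subsp} to $K^\circ$ with $\beta = 1/2$: it yields a subspace $E$ of dimension at least $m/2$ such that
\[
\diam(K^\circ \cap E) \le C \frac{\sqrt{m}\log(2m)}{\ell(K^\circ)} = C \frac{\sqrt{m}\log(2m)}{\ell^*(K)},
\]
using $\ell(K^\circ) = \ell^*(K)$. Therefore $c_{\lceil m/2\rceil + 1}(K^\circ) \le \diam(K^\circ \cap E) \le C\sqrt{m}\log(2m)/\ell^*(K)$. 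By monotonicity of Gelfand widths, $c_{m-i+1}(K^\circ) \le c_{\lceil m/2\rceil + 1}(K^\circ)$ for every $i \le \lfloor m/2\rfloor$, so using $\sum_{i=1}^{\lfloor m/2\rfloor} i^{-1/2} = \Theta(\sqrt{m})$, I conclude
\[
\sum_{i=1}^{m} \frac{1}{\sqrt{i}\,c_{m-i+1}(K^\circ)} \ge \frac{1}{c_{\lceil m/2\rceil + 1}(K^\circ)} \sum_{i=1}^{\lfloor m/2\rfloor} \frac{1}{\sqrt{i}} = \Omega\!\left(\frac{\ell^*(K)}{\log(2m)}\right),
\]
which rearranges to the lemma's conclusion for $K$ in $\ell$-position.

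\textbf{Stage 2 (general $K$).} For arbitrary $K$, I would use the recursive charging argument. The key tool is subadditivity of $\ell^*$ under orthogonal decomposition: for any subspace $V \subseteq \R^m$,
\[
\ell^*(K) \le \ell^*(\Pi_V K) + \ell^*(\Pi_{V^\perp} K),
\]
which follows by splitting $g = \Pi_V g + \Pi_{V^\perp} g$ into independent standard Gaussians on $V$ and $V^\perp$ and applying the triangle inequality to the support function $h_K$. I would apply this recursively: at each level, decompose the ambient subspace into a "peeled-off" piece $F$ of dimension comparable to half the current dimension, on which I can invoke the Stage~1 bound (after an intrinsic linear transformation placing the local projection in $\ell$-position), and a residual piece on which to recurse. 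The charging from local Gelfand widths back to the global $c_k(K^\circ)$ uses the elementary fact $c_k^V(K^\circ \cap V) \ge c_{k + (m - \dim V)}(K^\circ)$ (a section of a section is a section), so that each locally extracted term corresponds to a term of the global RHS sum.

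\textbf{Main obstacle.} The hard part will be controlling the intrinsic linear transformation needed to realize $\ell$-position at each recursion level. A global transformation on $K$ itself would cost a factor equal to its condition number, which can be as large as $\sqrt{m}$ and would destroy the bound. The recursion must therefore apply $\ell$-position only through the intrinsic scalars ($\ell^*$ and Gelfand widths) that behave well under orthogonal decomposition, and the $O(\log m)$ peels must be arranged so that their contributions telescope into a sum of $m$ terms with just one extra factor of $\log(2m)$ in front (as in the statement), rather than compounding into $(\log 2m)^{O(\log m)}$.
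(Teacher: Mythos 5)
Your Stage 1 is correct and matches the paper's approach for the $\ell$-position case (it is essentially the combination used inside Theorem~\ref{thm:ellpos}). Your Stage 2, however, is a plan and not a proof, and you acknowledge this yourself: the ``main obstacle'' you identify is exactly the genuine gap, and the proposal does not close it.

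The paper resolves your obstacle through the construction in Lemma~\ref{lm:volnum-proj}, which you need but have not found. Let $T$ be the (WLOG self-adjoint positive-definite) map putting $K$ in $\ell$-position, with eigenvalues $\lambda_1 \ge \dots \ge \lambda_m$. The crucial trick is to choose the peeled-off subspace $E$ to be the span of the eigenvectors of $T$ corresponding to the $k = \lfloor m/4 \rfloor$ \emph{largest} eigenvalues. This gives two cancelling bounds: on one side, $\ell^*(\Pi_E K) \le \|\Pi_E T^{-1}\|\,\ell^*(TK) = \lambda_k^{-1}\ell^*(TK)$; on the other side, the complementary eigenspace $G = E^\perp$ satisfies $\|T|_G\| \le \lambda_{k+1} \le \lambda_k$, so a small-diameter section $T(K)^\circ \cap F$ (provided by the low $M^*$ estimate for the $\ell$-position body) yields a section $K^\circ \cap F \cap G$ whose diameter is at most $\lambda_k$ times larger. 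The factors $\lambda_k^{-1}$ and $\lambda_k$ cancel, and the condition number of $T$ --- which you correctly worried could be as bad as $\sqrt m$ --- never appears. Without this eigenvalue-aligned choice of $E$, the charging from the local ($T$-dependent) Gelfand widths back to the intrinsic $c_k(K^\circ)$ simply fails.

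Your second worry --- avoiding compounding $\log$ factors over $O(\log m)$ recursion levels --- is also real, and the paper handles it by strengthening the induction hypothesis so that dimension $k$ only charges to the first $\lceil k/2\rceil$ terms of the sum. Then the peel at the current level charges to indices $i \in [\lceil 3m/8\rceil + 1, \lceil m/2\rceil]$, while the residual body of dimension $\le 3m/4$ charges only to indices $i \le \lceil 3m/8\rceil$; the two ranges are disjoint, so the bounds add rather than multiply, and a single factor $\log 2m$ suffices. Your proposal gestures at this telescoping but does not supply the disjointness mechanism. In summary, the decomposition scheme and the ``section of a section'' observation are right, but the two technical devices that actually make the recursion close --- the eigenvalue-cancellation choice of $E$, and the truncated inductive hypothesis --- are missing.
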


Lemma \ref{lm:volnum} is closely related to the volume number theorem of
Milman and Pisier~\cite{MilmanPisier87}, which states that the
inequality \eqref{eq:volnum} holds with $\frac{1}{c_{m-i+1}(K^\circ)}$
replaced by the volume number $v_{i}(K)$, defined as:
\[
v_i(K) = \sup_{E: \dim{E} = i}\frac{\vol(\Pi_E(K))^{1/i}}{(\vol(\Pi_E(B_2^m))^{1/i}}, 
\]
where the supremum is over subspaces $E$ of $\R^m$. Inequality \eqref{eq:volnum} is stronger
than the volume number theorem, because $\frac{1}{c_{m-i+1}(K^\circ) }
\le v_{i}(K)$. Indeed, setting $r = \frac{1}{c_{m-i+1}(K^\circ) }$, by
\autoref{dualequivalence} and the definition of Gelfand width we have
that there exists a subspace $E$ of dimension $i$ such that $r
\Pi_E(B_2^m)\subseteq \Pi_E K$. Therefore, $\vol(\Pi_E(K)) \ge
r^{i}\vol(\Pi_E(B_2^m))$, which implies the desired inequality.

Even though the volume number theorem is weaker than
\eqref{eq:volnum}, the proof given by Pisier in his
book~\cite{Pisier-book}, with minor modifications, appears to yield
the stronger inequality we need. Rather than repeat this argument, we
give a self-contained and slightly different proof below. Our proof
only uses the low $M^*$ estimate, the $MM^*$ estimate, and elementary
linear algebra, while Pisier's proof uses Milman's quotient of
subspace theorem. Moreover, if the $MM^*$ estimate can be improved to
$O(m\sqrt{\log 2m})$, our argument would imply a corresponding
improvement of the logarithmic factor in \eqref{eq:volnum} from $\log
2m$ to $\sqrt{\log 2m}$. This does not appear to be the case in
Pisier's proof, where the logarithmic factor comes from the (tight)
upper estimate of the $K$-convexity constant of $m$-dimensional Banach
spaces.

To prove Lemma \ref{lm:volnum} we will first establish an auxiliary
lemma.

\begin{lemma}\label{lm:volnum-proj}
  There exists a constant $C$ such that, for any symmetric convex body
  $K \subset \R^m$ and $m \ge 4$, there exists a subspace $E \subseteq \R^m$ such
  that $\dim{E} \ge \lfloor m/4 \rfloor$ and: 
  \[
  \ell^*(\Pi_E(K)) \le C(\log 2m) \frac{\sqrt{m}}{c_{2\lfloor m/4
      \rfloor + 1}(K^\circ)}.
  \]
\end{lemma}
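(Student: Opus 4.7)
The plan is to work inside a subspace $F$ of $\R^m$ that realizes the Gelfand width $c := c_{2\lfloor m/4 \rfloor + 1}(K^\circ)$, then exploit a combination of the $MM^*$ estimate and the low $M^*$ estimate, applied to the projection $L := \Pi_F K$ after putting it in $\ell$-position.

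First, by the definition of Gelfand width, pick a subspace $F \subseteq \R^m$ of codimension at most $2\lfloor m/4\rfloor$, so that $d := \dim F \ge m/2$, for which $\diam(K^\circ \cap F) \le c$. By Corollary~\ref{dualequivalence} this is equivalent to $L := \Pi_F K \supseteq \tfrac{1}{c}B_2^F$, i.e.\ $L$ is a symmetric convex body in $F$ containing a Euclidean ball of radius $1/c$.

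Next, choose a symmetric positive-definite $T \colon F \to F$ so that $TL$ is in $\ell$-position in $F$, normalized by a scalar so that $\ell(TL) = \ell^*(TL)$. By the $MM^*$ bound (Theorem~\ref{thm:mmstar}) applied in dimension $d$, we have $\ell^*(TL) \le \sqrt{Cd\log 2d}$. The containment $L \supseteq \tfrac{1}{c}B_2^F$ propagates to $TL \supseteq \tfrac{1}{c}TB_2^F$, an ellipsoid with semi-axes $\sigma_i(T)/c$. Since the Gaussian mean width of such an ellipsoid is (up to constants) the square root of the sum of squared semi-axes, we obtain $\ell^*(TL) \ge \Omega\bigl(\tfrac{1}{c}\sqrt{\sum_i \sigma_i(T)^2}\bigr)$. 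Combining with the $MM^*$ upper bound yields $\sum_i \sigma_i(T)^2 \le C c^2 d\log 2d$, and in particular $\|T\|^2 \le Cc^2 d\log 2d$.

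Now apply the low $M^*$ estimate (Theorem~\ref{thm:lowmstar}) to $TL$ with parameter $j = \lceil d/2 \rceil$: we obtain a subspace $E' \subseteq F$ with $\dim E' \ge \lfloor d/2\rfloor \ge \lfloor m/4 \rfloor$ and
\[
\diam(TL \cap E') \le \frac{C\ell^*(TL)}{\sqrt{d/2}} \le C'\sqrt{\log 2d}.
\]
Set $E := T^{-1}(E') \subseteq F$, which has the same dimension. Then $T(L \cap E) = TL \cap E'$, and combining this diameter estimate on the section of $TL$ with the operator-norm bound $\|T\| \le c\sqrt{Cd\log 2d}$ (via Lemma~\ref{lm:comp-width} and the elementary inequality $\ell^*(A) \le \diam(A)\sqrt{\dim A}$) translates into the bound $\ell^*(\Pi_E K) \le C''(\log 2m)\sqrt{m}/c$ on the Gaussian mean width of the projection, completing the proof.

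The main obstacle lies in this last step: the low $M^*$ estimate bounds the diameter of a \emph{section} of the transformed body $TL$, whereas what we need is a Gaussian-width bound for a \emph{projection} of the original body $L$. The bridge requires the precise control on the singular values of $T$ established in the second step (from the $MM^*$ bound paired with the ball containment $L \supseteq \tfrac{1}{c}B_2^F$), together with careful bookkeeping of how orthogonal projections onto $E$ and $E' = T(E)$ relate through the isomorphism $T$; it is here that the extra logarithmic factor in the final bound arises.
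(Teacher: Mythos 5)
Your proposal takes a genuinely different route from the paper and, unfortunately, has a gap in the final step that I do not see how to close.

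\textbf{Where the approaches diverge.} The paper puts the \emph{original} body $K$ in $\ell$-position via a self-adjoint positive-definite $T$, and then defines $E$ to be the span of the top $\lfloor m/4\rfloor$ eigenvectors of $T$. This choice makes $\Pi_E$ and $T$ commute, so $\|\Pi_E T^{-1}\| = 1/\lambda_{\lfloor m/4\rfloor}$ exactly, and $\ell^*(\Pi_E K) \le \ell^*(TK)/\lambda_{\lfloor m/4\rfloor}$ follows from \autoref{lm:comp-width}. The eigenvalue $\lambda_{\lfloor m/4\rfloor}$ is then cancelled by relating $\diam(T(K)^\circ \cap F)$ to $c_{2\lfloor m/4\rfloor+1}(K^\circ)$ through a section of $K^\circ$ of codimension at most $2\lfloor m/4\rfloor$. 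You instead first pass to a subspace $F$ realizing the Gelfand width, put $L := \Pi_F K$ in $\ell$-position via $T \colon F \to F$, and obtain $E$ as $T^{-1}(E')$ where $E'$ comes from the low $M^*$ estimate applied to $TL$.

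\textbf{The gap.} You acknowledge the section/projection mismatch but assert it can be bridged using the bound $\|T\| \le c\sqrt{Cd\log 2d}$, Lemma~\ref{lm:comp-width}, and $\ell^*(A) \le \diam(A)\sqrt{\dim A}$. I do not think this works as stated, for two intertwined reasons. First, the only clean way to bound $\ell^*(\Pi_E L)$ through $T$ is $\ell^*(\Pi_E L) = \ell^*(\Pi_E T^{-1}(TL)) \le \|\Pi_E T^{-1}\|\,\ell^*(TL)$, and your argument gives control only on $\|T\|$, not on $\|\Pi_E T^{-1}\|$; the latter can be as large as $1/\sigma_{\min}(T)$, which is not bounded by any of the quantities you have derived (the ball containment $L \supseteq \tfrac1c B_2^F$ bounds $T$ from above, not from below). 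Second, the identity $T(L\cap E) = TL\cap E'$ together with the diameter bound on $TL\cap E'$ controls the \emph{section} $L\cap E$, not the \emph{projection} $\Pi_E L$: these are dual objects and their sizes are not comparable in general, and even the bound on $\diam(L\cap E)$ requires $\|T^{-1}\|$, which is again uncontrolled. Since $E = T^{-1}(E')$ has no relation to the eigenbasis of $T$, there is no commutation between $\Pi_E$ and $T$ to exploit. The paper's choice of $E$ as a $T$-invariant subspace is precisely what supplies the missing control on $\|\Pi_E T^{-1}\|$, and that structural idea is absent here. To repair your argument you would need to either choose $E$ aligned with the eigenbasis of $T$ (essentially falling back to the paper's route, carried out inside $F$), or supply a separate argument bounding $\sigma_{\min}(T)$ from below or $\|\Pi_{T^{-1}E'}T^{-1}\|$ from above, neither of which follows from what you have established.
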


\begin{proof}
  Let $T$ be a linear operator such that $T(K)$ is in
  $\ell$-position. By the rotational invariance of Gaussians,
  $\ell^*(UT(K)) = \ell^*(T(K))$ and $\ell(UT(K)) = \ell(T(K))$ for
  any orthogonal transformation $U$; so, 
  we may assume that $T$ is self-adjoint and positive definite.

  Let $\lambda_1 \ge \ldots \ge \lambda_m > 0$ be the eigenvalues of
  $T$. Let us set $k = \lfloor m/4 \rfloor$ and define $E$ to be the
  subspace spanned by the eigenvectors of $T$ corresponding to
  $\lambda_1, \ldots, \lambda_{k}$. Observe that
  $\Pi_E$ and $T$ commute because they are simultaneously
  diagonalized by the eigenvectors of $T$.

  Using Lemma \ref{lm:comp-width}, we calculate:
  \[
  \ell^*(\Pi_E(K)) = \ell^*(\Pi_ET^{-1}T(K))
  \le \|\Pi_ET^{-1}\|\ell^*(T(K)).
  \]
  By the definition of $E$, the singular values of $\Pi_E T^{-1}$ are
  $\lambda_{k}^{-1} \ge \ldots \ge \lambda_1^{-1}$, and therefore
  the operator norm of $\Pi_E T^{-1}$ is $\|\Pi_E T^{-1}\| =
  \lambda_{k}^{-1}$.  Thus we have:
  \begin{equation}
    \label{eq:width-ub}
    \ell^*(\Pi_E(K)) \le \frac{\ell^*(T(K))}{\lambda_k}.
  \end{equation}

  Since $T(K)$, and thus $T(K)^\circ$, is in $\ell$-position,
  we can apply Lemma \ref{lm:existence-subsp} to $T(K)^\circ$ and $\beta
  = \frac{m-k}{m} \ge \frac14$ and get that there exists a subspace $F$ of
  dimension at least $m-k$ so that:
  \[
  \ell^*(T(K)) \le C_1(\log{2m})\frac{\sqrt{m}}{\diam(T(K)^\circ \cap   F)},
  \]
  where $C_1$ is an absolute constant.

  Let $G = E^\perp$ be the orthogonal complement of the subspace $E$,
  i.e.~the space spanned by the eigenvectors of $T$ corresponding to
  the eigenvalues $\lambda_{k+1}, \ldots, \lambda_m$. The restriction $T|_G:G \to \R^m$
  of $T$ to $G$ has
  eigenvalues $\lambda_{k + 1} \ge \ldots \ge \lambda_m$. Therefore
  $\|T|_G\| \le \lambda_{k + 1} \le \lambda_{k}$. Because $G$ is spanned by
  eigenvectors of $T$, it is invariant under action by $T$, i.e.~$T(G) =
  G$, and, equivalently, $T^{-1}(G) = G$. Therefore $T^{-1}(K^\circ)
  \cap G = T^{-1}(K^\circ \cap G)$ and $K^\circ \cap G =
  (T|_G) T^{-1}(K^\circ \cap G)$. It follows that: 
  \begin{align*}
  \diam(K^\circ \cap F \cap G) 
  &= \diam(  (T|_G) T^{-1}(K^\circ \cap G) \cap F)\\
  &\le \|T|_G\| \diam(T^{-1}(K^\circ \cap G) \cap F)\\
  &\le \lambda_k \diam(T^{-1}(K^\circ) \cap F)\\
  &= \lambda_k \diam(T(K)^\circ \cap F).
  \end{align*}
  Setting $H = F \cap G$ and combining the inequalities, we get that:
  \[
  \ell^*(T(K)) \le 
  C_1(\log{2m}) \frac{\lambda_k\sqrt{m}}{\diam(K^\circ  \cap H)}
  \le  C_1(\log{2m}) \frac{\lambda_k\sqrt{m}}{c_{2k+1}(K^\circ)},
  \]
  where the final inequality follows from the definition of Gelfand
  width, since the co-dimension of $H$ is at most $2k$. Sibstituting
  this inequality into the right hand side of \eqref{eq:width-ub} finishes
  the proof of the lemma.
\end{proof}

We are now ready to prove Lemma~\ref{lm:volnum}.

\begin{proof}[Proof of Lemma~\ref{lm:volnum}]
  Let us first establish the lemma for $m \le 16$. Notice that:
  \[
  c_m(K^\circ) = \min_{\theta: \|\theta\|_2 = 1}{\diam(K^\circ \cap
    \{t\theta: t\in \R\})}.
  \] 
  But $\diam(K^\circ \cap \{t\theta: t\in \R\}) =
  \frac{1}{\|\theta\|_{K^\circ}}$ for any unit vector $\theta$, by the
  definition of Minkowski norm. Therefore, $1/c_m(K^\circ) =
  \max\{\|\theta\|_{K^\circ}: \|\theta\|_2 = 1\}$. It follows that,
  for a standard Gaussian $g$ in $\R^m$:
  \begin{equation}\label{eq:small-m}
  \ell^*(K) = \E \|g\|_{K^\circ} 
  \le \frac{\E\|g\|_2}{c_m(K^\circ)} 
  \le \frac{(\E\|g\|^2_2)^{1/2}}{c_m(K^\circ)}
  = \frac{\sqrt{m}}{c_m(K^\circ)}  \le \frac{4}{c_m(K^\circ)} .
  \end{equation}
  This establishes the lemma for $m \le 16$.

  For $m > 16$, we will use an induction argument. We will
  strengthen the induction hypothesis to:
  \begin{equation}\label{eq:induct-hyp}
  \ell^*(K) \le C(\log 2k) 
  \left(
    \sum_{i = 1}^{\lceil k/2 \rceil}{\frac{1}{\sqrt{i} \cdot c_{k-i+1}(K^\circ)}}
  \right).
  \end{equation}
 Assume the inequality
   holds for a sufficiently large absolute
  constant $C$ and all symmetric convex bodies $K \subseteq \R^{k}$,
  in any dimension $k < m$. We will show that the inequality then
  holds in dimension $k = m$ as well. The inequality
  \eqref{eq:small-m},  provides the base case  for the induction ($k \le 16$).

  We proceed with the inductive step. By
  Lemma \ref{lm:volnum-proj} there exists a subspace $E$ of dimension at
  least $\lfloor m/4 \rfloor$ such that:
  \[
  \ell^*(\Pi_E(K)) \le C_1(\log 2m) \frac{\sqrt{m}}{c_{2\lfloor m/4
      \rfloor + 1}(K^\circ)}.
  \]
  Observe that, since $2\lfloor m/4 \rfloor \le \lfloor m/2 \rfloor$,
  and $c_j(K^\circ)$ is monotone non-increasing in $j$, we have that
  $c_{2\lfloor m/4   \rfloor + 1}(K^\circ) \ge c_{m-i+1}(K^\circ)$ for
  any $i \le \lceil m/2 \rceil$. Moreover, 
  \[
  \sum_{i = \lceil 3m/8 \rceil + 1}^{\lceil m/2 \rceil}{\frac{1}{\sqrt{i}}} \ge \frac{1}{C_2} \sqrt{m},
  \]
  for all $m > 16$ and an absolute constant $C_2$. These observations
  together imply that 
  \begin{equation}\label{eq:volnum-proj}
  \ell^*(\Pi_E(K)) \le C_1(\log 2m)\frac{\sqrt{m}}{c_{2\lfloor m/4  \rfloor +   1}(K^\circ)} 
  \le C_1C_2(\log 2m) \sum_{i = \lceil 3m/8 \rceil + 1}^{\lceil m/2  \rceil}{\frac{1}{\sqrt{i}\cdot c_{m-i+1}(K^\circ)}}. 
  \end{equation}

  For a standard Gaussian $g \sim N(0,I_m)$, using the triangle
  inequality and the fact that $I_m = \Pi_E + \Pi_{E^\perp}$, we have:
  \begin{align*}
    \ell^*(K) &= \E \max_{x \in K} |\langle x, g\rangle|\\
    &= \E \max_{x \in K} \left|\langle \Pi_{E} x, g\rangle + \Pi_{E^\perp} x, g\rangle\right|\\ 
    &\le  \E \max_{x \in K} |\langle \Pi_{E} x,   g\rangle| + \E \max_{x \in K} |\langle \Pi_{E^\perp} x,   g\rangle|\\
    &= \ell^*(\Pi_{E}K) + \ell^*(\Pi_{E^\perp}K).
  \end{align*}
  Then, the inductive hypothesis follows by \eqref{eq:volnum-proj} and by the inductive hypothesis \eqref{eq:induct-hyp} applied
  to $\ell^*(\Pi_{E^\perp})$. This finishes the proof of the lemma.
  \junk{by \eqref{eq:volnum-proj}, and by the inductive hypothesis \eqref{eq:induct-hyp} applied
  to $\ell^*(\Pi_{E^\perp})$, we have:
  \[
  \ell^*(K)  \le 
  C(\log 2\lceil 3m/4\rceil) 
  \left(
    \sum_{i = 1}^{\lceil 3m/8 \rceil}{\frac{1}{\sqrt{i} \cdot c_{k-i+1}(K^\circ)}}
  \right)
  +
  C_1C_2(\log 2m) 
  \left(
  \sum_{i = \lceil 3m/8 \rceil + 1}^{\lceil m/2 \rceil}{\frac{1}{\sqrt{i}\cdot c_{m-i+1}(K^\circ)}}
  \right)
  \le 
  C(\log 2m) 
  \left(
    \sum_{i = 1}^{\lceil m/2 \rceil}{\frac{1}{\sqrt{i} \cdot c_{k-i+1}(K^\circ)}}
  \right),
  \]
  where we assumed that $C \ge C_1C_2$. This finishes the inductive
  step, and the proof of the lemma.}
\end{proof}

\begin{proof}[Proof of Theorem~\ref{thm:gen-lb}]
  As in the proof of \autoref{thm:ellpos}, it is sufficient to prove
  that:
  \begin{equation}\label{eq:gen-lb}
  \max_{k = 1}^m \frac{\sqrt{m - k + 1}}{c_k(K^\circ)} 
  = \Omega{\left(\frac{\ell^*(K)}{(\log{2m})^2} \right)}.
  \end{equation}
  But this inequality follows easily from Lemma \ref{lm:volnum} and the
  trivial case of H\"older's inequality:
  \begin{align*}
    \ell^*(K) 
    \le C(\log 2m) 
    \left(
      \sum_{i = 1}^m{\frac{1}{\sqrt{i} \cdot c_{m-i+1}(K^\circ)}}
    \right)
    &\le C(\log 2m)
    \left(
      \sum_{i = 1}^m{\frac{1}{i}}
    \right)
    \cdot
    \left(\max_{i = 1}^m \frac{\sqrt{i}}{c_{m-i+1}(K^\circ)} \right)\\
    &= O((\log{2m})^2)
    \cdot
    \left(\max_{i = 1}^m \frac{\sqrt{i}}{c_{m-i+1}(K^\circ)} \right).
  \end{align*}
  Then, the proof of the theorem follows from \eqref{eq:gen-lb}
  analogously to the proof of \autoref{thm:ellpos}.
\end{proof}

We now have everything in place to prove our main result for the mean
point problem. 

\begin{proof}[Proof of \autoref{thm:meanpt-main}]
  The upper bounds on sample complexity follow from
  \autoref{thm:gauss-mech}, \autoref{gaussupper}, 
  \autoref{thm:general-proj}, and \autoref{projupper}. The lower
  bounds follow from \autoref{thm:gen-lb}. The statement after
  ``moreover'' follows from \autoref{gaussupper} and
  \autoref{thm:gausslb}. 
\end{proof}

\junk{
\begin{theorem}\label{thm:meanpt-main}
  For any symmetric convex body $K \subseteq B_2^m$, and for all $\epz
  = O(1)$, $2^{-\Omega(n)} \leq \delta \leq 1/n^{1 +\Omega(1)}$, and
  $\alpha \leq 1/10$, the $(\epz, \delta)$-differentially private
  Projection Mechanism $\alg_{K}$ of Theorem~\ref{thm:general-proj}
  satisfies 
  \[
  \samp(K, \alg_{K}, \alpha) = O\left(\frac{(\log m)^2}{\alpha}\right) \scz(K,
  \alpha). 
  \]
  Moreover, $\scz(K, \alpha)$ satisfies
  \begin{align*}
    \scz(K,\alpha) &= O\left(\frac{\sigma(\epz,
        \delta)\ell^*(K)}{\alpha^2}\right);\\
    \scz(K,\alpha) &= \Omega\left(\frac{\sigma(\epz,
        \delta)\ell^*(K)}{\alpha (\log m)^2}\right).
  \end{align*}
\end{theorem}
\begin{proof}
  Follows immediately from
  Theorems~\ref{thm:general-proj}~and~\ref{thm:gen-lb}, and
  Corollary~\ref{projupper}.
\end{proof}

Theorem~\ref{thm:main} follows immediately from
Theorem~\ref{thm:meanpt-main}, Lemma~\ref{lm:mean-pt-equiv}, and the
observation that for any workload $\quer$ of $m$ queries with
sensitivity polytope $K$, and any sufficiently small $\epz$ and
$\delta$, $\scz(\quer,\alpha) = \Omega\left(
  \frac{\diam(K)}{\sqrt{m}\alpha}\right)$. 
}

\section{From Mean Point to Query Release}

All the bounds we proved so far were for the mean point problem. In
this section we show reductions between this problem, and the query
release problem, which allow us to translate our lower bounds to the
query release setting and prove Theorem~\ref{thm:main}. We will show that
the problem of approximating $\quer(\db)$ for a query workload $\quer$
under differential privacy is nearly equivalent to approximating the
mean point problem with universe $K' = \frac{1}{\sqrt{m}}K$, where $K$
is the sensitivity polytope of $\quer$. The main technical lemma follows.

\begin{lemma}\label{lm:mean-pt-equiv}
  Let $\quer$ be a workload of $m$ linear queries over the universe
  $\uni$ with sensitivity polytope $K$. Define $K' =
  \frac{1}{\sqrt{m}}K$. Then, we have the inequalities:
  \begin{align}
    \scz(\quer, \alpha) &\leq \scz(K', \alpha);\label{eq:mean-pt-equiv-1}\\
    \scz(K', \alpha) &\leq \max\left\{\samp_{2\epz, 2\delta}(\quer, \alpha/4), \frac{16\diam(K')}{\alpha^2}\right\}.\label{eq:mean-pt-equiv-2}
  \end{align}
  Moreover, we can use an $(\epz, \delta)$-differentially private
  algorithm $\alg'$ as a black box to get an $(\epz,
  \delta)$-differentially private algorithm $\alg$ such that
  $\samp(\quer, \alg, \alpha) = \samp(K', \alg', \alpha)$. $\alg$
  makes a single call to $\alg'$, and performs additional computation
  of worst-case complexity $O(mn)$, where $n$ is the size of the database.
\end{lemma}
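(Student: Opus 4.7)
I plan to prove each inequality with an explicit reduction. The bridge is the identity $\overline{\db'} = \frac{1}{\sqrt{m}}\quer(\db)$ for the elementwise embedding $\db = \{e_1,\ldots,e_n\} \mapsto \db' = \{\frac{1}{\sqrt{m}}\quer(\{e_i\})\}_{i=1}^n \subseteq K'$, which translates the average per-query $\ell_2$-error on $\quer$ into the unnormalized $\ell_2$-error for the mean point problem on $K'$. For \eqref{eq:mean-pt-equiv-1} together with the ``moreover'' statement, given an $(\epz,\delta)$-differentially private $\alg'$ with $\samp(K',\alg',\alpha) = n$, define $\alg(\db) = \sqrt{m}\cdot\alg'(\db')$, where $\db'$ is the embedding above (computable in time $O(mn)$). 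The map $\db \mapsto \db'$ is elementwise, so it preserves the neighboring relation, and $\alg$ inherits $(\epz,\delta)$-DP from $\alg'$ by post-processing. Combined with $\frac{1}{m}\|\alg(\db)-\quer(\db)\|_2^2 = \|\alg'(\db')-\overline{\db'}\|_2^2$, this gives $\samp(\quer,\alg,\alpha) = \samp(K',\alg',\alpha)$ and hence $\scz(\quer,\alpha) \le \scz(K',\alpha)$.

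For the harder inequality \eqref{eq:mean-pt-equiv-2} I would go the other way via randomized vertex sampling. The extreme points of $K'$ are $\frac{\sigma}{\sqrt{m}}\quer(\{e\})$ for $\sigma \in \{\pm 1\}$ and $e \in \uni$. Given $\db' = \{x_1,\ldots,x_n\} \subseteq K'$, I would write each $x_i$ as a convex combination of such vertices and independently sample $z_i = \frac{\sigma_i}{\sqrt{m}}\quer(\{e_i\})$ from the resulting distribution, so that $\E z_i = x_i$ and $\|z_i\|_2 \le \diam(K')$. Partitioning by sign into $\db_\pm = \{e_i : \sigma_i = \pm 1\}$ of sizes $n_\pm$, and applying a query release algorithm $\alg$ independently to $\db_+$ and $\db_-$ to obtain $\tilde q_+,\tilde q_-$, I would output
\[
\tilde y = \frac{1}{n\sqrt{m}}\bigl(n_+\tilde q_+ - n_-\tilde q_-\bigr),
\]
which is an unbiased estimator (conditional on the sampling) of $\bar z = \frac{1}{n}\sum_i z_i$.

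For privacy, a single-element change in $\db'$ affects only one $z_i$, which under a natural coupling alters at most one element in each of $\db_+, \db_-$; conditional on the sampling, \autoref{lm:composition} gives $(2\epz,2\delta)$-DP for the pair $(\tilde q_+, \tilde q_-)$, and this bound is preserved under the outer expectation over the sampling, matching the $\samp_{2\epz,2\delta}$ privacy budget in the lemma. For error, the $L_2$-triangle inequality gives
\[
\bigl(\E\|\tilde y - \overline{\db'}\|_2^2\bigr)^{1/2} \le \bigl(\E\|\tilde y - \bar z\|_2^2\bigr)^{1/2} + \bigl(\E\|\bar z - \overline{\db'}\|_2^2\bigr)^{1/2}.
\]
The first term is at most $\alpha/4$ after summing the per-side $(\alpha/4)\sqrt{m}$ guarantees of $\alg$ with the weights $n_\pm/(n\sqrt{m})$; the second is at most $\diam(K')/\sqrt{n}$ by independence of the $z_i$ and the fact that each $z_i$ lies in $K'$. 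Requiring the sum to be at most $\alpha$ yields $n \ge 16\diam(K')/\alpha^2$, using $\diam(K') \le 1$ so $\diam(K')^2 \le \diam(K')$.

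The main obstacle I anticipate is the weighting step when the split is highly unbalanced and $n_\pm$ falls below the sample complexity threshold of $\alg$: here one needs either a tail bound on the random split or, more robustly, a clipping of the outputs of $\alg$ to $[0,1]^m$ so that $\|\tilde q_\pm - \quer(\db_\pm)\|_2 \le 2\sqrt{m}$ holds deterministically, letting the weight $n_\pm/(n\sqrt{m})$ absorb the worst-case factor whenever $n_\pm$ is too small. Once this is handled, the error decomposition yields the claimed upper bound.
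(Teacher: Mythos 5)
Your proposal takes essentially the same route as the paper's proof, with one bookkeeping difference and one unfounded worry. For \eqref{eq:mean-pt-equiv-1} and the ``moreover'' clause the reduction is identical: map each $\row\in\db$ to $\frac{1}{\sqrt m}\quer(\{\row\})\in K'$, call $\alg'$, rescale by $\sqrt m$, and invoke post-processing for privacy. For \eqref{eq:mean-pt-equiv-2} you have the same key idea as the paper: a Carath\'eodory decomposition of each $x_i$ into signed vertex masses, independent sampling of a vertex $z_i$ with $\E z_i=x_i$, a split by sign into $\db_\pm$, two black-box calls to $\alg$, privacy by coupling a change in a single $x_i$ to a change in at most one element of $\db_+$ and one of $\db_-$ and then applying Lemma~\ref{lm:composition}, and an $L_2$ triangle-inequality decomposition of the error into an algorithmic term plus a sampling-variance term of order $\diam(K')/\sqrt n$. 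The bookkeeping difference is your weighted recombination $\tilde y=\frac{1}{n\sqrt m}\bigl(n_+\tilde q_+-n_-\tilde q_-\bigr)$ versus the paper's unweighted $\frac{1}{\sqrt m}\bigl(\alg(\db_+)-\alg(\db_-)\bigr)$; under the paper's stated normalization $q(\db)=\frac{1}{|\db|}\sum_{\row\in\db}q(\row)$ your weighted form is the correct unbiased estimator, whereas the identity $\frac{1}{\sqrt m}(\quer(\db_+)-\quer(\db_-))=\frac1n\sum_i y_i$ used in the paper only holds if one renormalizes $\quer(\db_\pm)$ by $n$ rather than by $n_\pm$, so your version is the cleaner one. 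The obstacle you anticipate about unbalanced splits, however, is not there: $\err(\quer,\alg,n)$ is by definition a supremum over \emph{all} databases of size at most $n$, so once $n\ge\samp_{2\epz,2\delta}(\quer,\alpha/4)$ the per-call error bound applies to $\db_+$ and $\db_-$ whatever the random sizes $n_\pm\le n$; moreover, the weights $n_\pm/n$ sum to one, so the algorithmic term stays at $\alpha/4$ by the triangle inequality. No clipping or tail bound is needed, and adding one would only introduce an extra error term. Your direct variance computation via independence of the $z_i$ plays the same role as the paper's symmetrization (Lemma~\ref{lm:sym}); it is marginally sharper for $\ell_2$, while the symmetrization route works for arbitrary norms.
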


\begin{proof}
  Note that inequality~\eqref{eq:mean-pt-equiv-1} is implied by the
  statement after ``moreover''. We prove this claim first. The
  algorithm $\alg$ uses $\db$ to form a database $\db'$ of elements
  from $K'$ which contains, for each $\row \in \db$, a copy of
  $\frac{1}{\sqrt{m}} \quer(\{\row\})$. This transformation clearly
  takes time $O(mn)$. Then $\alg$ simply outputs $\sqrt{m}
  \alg'(\db')$. Because any two neighboring databases $\db_1$, $\db_2$
  drawn from $\uni$ are transformed into neighboring databases
  $\db'_1$, $\db'_2$ of points from $K'$, and $\alg'$ was assumed to
  be $(\epz, \delta)$-differentially private, we have that $\alg$ is
  $(\epz, \delta)$-differentially private as well. Also, since
  $\avgdb' = \frac{1}{\sqrt{m}} \quer(\db)$, $\err(\quer,\alg,n) =
  \err(K',\alg,n)$ by definition, which implies that $\samp(\quer,
  \alg, \alpha) = \samp(K', \alg', \alpha)$.

  The second inequality~\eqref{eq:mean-pt-equiv-2} is more
  challenging. We will show that for any $(\epz,
  \delta)$-differentially private algorithm $\alg$ there exists an
  $(2\epz, 2\delta)$-differentially private algorithm $\alg'$ such
  that:
  \[
  \err(K', \alg', n) \le 2\err(\quer, \alg, n) + \frac{2\diam(K')}{\sqrt{n}}. 
  \]
  A simple calculation then shows that this implies the desired
  inequality. In constructing $\alg'$ we will run $\alg$ on a database
  formed by sampling from the vertices of $K'$ (which correspond to
  universe elements) so that the true query answers are preserved in
  expectation. The analysis uses symmetrization.
  
  Let $\alg'$ be given the input $\db' = \{x_1, \ldots, x_n\}$, which
  is a multiset of points from $K'$.  $\alg'$ will randomly construct
  two databases $\db_+$ and $\db_-$ (i.e.~multisets of elements from
  $\uni$), and output $\frac{1}{\sqrt{m}}(\alg(\db_+) - \alg(\db_-))$.

  Next we describe how $\db_+$ and $\db_-$ are sampled. Observe that,
  for each $i$, $x_i$ is a convex combination of vertices of
  $K'$. Therefore, by Caratheodory's theorem, there exist universe
  elements $\row_{i,1}, \ldots, \row_{i, k_i} \in \uni$, where $k_i
  \le m+1$, such that\junk{ the vectors $\quer(\row_{i,1}), \ldots,
    \quer(\row_{i,k_i})$ are affinely independent, and} $x_i =
  \frac{1}{\sqrt{m}}\sum_{j = 1}^{k_i}{\alpha_{i,j}\quer(\row_{i,j})}$
  for some $\alpha_{i,1}, \ldots, \alpha_{i,k_i} \in [-1,1]$
  satisfying $\sum_{j=1}^{k_i}{|\alpha_i|} = 1$. We would like to fix
  a unique way to pick the $\row_{i,j}$ and $\alpha_{i,j}$ for each
  $x_i$, and indeed for any point in $K'$. To this end, fix an
  arbitrary order on $\uni$, and for each $x_i$ choose a minimal
  sequence $\row_{i,1}, \ldots, \row_{i,k_i}$ that satisfies the
  conditions above and which comes earliest in the lexicographic order
  induced by the order on $\uni$.  Once we have chosen the
  $\row_{i,j}$ and $\alpha_{i,j}$, we construct $\db_+$ and $\db_-$
  using the following sampling procedure: for each $i \in [n]$, we
  independently sample $j_i$ from $[k_i]$, s.t.~$\Pr[j_i
  = j] = |\alpha_{i,j}|$; we add $\row_{i,j_i}$ to $\db_+$ if
  $\alpha_{i,j_i} \ge 0$, and to $\db_-$ otherwise. Then, as mentioned
  above, $\alg'$ outputs $\frac{1}{\sqrt{m}}(\alg(\db_+) -
  \alg(\db_-))$.

  First, we show that $\alg'$ is $(2\epz, 2\delta)$-differentially
  private. Let $\db_1' = \{x_1, \ldots, x_n\}$ and $\db_2' = \{x_1,
  \ldots, x_n, x_{n+1}\}$ be two neighboring databases and let
  $\{\row_{i,j}\}$ and $\{\alpha_{i,j}\}$ be defined as above, where
  $1 \le i \le n+1$ and for each $i$, $1 \le j \le k_i$. Let
  $\db_+^{(1)}, \db_-^{(1)}$ be the random databases sampled by
  $\alg'$ on input $\db'_1$, and let $\db_+^{(2)}, \db_-^{(2)}$ be the
  random databases sampled on input $\db'_2$. Let us couple
  $\db_+^{(1)}$ and $\db_+^{(2)}$ so that they have symmetric
  difference at most $1$ and $\db_+^{(2)}$ is a superset of
  $\db_+^{(1)}$. We can achieve this by sampling $j_i \in [k_i]$ as described above for $1 \le i \le n$ and adding
  $\row_{i,j_i}$ to both $\db_+^{(1)}$ and $\db_+^{(2)}$ if
  $\alpha_{i,j_i} \ge 0$ (or to neither otherwise) and sampling
  $j_{n+1} \in [k_{n+1}]$ and adding $\row_{n+1, j_{n+1}}$
  to $\db_+^{(2)}$ if $\alpha_{n+1, j_{n+1}} \ge 0$. Because, with this
  coupling, $\db_+^{(1)}$ and $\db_+^{(2)}$ are always neighboring,
  by the $(\epz,\delta)$-differential privacy guarantee for $\alg$
  we have that, for any measurable $S$ in the range of $\alg$:
  \begin{align*}
  \Pr[\alg(\db_+^{(1)}) \in S] &\le e^{\epz}  \Pr[\alg(\db_+^{(2)})
  \in S]  + \delta,\\
  \Pr[\alg(\db_+^{(1)}) \in S] &\ge e^{-\epz}  \Pr[\alg(\db_+^{(2)})
  \in S]  -\delta,
  \end{align*}
 where the probabilities are taken over the random choices of
 $\db_+^{(1)}$ and $\db_+^{(2)}$, and over the randomness of
 $\alg$. An analogous argument holds for $\db_-^{(1)}$ and
 $\db_-^{(2)}$. Therefore, each of the two calls of $\alg$ made by
 $\alg'$, composed with the sampling procedure that produces the input
 to $\alg$, is $(\epz, \delta)$-differentially private \emph{with
   respect to the input to $\alg'$}.  The privacy claim follows by
 composition (Lemma~\ref{lm:composition}).

 Finally we analyze the error of $\alg'$. By Minkowski's inequality, we have:
 \begin{align*}
   (\E \|\alg'(\db') - \avgdb'\|_2^2)^{1/2}
   &\le \Big(\E \Big\|\alg(\db_+) - \frac{1}{\sqrt{m}}\quer(\db_+)\Big\|_2^2\Big)^{1/2} 
    + \Big(\E\Big\|\alg(\db_-) - \frac{1}{\sqrt{m}}\quer(\db_-)\Big\|_2^2\Big)^{1/2}\\
   &\hspace{4.5em}+ \Big(\E\Big\|\frac{1}{\sqrt{m}}(\quer(\db_+) -
   \quer(\db_-)) - \avgdb'\Big\|_2^2\Big)^{1/2}\\
   &\le 2\err(\quer, \alg, n) + \Big(\E\Big\|\frac{1}{\sqrt{m}}(\quer(\db_+) -
   \quer(\db_-)) - \avgdb'\Big\|_2^2\Big)^{1/2}.
 \end{align*}
 We proceed to bound the second term on the right hand side. Observe
 that, by the way we sample $\db_+$ and $\db_-$, linearity of
 expectation, and the linearity of $\quer(\cdot)$, $\avgdb' = \E
 \frac{1}{\sqrt{m}}(\quer(\db_+) - \quer(\db_-))$. Let, for each $i
 \in [n]$, $j_i \in [k_i]$ be sampled as above, and let $y_i =
 \frac{1}{\sqrt{m}}\sign(\alpha_{i,j})\quer(e_{i,j_i})$. Then
 $\frac{1}{\sqrt{m}}(\quer(\db_+) - \quer(\db_-)) = \frac1n
 \sum_{i=1}^n{y_i}$, so $\E \frac1n \sum_{i=1}^n{y_i} = \avgdb'$. The
 vectors $y_1, \ldots, y_n$ are independent and all belong to
 $K'\subseteq R\cdot B_2^m$, where $R = \diam(K')$. Let $\xi_1,
 \ldots, \xi_n$ be Rademacher random variables, independent from
 everything else; by
 Lemma \ref{lm:sym} and the parallelogram identity:
 \begin{align*}
   \E\Big\|\frac{1}{\sqrt{m}}(\quer(\db_+) -  \quer(\db_-)) -
   \avgdb'\Big\|_2^2
   &= \E\Big\|\frac1n \sum_{i=1}^n{y_i} - \E\frac1n \sum_{i=1}^n{y_i}\Big\|_2^2\\
   &\le \frac{4}{n^2} \E\Big\|\sum_{i=1}^n{\xi_iy_i}\Big\|_2^2
   = \frac{4}{n^2}\sum_{i=1}^n{\|y_i\|_2^2} \le \frac{4R^2}{n}.
 \end{align*}
 With this, we have established the desired bound on
 $\err(K',\alg',n)$, and, therefore, the lemma.
\end{proof}

We will also use a simple lemma that relates the sample complexity at
an error level $\alpha$ to the sample complexity at a lower error
level $\alpha' < \alpha$. The proof is a padding argument and
can be found in~\cite{SteinkeU15}. 

\begin{lemma}\label{lm:padding}
  For any workload $\quer$, any $0 < \alpha' < \alpha < 1$, and any
  privacy parameters $\epz, \delta$, we have
  \[
  \scz(\quer,\alpha') = \Omega\left(\frac{\alpha}{C\alpha'}\right)\cdot \scz(\quer,\alpha),
  \]
  for an absolute constant $C$.
\end{lemma}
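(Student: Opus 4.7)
The plan is to prove this via the standard padding argument, adapted to the ``at most $n$'' formulation of sample complexity used in this paper. Let $n' = \scz(\quer, \alpha')$ and let $\alg'$ be an $(\epz, \delta)$-differentially private mechanism realizing this sample complexity, so $\err(\quer, \alg', n') \le \alpha'$. The goal is to build from $\alg'$ an $(\epz,\delta)$-differentially private mechanism $\alg$ whose sample complexity at error $\alpha$ satisfies $\samp(\quer, \alg, \alpha) = O((\alpha'/\alpha) n')$; rearranging yields $\scz(\quer,\alpha') = \Omega(\alpha/\alpha')\cdot\scz(\quer,\alpha)$.

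Fix any element $\row_0 \in \uni$ and set $n = \lceil C\alpha' n'/\alpha \rceil$ for a suitable absolute constant $C$. On input $\db$ of size $n$, the mechanism $\alg$ constructs the padded database $\db_{\mathrm{pad}}$ of size $n'$ by appending $n'-n$ copies of $\row_0$, runs $\alg'$ on $\db_{\mathrm{pad}}$ to obtain $\hat y$, and outputs
\[
\tilde y \;=\; \tfrac{n'}{n}\bigl(\hat y - \tfrac{n'-n}{n'}\,\quer(\{\row_0\})\bigr).
\]
The padding map sends neighboring inputs to neighboring inputs (the added records are identical and depend only on the public quantity $n' - n$), so $\alg$ inherits $(\epz,\delta)$-differential privacy from $\alg'$ by a simple post-processing/composition argument. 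For the error analysis, linearity of $\quer(\cdot)$ gives $\quer(\db_{\mathrm{pad}}) = \tfrac{n}{n'}\quer(\db) + \tfrac{n'-n}{n'}\quer(\{\row_0\})$, hence $\tilde y - \quer(\db) = \tfrac{n'}{n}(\hat y - \quer(\db_{\mathrm{pad}}))$. Taking expected squared $\ell_2$-norm, averaging over queries, and using $\err(\quer,\alg',n') \le \alpha'$ yields $\err(\quer,\alg,n) \le \tfrac{n'}{n}\alpha' \le \alpha/C$, which is at most $\alpha$ for $C \ge 1$.

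The one place a little care is needed is the fact that $\err(\quer,\alg,n)$ is defined as a supremum over \emph{all} databases of size at most $n$, not only of size exactly $n$. This is the main obstacle: databases $\db$ of size $n_0 < n$, when padded to size $n'$, give rescaling factor $n'/n_0$ which can be large. I would handle this by a two-case construction: if $n_0 \ge n/2$, apply the padding/rescaling above (absorbing the factor of $2$ into $C$); if $n_0 < n/2$, we can replace the input by the padded database of size exactly $n$ (appending $n - n_0$ additional copies of $\row_0$) before running the mechanism above. A direct computation, using that query answers lie in $[0,1]^m$ so $\|\quer(\{\row_0\}) - \quer(\db)\|_2/\sqrt m \le 1$, bounds the additional bias introduced by this pre-padding by a constant factor of $\alpha$, which is again absorbed into $C$. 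Combining the two cases gives $\samp(\quer,\alg,\alpha) \le n$, completing the argument; a detailed accounting of constants can be found in~\cite{SteinkeU15}.
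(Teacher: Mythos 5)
The paper itself does not prove this lemma; it only remarks that ``the proof is a padding argument'' and cites~\cite{SteinkeU15}, so there is no in-paper argument to compare yours against. Your core construction --- pad a size-$n$ input up to size $n'=\scz(\quer,\alpha')$ with copies of a fixed $\row_0$, run the optimal mechanism $\alg'$ for error $\alpha'$, and then debias and rescale --- is the right idea, and the computation showing $\tilde y - \quer(\db) = \tfrac{n'}{n}(\hat y - \quer(\db_{\mathrm{pad}}))$, hence error $\le \tfrac{n'}{n}\alpha' \le \alpha/C$ for a size-$n$ input, is correct.

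The gap is in the patch you offer for the ``databases of size at most $n$'' issue. For a database $\db$ of size $n_0 < n/2$, you pre-pad it to size $n$ with $n-n_0$ copies of $\row_0$ and assert the resulting bias is $O(\alpha)$. It is not: $\quer(\db_{\mathrm{pre}}) - \quer(\db) = \tfrac{n-n_0}{n}\bigl(\quer(\{\row_0\}) - \quer(\db)\bigr)$, whose normalized $\ell_2$ norm is bounded only by $\tfrac{n-n_0}{n}\le 1$, a constant; it is not $O(\alpha)$ when $n_0$ is small (e.g.\ $n_0=1$). And you cannot undo this bias at the end by multiplying by $n/n_0$ without blowing up the noise by the same $n/n_0$ factor. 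So the second case of your two-case construction does not give error $O(\alpha)$, and the bound $\samp(\quer,\alg,\alpha)\le n$ is not established.

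In fact this obstacle is not fixable by padding alone if the ``at most $n$'' wording is taken literally: any $(\epz,\delta)$-private $\alg$ with $\err(\quer,\alg,n)\le\alpha$ would have to answer accurately on all singletons $\{\row\}$, but two singletons are at symmetric-difference distance $2$, so their output distributions under $\alg$ are statistically close while $\quer(\{\row_1\})$ and $\quer(\{\row_2\})$ can differ by $\Omega(1)$ in normalized $\ell_2$; hence for small constant $\alpha$ no such $\alg$ exists and $\scz(\quer,\alpha)=\infty$. The consistent reading (and the one used in~\cite{SteinkeU15}) is that $\err(\quer,\alg,n)$ ranges over databases of size \emph{exactly} $n$. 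Under that reading your first case alone suffices and the two-case patch should simply be dropped. One small additional wrinkle worth acknowledging: padding every input by a fixed $n'-n$ copies of $\row_0$ does preserve neighboring-ness exactly, but then the padded size is $n_0 + n' - n$ rather than $n'$, so your debiasing coefficients must use the true input size $n_0$ (a public quantity) rather than $n$; as written, the formula $\tilde y = \tfrac{n'}{n}(\hat y - \tfrac{n'-n}{n'}\quer(\{\row_0\}))$ is only correct when $n_0=n$.
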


We are now ready to finish the proof of our main result for query
release.

\begin{proof}[Proof of \autoref{thm:main}]
  The upper bounds on sample complexity follow from
  the upper bounds in \autoref{thm:meanpt-main} together with
  Lemma \ref{lm:mean-pt-equiv}. 

  Denote $K' = \frac{1}{\sqrt{m}} K$, and let $\alpha_0 =
  \frac{\ell^*(K')}{C\sqrt{m}(\log 2m)^2} =
  \frac{\ell^*(K)}{Cm(\log 2m)^2}$ be the smallest error parameter for
  which \autoref{thm:gen-lb} holds. Then, by
  \autoref{thm:gen-lb} and Lemma \ref{lm:mean-pt-equiv}:
  \[
  \max\left\{\scz(\quer, \alpha_0), \frac{\diam(K)}{\sqrt{m}\alpha_0^2}
  \right\}  
  = \Omega\left(\frac{\sigma(\epz,\delta)\ell^*(K)}{\sqrt{m}(\log 2m)^2\alpha_0} \right).
  \]
  It is easy to show that $\scz(\quer, \alpha_0) =
  \Omega(\diam(K)/(\alpha_0\sqrt{m}))$ for all sufficiently small
  $\epz$  and $\delta$.  Therefore, we have: 
  \[
  \scz(\quer, \alpha_0) = 
  \Omega\left(\frac{\sigma(\epz,\delta)\ell^*(K)}{\sqrt{m}(\log 2m)^2} \right).
  \]
  By Lemma \ref{lm:padding}, we get that for any $\alpha \le \alpha_0$
  the sample complexity is at least:
  \[
  \scz(\quer, \alpha) = 
  \Omega\left(\frac{\sigma(\epz,\delta)\ell^*(K)\alpha_0}{\sqrt{m}(\log  2m)^2\alpha} \right)
  =     \Omega\left(\frac{\sigma(\epz,\delta)\ell^*(K)^2}{m^{3/2}(\log  2m)^4\alpha} \right).
  \]

  An analogous proof, with $\alpha_0 = 1/C$ set to the smallest error
  parameter for which \autoref{thm:gausslb} holds, establishes the
  statement after ``moreover''.
\end{proof}

\bibliographystyle{alpha}
\bibliography{Privacy,sasho-papers}

\def\cprime{$'$}
\begin{thebibliography}{AAGM15}

\bibitem[AAGM15]{AGM-book}
Shiri Artstein-Avidan, Apostolos Giannopoulos, and Vitali~D. Milman.
\newblock {\em Asymptotic geometric analysis. {P}art {I}}, volume 202 of {\em
  Mathematical Surveys and Monographs}.
\newblock American Mathematical Society, Providence, RI, 2015.

\bibitem[BDKT12]{12vollb}
Aditya Bhaskara, Daniel Dadush, Ravishankar Krishnaswamy, and Kunal Talwar.
\newblock Unconditional differentially private mechanisms for linear queries.
\newblock In {\em Proceedings of the 44th symposium on Theory of Computing},
  STOC '12, pages 1269--1284, New York, NY, USA, 2012. ACM.

\bibitem[BUV14]{BunUV13}
Mark Bun, Jonathan Ullman, and Salil Vadhan.
\newblock Fingerprinting codes and the price of approximate differential
  privacy.
\newblock In {\em Proceedings of the 46th Annual ACM Symposium on Theory of
  Computing}, pages 1--10. ACM, 2014.

\bibitem[DMNS06]{DMNS}
Cynthia Dwork, Frank McSherry, Kobbi Nissim, and Adam Smith.
\newblock Calibrating noise to sensitivity in private data analysis.
\newblock In {\em Theory of Cryptography Conference}, pages 265--284. Springer,
  2006.

\bibitem[DN03]{DN}
Irit Dinur and Kobbi Nissim.
\newblock Revealing information while preserving privacy.
\newblock In {\em Proceedings of the 22nd ACM SIGMOD-SIGACT-SIGART Symposium on
  Principles of Database Systems}, pages 202--210. ACM, 2003.

\bibitem[DN04]{DworkN04}
Cynthia Dwork and Kobbi Nissim.
\newblock Privacy-preserving datamining on vertically partitioned databases.
\newblock In {\em Annual International Cryptology Conference}, pages 528--544.
  Springer, 2004.

\bibitem[DNT14]{conjunctions}
Cynthia Dwork, Aleksandar Nikolov, and Kunal Talwar.
\newblock Using convex relaxations for efficiently and privately releasing
  marginals.
\newblock In {\em 30th Annual Symposium on Computational Geometry, SOCG'14,
  Kyoto, Japan, June 08 - 11, 2014}, page 261. {ACM}, 2014.

\bibitem[FTJ79]{FigielTJ79}
T.~Figiel and Nicole Tomczak-Jaegermann.
\newblock Projections onto {H}ilbertian subspaces of {B}anach spaces.
\newblock {\em Israel J. Math.}, 33(2):155--171, 1979.

\bibitem[HT10]{10vollb}
Moritz Hardt and Kunal Talwar.
\newblock On the geometry of differential privacy.
\newblock In {\em Proceedings of the 42nd ACM Symposium on Theory of
  Computing}, STOC '10, pages 705--714, New York, NY, USA, 2010. ACM.

\bibitem[Kea98]{Kearns-SQ}
Michael Kearns.
\newblock Efficient noise-tolerant learning from statistical queries.
\newblock {\em J. ACM}, 45(6):983--1006, 1998.

\bibitem[Mat02]{Matousek-DGeom}
Ji{\v{r}}{\'{\i}} Matou{\v{s}}ek.
\newblock {\em Lectures on discrete geometry}, volume 212 of {\em Graduate
  Texts in Mathematics}.
\newblock Springer-Verlag, New York, 2002.

\bibitem[Mil71]{Milman71}
V.~D. Milman.
\newblock A new proof of {A}. {D}voretzky's theorem on cross-sections of convex
  bodies.
\newblock {\em Funkcional. Anal. i Prilo\v zen.}, 5(4):28--37, 1971.

\bibitem[MP87]{MilmanPisier87}
V.~D. Milman and G.~Pisier.
\newblock Gaussian processes and mixed volumes.
\newblock {\em Ann. Probab.}, 15(1):292--304, 1987.

\bibitem[Nik15]{smalldb}
Aleksandar Nikolov.
\newblock An improved private mechanism for small databases.
\newblock In {\em Automata, Languages, and Programming - 42nd International
  Colloquium, {ICALP} 2015, Kyoto, Japan, July 6-10, 2015, Proceedings, Part
  {I}}, volume 9134 of {\em Lecture Notes in Computer Science}, pages
  1010--1021. Springer, 2015.

\bibitem[NTZ13]{NTZ}
Aleksandar Nikolov, Kunal Talwar, and Li~Zhang.
\newblock The geometry of differential privacy: the sparse and approximate
  cases.
\newblock In {\em Symposium on Theory of Computing Conference, STOC'13, Palo
  Alto, CA, USA, June 1-4, 2013}, pages 351--360. {ACM}, 2013.

\bibitem[Pin85]{Pinkus-widths}
Allan Pinkus.
\newblock {\em {$n$}-widths in approximation theory}, volume~7 of {\em
  Ergebnisse der Mathematik und ihrer Grenzgebiete (3) [Results in Mathematics
  and Related Areas (3)]}.
\newblock Springer-Verlag, Berlin, 1985.

\bibitem[Pis80]{Pisier-Kconv}
G.~Pisier.
\newblock Sur les espaces de {B}anach {$K$}-convexes.
\newblock In {\em Seminar on {F}unctional {A}nalysis, 1979--1980 ({F}rench)},
  pages Exp. No. 11, 15. \'Ecole Polytech., Palaiseau, 1980.

\bibitem[Pis89]{Pisier-book}
Gilles Pisier.
\newblock {\em The volume of convex bodies and {B}anach space geometry},
  volume~94 of {\em Cambridge Tracts in Mathematics}.
\newblock Cambridge University Press, Cambridge, 1989.

\bibitem[PTJ86]{PajorTJ86}
Alain Pajor and Nicole Tomczak-Jaegermann.
\newblock Subspaces of small codimension of finite-dimensional {B}anach spaces.
\newblock {\em Proc. Amer. Math. Soc.}, 97(4):637--642, 1986.

\bibitem[SU15]{SteinkeU15}
Thomas Steinke and Jonathan Ullman.
\newblock Between pure and approximate differential privacy.
\newblock {\em CoRR}, abs/1501.06095, 2015.

\bibitem[Ver09]{Ver-notes}
R.~Vershynin.
\newblock Lectures in geometric functional analysis.
\newblock http://www-personal.umich.edu/~romanv/papers/GFA-book/GFA-book.pdf,
  2009.

\end{thebibliography}

\end{document}